\documentclass[a4paper,11pt]{article}

\usepackage[utf8]{inputenc}
\usepackage[english]{babel}
\usepackage{amsfonts}
\usepackage{amsmath}
\usepackage{amssymb}
\usepackage{amsthm}
\usepackage{yfonts}
\usepackage{indentfirst}
\usepackage{braket}
\usepackage{graphicx}
\usepackage{xcolor}
\usepackage{enumitem}
\usepackage[hidelinks]{hyperref}
\usepackage{indentfirst}
\usepackage{subfiles}
\usepackage{tikz-cd}
\usepackage{scalerel}
\usepackage{mathtools}
\usepackage{authblk}

\setlist[enumerate,1]{label={(\roman*)}}

\numberwithin{equation}{section}
\renewcommand{\phi}{\varphi}
\renewcommand{\theta}{\vartheta}
\renewcommand{\epsilon}{\varepsilon}

\def\dd{\mathrm{d}}
\def\id{\operatorfont{id}}

\def\tr{\operatorname{Tr}}
\def\dim{\operatorname{dim}}
\def\ran{\operatorname{ran}}

\def\supp{\operatorname{supp}}

\def\hilb{\mathcal{H}}

\def\integers{\mathbb{Z}}
\def\egy{1}
\def\geo{\mathrm{geo}}

\newcommand{\weightmtxfn}[1]{\mathbf{w}_{#1}}
\newcommand{\mtxfn}[1]{\mathbf{#1}}
\newcommand{\symfn}[1]{\hat{#1}}
\newcommand{\mtxsymfn}[1]{\hat{\mathbf{#1}}}

\newcommand{\inner}[2]{\Braket{#1|#2}}

\DeclarePairedDelimiter{\norm}{\lVert}{\rVert}

\newcommand{\fock}[1]{\Gamma\!\left(#1\right)}
\newcommand{\car}[1]{\mathrm{CAR}\!\left(#1\right)}

\newcommand{\fockann}[1]{a(#1)}

\def\A{\mathcal{A}}
\def\B{\mathcal{B}}

\def\cD{\mathcal{D}}

\def\F{\mathcal{F}}

\def\M{\mathcal{M}}
\def\T{\mathcal{T}}
\def\P{\mathcal{P}}
\def\S{\mathcal{S}}
\def\X{\mathcal{X}}
\def\bN{\mathbb{N}}
\def\bZ{\mathbb{Z}}
\def\bR{\mathbb{R}}
\def\bC{\mathbb{C}}
\def\bP{\mathbb{P}}
\def\bT{\mathbb{T}}

\def\bz{\left(}
\def\jz{\right)}
\def\hil{\mathcal{H}}
\def\kil{\mathcal{K}}

\def\what{\widehat}

\def\inv{^{-1}}

\def\valt{\cdot}
\def\sa{\mathrm{sa}}

\def\nn{\nonumber}

\def\ep{\varepsilon}
\def\p{_{\ge 0}}
\def\pp{_{>0}}
\def\pne{_{\gneq 0}}
\def\hs{\mathrm{hs}}
\def\tp{\gamma}
\def\op{\mathrm{op}}

\def\n{[n]^*}
\def\divv{\Delta}
\def\qf{\mathrm{qf}}
\def\fdd{^{[1]}}
\def\povm{\mathrm{POVM}}
\def\meas{\mathrm{meas}}

\def\reg{\mathrm{reg}}
\def\dli{\underline{\mathrm{d}}}
\def\dls{\overline{\mathrm{d}}}
\def\dl{\mathrm{d}}
\def\scli{\underline{\mathrm{sc}}}
\def\scs{\overline{\mathrm{sc}}}
\def\sc{\mathrm{sc}}

\newcommand{\ds}{\mbox{ }\mbox{ }}
\newcommand{\ki}{\emph}

\newcommand{\oll}[1]{\overline{#1}}
\newcommand{\ul}[1]{\underline{#1}}

\newcommand{\diad}[2]{|#1\rangle\langle#2|}

\newcommand{\pr}[1]{|#1\rangle\langle #1|}
\newcommand{\canb}[1]{\egy_{\{#1\}}}

\newcommand{\bnorm}[1]{\big\|#1\big\|}
\newcommand{\Bnorm}[1]{\Big\|#1\Big\|}
\newcommand{\opnorm}[1]{\|#1\|_{\infty}}
\newcommand{\W}[1]{W_{#1}}
\newcommand{\fockop}[1]{\Gamma\!\bz #1\jz}
\newcommand{\anticomm}[2]{\left\{#1,#2\right\}}
\newcommand{\fv}{\hat}
\newcommand{\vecc}[1]{\underline{#1}}

\newcommand{\hsd}[2]{\redi_{#1,\hs}}
\newcommand{\hsq}[2]{\req_{#1,\hs}}

\newcommand\err{\epsilon}

\newcommand{\errm}[4]{\epsilon_{\mathrm{mix},#1}^{#2}(#3\|#4)}

\newcommand{\sequence}[1]{\vec{#1}}
\newcommand{\floor}[1]{\left\lfloor #1\right\rfloor}
\newcommand{\ceil}[1]{\left\lceil #1\right\rceil}

\def\d{[d]^*}

\DeclareMathOperator{\Tr}{Tr}

\DeclareMathOperator{\spann}{span}

\DeclareMathOperator*{\medoplus}{\scalerel*{\oplus}{\textstyle\sum}}
\DeclareMathOperator{\ft}{\mathcal{F}}
\DeclareMathOperator{\spec}{spec}

\DeclareMathOperator{\req}{\mathcal{Q}}
\DeclareMathOperator{\red}{D}
\DeclareMathOperator{\redi}{\mathcal{D}}

\DeclareMathOperator{\ootimes}{\otimes\ldots\otimes}

\DeclareMathOperator{\conv}{conv}

\DeclareMathOperator{\essup}{ess \, sup}
\DeclareMathOperator{\dom}{dom}

\theoremstyle{definition}
\newtheorem{definition}{Definition}[section]

\theoremstyle{plain}

\newtheorem{theorem}[definition]{Theorem}
\newtheorem{thm}[definition]{Theorem}
\newtheorem{corollary}[definition]{Corollary}
\newtheorem{cor}[definition]{Corollary}

\newtheorem{lemma}[definition]{Lemma}

\theoremstyle{remark}

\newtheorem{rem}[definition]{Remark}

\title{\textsc{R\'enyi divergences and binary state discrimination error exponents for fermionic quasi-free states}}
\author[1,2]{G\'abor Mar\'oti-Zareczky}
\author[2]{Mil\'an Mosonyi}
\affil[1]{
HUN-REN Wigner Research Centre for Physics, Konkoly-Thege Miklós út 29-33. H-1121 Budapest, Hungary}
\affil[2]{
Department of Analysis and Operations Research, Institute of Mathematics,
Budapest University of Technology and Economics,
M\H uegyetem rkp.~3., H-1111 Budapest, Hungary}

\date{}

\begin{document}

\maketitle

\begin{abstract}
The trade-off relations between the two types of error probabilities in binary i.i.d. quantum state discrimination can be expressed by single-copy formulas in terms of the Petz-type and the sandwiched R\'enyi divergences of the two states representing the two hypotheses. In the non-i.i.d. setting, the error exponents can usually be expressed in terms of regularized Rényi divergences, which do not admit explicit formulas in general. Here, we consider a class of states, translation-invariant and gauge-invariant quasi-free states 
on doubly infinite fermionic chains, and give explicit formulas for a wide range of regularized Rényi divergences between such states, including $(\alpha,z)$, log-Euclidean, geometric, measured, and the recently introduced hockey-stick Rényi divergences. We show that the case where there is a single mode at each lattice site becomes asymptotically classical, with all the different types of regularized Rényi divergences being equal, while in the case of multiple modes per site, non-commutativity persists under regularization, and 
for any fixed $\alpha$, the regularized R\'enyi $(\alpha,z)$-divergences
take different values for different $z$ parameters in general.
Using these results, we give 
explicit expressions for the direct and the strong converse exponents of discriminating two such states when their symbol operators are strictly bounded away from  zero and the identity.
We also generalize a previous construction from [Bunth, Mar\'oti, Mosonyi, Zimbor\'as, Lett.~Math.~Phys.~113:(7), 2023] to the case of multiple modes per lattice site to obtain
a large class of states exhibiting super-exponential decay of the discrimination error probabilities.
\end{abstract}

\tableofcontents

\section{Introduction}

In the problem of binary state discrimination, an experimenter has to read out the value of 
one bit encoded into the state of a quantum system, say, state $\rho$ for bit value $0$ and 
state $\sigma$ for bit value $1$. This is achieved by a measurement with outcomes $0$ and $1$, which, 
in the most general case, the experimenter is free to choose in order to minimize the probability 
of an erroneous decoding. More precisely, if the measurement operators are 
$T_0=T$ and $T_1=I-T$, 
the probability of misidentifying the message $0$ (type I error) is given by 
$\err_0(\rho|T):=\Tr\rho(I-T)$, and 
the probability of misidentifying the message $1$ (type II error) is given by 
$\err_1(\sigma|T):=\Tr\sigma T$. These cannot both be made $0$ unless the two states
are orthogonal to each other, and in general, one error probability can be decreased by changing the measurement only at the expense of increasing the other error probability, showing a trade-off between the values of the two error probabilities. 

The decoding errors can be decreased by sending the same message multiple times, resulting in the code states 
$\rho_n=\rho^{\otimes n}$ (for $0$) and $\sigma_n=\sigma^{\otimes n}$ (for $1$), provided that the encoding 
device has no memory and its operation does not change over time. The two types of error probabilities then 
can be made to go to $0$ with an exponential speed in the number of repetitions $n$ by the right choice of 
measurements, and their trade-off can be quantified on the level of the exponents as
\begin{align}
\dl_r(\sequence{\rho}\|\sequence{\sigma})&:=
\sup\left\{\lim_{n\to+\infty}-\frac{1}{n}\log\err_0(\rho_n|T_n)\,\Big|\,
\lim_{n\to+\infty}-\frac{1}{n}\log\err_1(\sigma_n|T_n)>r
\right\}\label{eq:direct exp1}\\
&=\sup_{\alpha\in(0,1)}\frac{\alpha-1}{\alpha}\left[r-\red_{\alpha,1}(\rho\|\sigma)\right]=:H_r(\rho\|\sigma),
\label{eq:Hoeffding1}
\end{align}
as was shown in \cite{ANSzV,Hayashicq,Nagaoka}.
Here, $\sequence{\rho}:=(\rho_n)_{n\in\bN}$, $\sequence{\sigma}:=(\sigma_n)_{n\in\bN}$, 
the supremum is taken over all 
test sequences $(T_n)_{n\in\bN}$ satisfying the given constraint, and 
$\red_{\alpha,1}(\rho\|\sigma)$ is the Petz-type R\'enyi $\alpha$-divergence of $\rho$ and $\sigma$
\cite{P86}; see Sections \ref{sec:finitedim} and \ref{sec:errexp} for more formal definitions.

In particular, both error probabilities can be made to disappear with an exponential speed as long as the exponent $r$ of the type II error is such that $H_r(\rho\|\sigma)>0$, which is known to be equivalent to 
$r<D(\rho\|\sigma)$, where $D(\rho\|\sigma)$ is the Umegaki relative entropy \cite{Umegaki} of $\rho$ and
$\sigma$. If $r>D(\rho\|\sigma)$ then the type I errors inevitably go to $1$ exponentially fast, known as the strong converse property, and in this case the trade-off is quantified as
\begin{align}
\sc_r(\sequence{\rho}\|\sequence{\sigma})&:=
\inf\left\{\lim_{n\to+\infty}-\frac{1}{n}\log(1-\err_0(\rho_n|T_n))\,\Big|\,
\lim_{n\to+\infty}-\frac{1}{n}\log\err_1(\sigma_n|T_n)>r
\right\}\label{eq:sc exp1}\\
&=\sup_{\alpha>1}\frac{\alpha-1}{\alpha}\left[r-\red_{\alpha,\alpha}(\rho\|\sigma)\right]=:H_r^*(\rho\|\sigma),
\label{eq:Hoeffding2}
\end{align}
as was shown in \cite{MO}. Here, $\red_{\alpha,\alpha}(\rho\|\sigma)$ is the sandwiched R\'enyi 
$\alpha$-divergence \cite{Renyi_new,WWY} of $\rho$ and $\sigma$.

Remarkably, the expressions in \eqref{eq:Hoeffding1} and \eqref{eq:Hoeffding2} only involve a single copy of $\rho$ and $\sigma$, and hence may be explicitly computable, at least numerically.
The situation changes when the encoding device is not assumed to be memoryless anymore.
In fact, the direct exponents $\dl_r(\vec{\rho}\|\vec{\sigma})$ in \eqref{eq:direct exp1}
and the strong converse exponents $\sc_r(\vec{\rho}\|\vec{\sigma})$ in \eqref{eq:sc exp1}
may be defined more generally to quantify the trade-off between the two error probabilities in the asymptotic discrimination of an arbitrary pair of sequences of density operators
$(\rho_n)_{n\in\bN}$, $(\sigma_n)_{n\in\bN}$.
It was shown in \cite{HMO2} and \cite{MO-correlated} that the equalities in \eqref{eq:direct exp1}--\eqref{eq:Hoeffding2} still hold under fairly general conditions, with the single-copy R\'enyi divergences in \eqref{eq:Hoeffding1} and \eqref{eq:Hoeffding2} replaced with their regularized versions
\begin{align*}
\red^{\reg}_{\alpha,\gamma}(\sequence{\rho}\|\sequence{\sigma})
&:=
\lim_{n\to+\infty}\frac{1}{n}\red_{\alpha,\gamma}(\rho_n\|\sigma_n),\ds\ds\ds
\gamma\in\{1,\alpha\}.
\end{align*}
More precisely, what was shown in \cite{HMO2,MO-correlated} is that 
\begin{align}
\dl_r(\sequence{\rho}\|\sequence{\sigma})
&=
\sup_{\alpha\in(0,1)}\frac{\alpha-1}{\alpha}\left[r-\red^{\reg}_{\alpha,1}(\sequence{\rho}\|\sequence{\sigma})\right],\label{eq:direct exp2}
\end{align}
holds whenever
$\red^{\reg}_{\alpha,1}(\sequence{\rho}\|\sequence{\sigma})$ exists for every $\alpha\in(0,1)$ and it is a differentiable function of $\alpha$, and 
\begin{align}
\sc_r(\sequence{\rho}\|\sequence{\sigma})
&=
\sup_{\alpha>1}\frac{\alpha-1}{\alpha}\left[r-\red^{\reg}_{\alpha,\alpha}
(\sequence{\rho}\|\sequence{\sigma})\right],\label{eq:sc exp2}
\end{align}
holds whenever
$\red^{\reg}_{\alpha,\alpha}(\sequence{\rho}\|\sequence{\sigma})$ exists for every $\alpha\in(1,+\infty)$, it is a differentiable function of $\alpha$, and the smallest non-zero eigenvalue of $\sigma_n$ does not go to $0$ faster than exponential in $n$. 

While these are natural and conceptually relevant generalizations of the i.i.d.~results in 
\eqref{eq:direct exp1}--\eqref{eq:Hoeffding2}, their practical relevance is limited unless one can 
express the regularized R\'enyi divergences in an explicitly computable form. Examples of classes of states where this is possible include translation-invariant and gauge-invariant quasi-free states of fermionic lattice systems with one fermionic mode at each physical site, as was demonstrated in 
\cite{MHOF,MO-correlated}. Such a state is specified by a translation-invariant
operator (Toeplitz operator) $Q$ on $\ell^2(\bZ)$, such that $Q$ and $I-Q$ are both positive semi-definite,
or equivalently, via Fourier transformation, by a measurable function $\fv q$ on the one-dimensional torus $\bT$, taking values between $0$ and $1$. (For simplicity, here we only consider a one-dimensional lattice, i.e., a chain of fermions; the case of higher-dimensional lattices 
is very similar.) These are called the symbol operator and the symbol function, respectively, 
of the state $\omega_Q$ on the 
CAR (Canonical Anti-commutation Relation) algebra built on the single-particle Hilbert space $\ell^2(\bZ)$. 
The outcome probabilities of any measurement on a length $n$ portion of the chain are 
determined by the quasi-free state with density operator $\what\omega_{Q_n}$ on the fermionic Fock space built on 
$\ell^2([n])$, which is specified by the symbol operator $Q_n=P_nQP_n$, where $P_n$ is the projection from 
$\ell^2(\bZ)$ onto $\ell^2([n])$.

Given two such states $\omega_Q$ and $\omega_R$ encoding $0$ and $1$, respectively, the aim is to 
read out the value of the bit by making a measurement on a finite (say, length $n$) portion of the chain, the outcome probabilities of which are determined by the density operators 
$\rho_n:=\what\omega_{Q_n}$ and $\sigma_n:=\what\omega_{R_n}$. 
As was shown in \cite{MHOF,MO-correlated}, the regularized R\'enyi divergences are then given by 
\begin{align*}
\red_{\alpha,1}^{\reg}(\omega_Q\|\omega_R)
=
\red_{\alpha,\alpha}^{\reg}(\omega_Q\|\omega_R)
=
\frac{1}{2\pi}\int_0^{2\pi}\frac{1}{\alpha-1}\log\left[\fv q(x)^{\alpha}\fv r(x)^{1-\alpha}+
(1-\fv q(x))^{\alpha}(1-\fv r(x))^{1-\alpha}\right]\,\dd x
\end{align*}
under the assumption that both $\fv q$ and $\fv r$ are strictly bounded away from $0$ and $1$ as
$c\le\fv q(x),\fv r(x)\le(1-c)$, $x\in\bT$, for some $c\in(0,1/2)$. 
(Here we identify $\sequence{\rho}=(\omega_{Q_n})_{n\in\bN}$ with $\omega_Q$, and 
$\sequence{\sigma}=(\omega_{R_n})_{n\in\bN}$ with $\omega_R$.)
Moreover, the differentiability conditions are also satisfied, and hence 
\eqref{eq:direct exp2}--\eqref{eq:sc exp2} hold.
These examples are asymptotically classical in the sense that the symbols $Q$ and $R$ 
of the infinite systems
commute (because they are both mapped into multiplication operators by the Fourier transform), and the regularizations of the 
different types of R\'enyi divergences coincide and are determined by the classical (commuting) objects
$\fv q$ and $\fv r$. One might then suspect this asymptotic commutativity to be the reason why the 
regularized R\'enyi divergences can be given in closed forms, which, however, is not true, as we demonstrate here.

In this paper, we consider a generalization of the above problem, where instead of a single mode per site, 
we allow a fixed finite number of modes at each site, resulting in the single-particle Hilbert space 
$\ell^2(\bZ)\otimes\bC^d$ for some $d\in\bN$, and quasi-free states specified by 
block Toeplitz operator symbols $Q,R$ on $\ell^2(\bZ)\otimes\bC^d$, or equivalently,
matrix-valued symbol functions $\mtxsymfn{q}(x),\mtxsymfn{r}(x)\in\bC^{d\times d}$, $x\in\bT$.
Moreover, we consider the regularization of a large variety of quantum R\'enyi divergences in addition to the 
Petz-type and the sandwiched ones considered previously, and show that 
if the symbols are bounded in the positive semi-definite order as 
$cI_d\le \mtxsymfn{q}(x),\mtxsymfn{r}(x)\le(1-c)I_d$, $x\in\bT$, for some $c\in(0,1/2)$, then 
the regularized R\'enyi divergences exist and can be given in the closed form
\begin{align}
\red_{\alpha,q}^{\reg}(\omega_Q\|\omega_R)
=
\frac{1}{2\pi}\int_0^{2\pi}
\red_{\alpha,q}(\what\omega_{\mtxsymfn{q}(x)}\|\what\omega_{\mtxsymfn{r}(x)})\,\dd x.
\end{align}
Here, $\what\omega_{\mtxsymfn{q}(x)}$ and $\what\omega_{\mtxsymfn{r}(x)}$ are density operators of 
quasi-free states of a fermion system with a $d$-dimensional single-particle Hilbert space at each point 
$x$ of the torus, and 
$\red_{\alpha,q}$ may be any R\'enyi $(\alpha,z)$-divergence \cite{AD} (including the Petz-type and the 
sandwiched R\'enyi divergences), the log-Euclidean R\'enyi divergence \cite{MO-cqconv}, or the 
geometric R\'enyi divergence \cite{HiaiMosonyi2017,Matsumoto_newfdiv,PetzRuskai1998}.
We also evaluate the regularized measured R\'enyi divergences as 
\begin{align}
\red^{\reg}_{\alpha,\meas}(\omega_Q\|\omega_R)=
\begin{cases}
\red^{\reg}_{\alpha,\alpha}(\omega_Q\|\omega_R)
=
\frac{1}{2\pi} \int_0^{2\pi}\red_{\alpha,\alpha}(\what\omega_{\mtxsymfn{q}(x)}\|\what\omega_{\mtxsymfn{r}(x)})\,\dd x
,&\alpha\in[1/2,+\infty),\\
\red^{\reg}_{\alpha,1-\alpha}(\omega_Q\|\omega_R)
=
\frac{1}{2\pi} \int_0^{2\pi}\red_{\alpha,1-\alpha}(\what\omega_{\mtxsymfn{q}(x)}\|\what\omega_{\mtxsymfn{r}(x)})\,\dd x
,&\alpha\in(0,1/2],
\end{cases}
\end{align}
and the regularized version of the recently introduced integral, or hockey-stick R\'enyi divergences
\cite{Frenkel_integral,Hirche_Tomamichel_integral} as 
\begin{align*}
\red^{\reg}_{\alpha,\hs}(\omega_Q\|\omega_R)=
\begin{cases}
\red^{\reg}_{\alpha,1}(\omega_Q\|\omega_R)
=
\frac{1}{2\pi} \int_0^{2\pi}\red_{\alpha,1}(\what\omega_{\mtxsymfn{q}(x)}\|\what\omega_{\mtxsymfn{r}(x)})\,\dd x
,&\alpha\in(0,1),\\
\red^{\reg}_{\alpha,\alpha}(\omega_Q\|\omega_R)
=
\frac{1}{2\pi} \int_0^{2\pi}\red_{\alpha,\alpha}(\what\omega_{\mtxsymfn{q}(x)}\|\what\omega_{\mtxsymfn{r}(x)})\,\dd x
,&\alpha\in(1,+\infty).
\end{cases}
\end{align*}
In particular, we show that the regularized Petz-type and sandwiched R\'enyi divergences
are differentiable in $\alpha$ on $(0,1)$ and on $(1,+\infty)$, respectively, and hence
the direct and the strong converse exponents can be expressed as in 
\eqref{eq:direct exp2} and \eqref{eq:sc exp2}.

In a different direction of generalization, 
one may wonder whether the exponential scale is the only reasonable choice 
on which the asymptotics of the error probabilities can be studied, or if it is possible to obtain faster 
convergence to zero. Such super-exponential error decay was demonstrated in \cite{superexp2023}, again by 
translation-invariant and gauge-invariant fermionic quasifree states on a one-dimensional chain with a single 
mode at each site. 
More precisely, it was shown in \cite{superexp2023} that if there exists a non-degenerate sub-interval $[\mu,\nu]$
of the torus on which the symbol function $\fv q$ is constant $0$, while $\fv r$ is constant $1$, then 
\begin{align}\label{eq:superexp}
\err_0(\what\omega_{Q_n}|T_n)\le e^{-cn\log n}\,,\ds\ds\ds
\err_1(\what\omega_{R_n}|T_n)\le e^{-cn\log n}\,,
\end{align}
for some positive constant $c$ and some test sequence $(T_n)_{n\in\bN}$.
The condition imposed on the symbol functions is very rigid, and does not really allow any modification 
that would be useful in exploring the phenomenon of super-exponential state discrimination any further. 

In this paper we generalize the above result to the case where there are $d$ modes at each site of the chain, and show that \eqref{eq:superexp} still holds with some positive constant $c$ and test sequence $(T_n)_{n\in\bN}$, which we explicitly construct, provided that there 
exists a non-degenerate sub-interval $[\mu,\nu]$
of the torus on which the symbol functions $\fv q$ and $\fv r$ are Lipschitz continuous, and one of the following holds:
\begin{enumerate}
\item\label{item:superexp1-2}
 For every $x\in[\mu, \nu]$, $\mtxsymfn{q}(x)$ and $\mtxsymfn{r}(x)$ are orthogonal, i.e., 
$\mtxsymfn{q}(x)\mtxsymfn{r}(x)=0$, and $\mtxsymfn{r}(x)$ is a non-zero projection;

\item\label{item:superexp2-2} 
for every $x\in[\mu, \nu]$, $I_d-\mtxsymfn{q}(x)$ and $I_d-\mtxsymfn{r}(x)$ are orthogonal, i.e., 
$(I_d-\mtxsymfn{q}(x))(I_d-\mtxsymfn{r}(x))=0$, and $I_d-\mtxsymfn{q}(x)$ is a non-zero projection.
\end{enumerate}
While the bound on the speed of convergence to $0$ that we can prove here is the same as the one in 
\cite{superexp2023}, the above construction obviously offers a lot more flexibility to modify the parameters
and to explore potentially different error asymptotics, which, however, we leave for future work.

The structure of the paper is as follows. In Section \ref{sec:prelim} we collect the necessary preliminaries, especially on block Toeplitz operators and the mathematical description of fermionic systems. 
In Section \ref{sec:Szego}, we extend various Szeg\H o-type limit theorems from \cite{MHOF} to the case of block Toeplitz operators. This will provide the main technical ingredient to evaluate the various regularized 
R\'enyi divergences in Section \ref{sec:Renyi}.
In Section \ref{sec:errexp} we prove \eqref{eq:direct exp2}--\eqref{eq:sc exp2}
for the class of states described above.
Finally, in  Section \ref{sec:superexp} we prove the above result on 
super-exponential error asymptotics.

\section{Preliminaries}
\label{sec:prelim}

\subsection{General}

By $\log$ we will denote the natural logarithm, with its extension to $[0,+\infty]$ as 
$\log 0:=-\infty$, $\log+\infty:=+\infty$.
For a natural number $n\in\bN=\{1,2,\ldots\}$, we will use the notations 
$[n]:=\{1,2,\ldots,n\}$, $[n]^*:=\{0,1,\ldots,n-1\}$.

By a Hilbert space we always mean a complex separable Hilbert space.
We will denote the inner product on a Hilbert space by $\inner{\valt}{\valt}$ and follow the convention that 
it is linear in its second and conjugate linear in its first variable. We will also use the Dirac notation:
for any vectors $x,y$ in a Hilbert space $\hil$, the operator $\diad{y}{x}$ is defined by $\diad{y}{x}z:=\inner{x}{z}y$, $z\in\hil$.

For a linear operator $A$ on a Hilbert space $\hil$, we will use the notations 
$\norm{A}:=\norm{A}_{\infty}:=\sup\{\norm{A\psi}:\,\psi\in\hil,\,\norm{\psi}\le 1\}$ for the operator norm, and 
$\B(\hil):=\{A:\,\hil\to\hil\text{ linear },\norm{A}_{\infty}<+\infty\}$ will denote the set of all bounded linear operators on $\hil$. 
We will use the notation
$\B(\hil)_{\sa}$ for the set of self-adjoint operators on $\hil$.
For an interval $J\subseteq\bR$, 
$\B(\hil)_{J}:=\{A\in\B(\hil)_{\sa}:\,\spec(A)\subseteq J\}$, i.e., 
it is the set of self-adjoint operators on $\hil$ with their spectra in $J$.
We will use the shorthand notations $\B(\hil)_{\ge 0}:=\B(\hil)_{[0,+\infty)}$
for the set of positive semi-definite (PSD) operators on $\hil$, and 
$\B(\hil)_{>0}:=\B(\hil)_{(0,+\infty)}$ for the set of 
positive definite operators, and we will denote by $\B(\hil)\pne$ the set of non-zero PSD operators on $\hil$.
An inequality $A\le B$ between operators $A,B\in\B(\hil)$ is always interpreted in the L\"owner (or PSD) order, meaning $B-A\in\B(\hil)\p$. Elements of the set
\begin{align*}
\B(\hil)_{[0,1]}:=\set{T\in\B(\hil)_{\sa}|0\le T\le I}
\end{align*}
are called \ki{tests} on $\hil$.

The set of (orthogonal) projections on $\hil$ will be denoted by 
\begin{align*}
\bP(\hil):=\B(\hil)_{\{0,1\}}=\set{P\in\B(\hil)_{\sa}|P^2=P}.
\end{align*}
For a positive semi-definite operator $A\in\B(\hil)\p$, we will use the notation
\begin{align}\label{eq:supppr}
A^0:=\lim_{t\searrow 0}A^t
\end{align}
for the projection onto $\supp A:=(\ker A)^{\perp}$. 

The set of \ki{states} (density operators) on a finite-dimensional Hilbert space $\hil$ is
$\S(\hil):=\set{\rho\in\B(\hil)\p|\Tr\rho=1}$.
For any finite set $\X$, the set of positive operator-valued measures (POVMs) on $\hil$ with outcomes in $\X$ is defined as
\begin{align*}
\povm(\hil,\X):=\Set{(M_x)_{x\in\X}\in\B(\hil)\p^{\X}|\sum\nolimits_{x\in\X}M_x=I}.
\end{align*}
The map $T\mapsto (T,I-T)$ gives an identification between 
$\B(\hil)_{[0,1]}$ and $\povm(\hil,\{0,1\})$.
For any POVM $M\in\povm(\hil,\X)$, the corresponding \ki{measurement channel} $\M$ is defined as
\begin{align*}
\M(A):=\sum_{x\in\X}(\Tr M_xA)\pr{x}\in\B(\ell^2(\X)),\ds\ds\ds A\in\B(\hil).
\end{align*}

For a differentiable function $f$ defined on an interval $J\subseteq\bR$, let 
$f\fdd:\,J\times J\to\bR$ be its \ki{first divided difference function}, defined as
\begin{align*}
f\fdd(a,b):=\begin{cases}
\frac{f(a)-f(b)}{a-b},&a\ne b,\\
f'(a),&a=b,
\end{cases}\ds\ds\ds a,b\in J.
\end{align*}
If $f$ is a continuously differentiable function on an open interval $J\subseteq\bR$ then for any finite-dimensional  Hilbert space $\hil$, $A\mapsto f(A)$ is Fr\'echet differentiable on $\B(\hil)_{J}$, and its 
Fr\'echet derivative $(Df)[A]$ at a point $A\in \B(\hil)_{J}$ is given by 
\begin{align}\label{eq:opfunction derivative}
(Df)[A](Y)=\sum_{a,b\in\spec(A)}f\fdd(a,b)P_a^{A}YP_b^A,\ds\ds\ds Y\in\B(\hil)_{\sa},
\end{align}
where $P^A_a$ denotes the spectral projection of $A$ corresponding to an eigenvalue
$a\in\spec(A)$.
See, e.g., \cite[Theorem V.3.3]{Bhatia} or \cite[Theorem 2.3.1]{Hiai_book}.
It is easy to see from this that if $f$ is as above, and $(a,b)\ni t\mapsto A(t)\in\B(\hil)_{J}$
is continuously differentiable, then so is $\Tr f(A(t))$ as well, and 
\begin{align}\label{eq:Tr derivative}
\frac{d}{dt}\Tr f(A(t))=\Tr \left[f'(A(t))\frac{d}{dt}A(t)\right],\ds\ds\ds t\in(a,b).
\end{align}

\subsection{Block Toeplitz operators}

For a measure space $(\X,\F,\mu)$ and a finite-dimensional Hilbert space $\hil$, 
let 
\begin{align*}
L^2(\X,\hil):=\left\{f\in\hil^{\X}\text{ measurable, }\norm{f}_2^2:=\int_{\X}\norm{f(t)}^2\,\dd\mu(t)<+\infty
\right\},
\end{align*} 
and 
\begin{align}
L^{\infty}(\X,\B(\hil))&:=\big\{A\in\B(\hil)^{\X}\text{ measurable, }\nn\\
&\ds\ds\ds\ds\ds\norm{A}_{\infty}:=
\inf\{C>0:\,\mu(\{t\in\X:\,\opnorm{A(t)}>C\})=0\}<+\infty\big\}\label{eq:bounded opfunctions}\\
&\subseteq\B\bz L^2(\X,\hil)\jz,\nn
\end{align}
where $A\in L^{\infty}(\X,\B(\hil))$ acts on $f\in L^{2}(\X,\hil)$ as 
$(Af)(x):=A(x)f(x)$, $x\in\X$.
It is easy to see that the operator norm of such an operator coincides with its norm defined in 
\eqref{eq:bounded opfunctions}, justifying the same notation for the two.
In particular, when 
$\hil=\bC^d:=\bC^{[d]^*}$ for some $d\in\bN$,  and $\X=\bZ$
or $\X=[n]^*$ for some $n\in\bN$, $\F$ is its full power set, and $\mu$ is the counting measure, we will use the notations
\begin{align*}
\ell^2_d(\X)&:=L^2(\X,\bC^d)=\set{f:\X\to\bC^d|\|f\|_2<\infty},\qquad 
\|f\|_2^2=\sum_{x\in\X}\|f_x\|^2_{\bC^d},\\
\ell^{\infty}_{d\times d}(\X)&:=L^{\infty}(\X,\B(\bC^d)),
\end{align*}
where 
$\|\cdot\|_{\bC^d}$ denotes the usual norm of $\bC^d$. 
We will often use the following natural identifications in the above case: 
\begin{equation*}
    \ell^2_d(\X)\equiv\bigoplus_{k=0}^{d-1}\ell^2(\X)\equiv\ell^2(\X)\otimes\bC^d,
\end{equation*}
which in turn gives that 
any bounded operator $A\in\B(\ell^2_d(\X))$ can be decomposed as
\begin{align*}
    A=\left[ A_{k,l} \right]_{k,l=0}^{d-1}\ds\equiv \ds
    \sum_{k,l=0}^{d-1} A_{k,l}\otimes \diad{k}{l},
\end{align*}
where $A_{k,l}\in\B(\ell^2(\X))$, $k,l\in[d]^*$, and we use the standard shorthand notation
\begin{align*}
\ket{k}:=\ket{\egy_{\{k\}}},\ds\ds\ds k\in\d.
\end{align*}

The translation operator $T$ on $\ell^2(\bZ)$ is given by $T\canb{k}=\canb{k+1}$, $k\in\bZ$, and its extension to $\ell^2_d(\bZ)$ is
\begin{align*}
\T=\bigoplus_{k=0}^{d-1}T\ds\equiv \ds T\otimes I.
\end{align*}
An operator $A\in\B(\ell_d^2(\bZ))$ is said to be \ki{translation-invariant}, 
or a \ki{block Toeplitz operator},
if  
\begin{align}\label{eqn:shift-inv}
\T A\T\inv=A,\ds\ds\ds\text{or equivalently,}\ds\ds\ds
TA_{k,l}T\inv=A_{k,l},\ds k,l\in[d]^*,
\end{align}
i.e., if every one of its blocks $A_{k,l}$, $k,l\in[d]^*$, is translation-invariant
(also called a \ki{Toeplitz operator}).

Let $\bT=[0,2\pi)$ denote the one-dimensional torus equipped with its canonical rotation
(equivalently, modulo $2\pi$ translation) and the Lebesgue measure. 
The Fourier transform is given by  
\begin{align*}
    F:\,\ell^2(\bZ)\to L^2(\bT),\ds\ds
    F\canb{k}:=\chi_k:=\frac{1}{\sqrt{2\pi}}e^{ik(\valt)},\ds\ds\ds k\in\bZ.
\end{align*}
Its canonical extension from $\ell^2_d(\bZ)$ to 
\begin{align*}
L^2_d(\bT):=L^2(\bT,\bC^d)\equiv\bigoplus_{k=0}^{d-1}L^2(\bT)\equiv L^2(\bT)\otimes\bC^d
\end{align*}
is given by 
\begin{align*}
    \ft=\bigoplus_{k=0}^{d-1}F\ds\equiv \ds F\otimes I.
\end{align*}

The matrix elements of a translation-invariant operator $A\in\B(\ell^2(\bZ))$
in the canonical orthonormal basis $\{\canb{k}\}_{k\in\bZ}$ are given by 
\begin{align*}
    \inner{\canb{k}}{A\canb{l}}=\inner{\canb{k}}{T^lAT^{-l}\canb{l}}
    =\inner{\canb{k-l}}{A\canb{0}}=a(k-l),
\end{align*}
where $a(k):=\inner{\canb{k}}{A\canb{0}}$, $k\in\bZ$. Since $a\in\ell^2(\bZ)$, the function
\begin{align*}
    \hat a:=\sqrt{2\pi}\sum_{k\in\bZ}a(k)\chi_k=\sum_{k\in\bZ}a(k)e^{ik(\valt)}
\end{align*}
is well defined as an element of $L^2(\bT)$. Let $M_{\hat a}:\,f\mapsto \hat a f$ denote the corresponding multiplication operator on its natural domain $\dom(M_{\hat a})$ in $L^2(\bT)$. Then
\begin{align*}
    \inner{\canb{k}}{(F^{-1} M_{\hat a}F)\canb{l}}&=
    \inner{\chi_k}{\hat a\chi_l}=\int_0^{2\pi}\frac{1}{2\pi}\hat a(x)e^{i(l-k)x}\,dx\\
    &=\frac{1}{\sqrt{2\pi}}\inner{\chi_{k-l}}{\hat a}=a(k-l)=\inner{\canb{k}}{A\canb{l}}.
\end{align*}
Since this holds for every $k,l\in\bZ$, we get that for all trigonometric polynomial $p$, $p\in\dom(M_{\hat a})$, and $\norm{M_{\hat a}p}=\norm{AF^{-1}p}\le\norm{A}\norm{p}$, whence $M_{\hat a}$ is bounded, and 
\begin{align}\label{eq:tiop Fourier}
A=F^{-1} M_{\hat a}F.
\end{align}
Thus, every translation-invariant operator on $\ell^2(\bZ)$ is mapped into a multiplication operator by the Fourier transform. This implies that any translation-invariant operator is normal, and any two 
translation-invariant operators commute with each other. 
In particular, if $A^{(j)}\in\B(\ell^2_d(\bZ))$, $j=1,2$, are translation-invariant, then 
any of their blocks commute, i.e., $A^{(j)}_{k,l}A^{(j')}_{k',l'}=A^{(j')}_{k',l'}A^{(j)}_{k,l}$, 
$j,j'\in\{1,2\}$, $k,k',l, l'\in[d]^*$. 

Now, if $A \in \B(\ell^2_d(\bZ))$ is translation-invariant
then by (\ref{eqn:shift-inv}) and \eqref{eq:tiop Fourier},
\begin{align*}
    A &= \sum_{k,l=0}^{d-1} A_{k,l}\otimes\ket{k}\bra{l} 
      = \sum_{k,l=0}^{d-1} (F^{-1}M_{\symfn{a}_{kl}}F)\otimes\ket{k}\bra{l} 
      = \ft^{-1} \underbrace{\left(\sum_{k,l=0}^{d-1} M_{\symfn{a}_{kl}}\otimes\ket{k}\bra{l}\right)}_{
      =:M_{\mtxsymfn{a}}}\ft
      =\ft^{-1}M_{\mtxsymfn{a}}\ft,
\end{align*}
where $\symfn{a}_{kl}\in L^\infty(\bT)$ for all $k,l\in\d$, and we introduce 
the notation
\begin{equation*}
    \mtxsymfn{a} := \sum_{k,l=0}^{d-1}\symfn{a}_{kl} \otimes \ket{k}\bra{l}
    \in L^{\infty}(\bT,\B(\bC^d))=:L^{\infty}_{d\times d}(\bT).
\end{equation*}

\subsection{Fermionic systems}

For vectors $\phi_1,\ldots,\phi_k$ in a complex Hilbert space $\hil$, let 
\begin{align}\label{eq:as product}
\phi_1\wedge\ldots\wedge\phi_k:=\frac{1}{\sqrt{k!}}\sum_{\sigma\in \mathfrak{S}_k}\varepsilon(\sigma)\phi_{\sigma(1)}\ootimes\phi_{\sigma(k)}
\end{align}
denote their anti-symmetrized tensor product, where
$\mathfrak{S}_k$ stands for the set of permutations of $k$ elements and $\varepsilon(\sigma)$ 
for the sign of the permutation $\sigma$. For any $k\in\bN$,
the $k$-th anti-symmetric tensor power $\wedge^k\hil=\hil^{\wedge k}$ of $\hil$ is the 
closure of the subspace of 
$\hil^{\otimes k}$ spanned by all vectors of the form \eqref{eq:as product}, and we define
$\hil^{\wedge 0}:=\bC$.

The Hilbert space of a fermionic system with single-particle Hilbert space $\hil$ is
the \ki{anti-symmetric Fock space} (or \ki{fermionic Fock space})
\begin{align*}
\fock{\hil}:=\medoplus_{k=0}^{\dim\hil}\hil^{\wedge k}.
\end{align*}
For any operator $A\in\B(\hil)$ and any $k\in\bN$, $A^{\otimes k}$ leaves the subspace $\hil^{\wedge k}$
invariant, and we define
\begin{align*}
A^{\wedge k}:=A^{\otimes k}\vert_{\hil^{\wedge^k}}
\end{align*}
as an operator on $\hil^{\wedge k}$. 
If $A\in\B(\hil)$ is a contraction or it is compact then 
\begin{align*}
\fockop{A}:=\medoplus_{k=0}^{\dim\hil} A^{\wedge k}
\end{align*}
defines a bounded operator on $\fock{\hil}$, where $A^{\wedge 0}:=1\in\B(\bC)$.
It is easy to see that if $\hil$ is finite dimensional then 
\begin{align}\label{eq:Fockop trace}
\Tr\fockop{A}
=
\det (I+A).
\end{align}
We will also often use the easily verifiable fact that 
if $\hil$ is finite dimensional then
for any positive definite 
$A,B\in\B(\hil)\pp$ and $x,y\in\bR$, 
\begin{align*}
\fockop{A}^x\fockop{B}^y=\fockop{A^xB^y}.
\end{align*}

For each $\phi\in\hilb$, the corresponding \emph{creation operator} 
$c(\phi)$ is the unique bounded linear extension of the map
\begin{align*}
\phi_1\wedge\ldots\wedge\phi_k\mapsto\phi\wedge\phi_1\wedge\ldots\wedge\phi_k,
\ds\ds\ds\phi_1,\ldots,\phi_k\in\hil,
\end{align*}
and the corresponding \emph{annihilation operator} is its adjoint, 
$a(\phi):=c(\phi)^*$. These operators satisfy the \emph{canonical anti-commutation relations (CARs)},
\begin{align} \label{eqn:cars}
\anticomm{a(\phi)}{a(\psi)}= 0, \ds\ds\ds 
\anticomm{a(\phi)}{a^*(\psi)}= \inner{\phi}{\psi}I,\ds\ds \phi,\psi\in\hil.
\end{align}
The $C^*$-subalgebra of $\B(\fock{\hil})$ generated by $\{a(\phi):\,\phi\in\hil\}$ 
is called the \emph{algebra of the canonical anti-commutation relations} (or \emph{CAR-algebra}) corresponding to the single-particle Hilbert space $\hil$, and is denoted by $\car{\hil}$.

A \ki{state} on $\car{\hil}$ is a positive linear functional that takes the value $1$ on $I$.    
For any positive semi-definite operator $Q\in\B(\hil)$ with $Q\le I$ there exists a unique state $\omega_Q$ on $\car{\hil}$ 
(called the \ki{gauge-invariant quasi-free state with symbol $Q$})
with the property
\begin{align} \label{eqn:quasifree_functional}
\omega_Q\left(a(\phi_1)^*\ldots a(\phi_n)^*a(\psi_m)\ldots a(\psi_1)\right) 
= \delta_{mn}\det\left\{\Braket{\psi_i|Q\phi_j}\right\}_{i,j=1}^n.
\end{align}
It is easy to verify that when $d:=\dim\hil$ is finite, the density operator 
$\what\omega_Q$ of $\omega_Q$ can be explicitly given as
\begin{align}\label{quasifree density}
\what\omega_Q=\prod_{j=1}^d\bz q_ja(e_j)^*a(e_j)+(1-q_j)a(e_j)a(e_j)^*\jz
\end{align}
where $Q=\sum_{j=1}^d q_j\pr{e_j}$ is any eigen-decomposition of $Q$.
Note that for all $1\le i_1<\ldots<i_k\le d$, $e_{i_1}\wedge\ldots\wedge e_{i_k}$
is an eigenvector of $\what\omega_Q$ with eigenvalue
$\bz\prod_{j\in\{i_1,\ldots,i_k\}}q_j\jz\cdot\bz \prod_{j\in[d]\setminus\{i_1,\ldots,i_k\}}(1-q_j)\jz$. 
This implies immediately that if $1$ is not an eigenvalue of $Q$ then 
$\what\omega_Q$ can be written as 
\begin{align}\label{eq:quasifree density}
\what\omega_Q=\det(I-Q)\fockop{\W{Q}},\ds\ds\ds\ds\ds
\W{Q}:=\frac{Q}{I-Q}.
\end{align} 

Since in this paper we only consider quasi-free states that are gauge-invariant, 
in the following we will drop
``gauge-invariant'' from the terminology, i.e., by a 
quasi-free state we will always mean a 
gauge-invariant quasi-free state.

Quasi-free states emerge as equilibrium states of non-interacting fermionic systems. For instance, if the single-particle Hamiltonian $H$ of a system of non-interacting fermions 
is such that $e^{-\beta H}$ is trace-class 
then 
the Gibbs state of the system at inverse temperature $\beta$ is the quasi-free state with symbol
$Q=\frac{e^{-\beta H}}{I+e^{-\beta H}}$ (see, e.g., \cite[Proposition 5.2.23]{BR2}).

Consider now a fermionic chain with $d$ modes at each site, the
single-particle Hilbert space of which is $\hil=\ell^2_d(\integers)$.
The translation operator $\T$ on $\ell^2_d(\bZ)$ defines the 
\ki{translation automorphism} $\tau$ on $\car{\ell^2_d(\bZ)}$ via
$\tau(\fockann{\phi}):=\fockann{\T\phi}$, $\phi\in\ell^2_d(\bZ)$.
A quasi-free state $\omega_Q$ on $\car{\ell^2_d(\bZ)}$ is called \emph{translation-invariant} if 
$\omega_Q\circ\tau=\omega_Q$, which is easily seen to be equivalent to 
$\T Q\T\inv=Q$, i.e., the translation-invariance of the symbol $Q\in\B(\ell^2_d(\bZ))$. For instance, in the above example a translation-invariant single-particle Hamiltonian $H$ yields a translation-invariant quasi-free state as the equilibrium state of the system. 

A measurement on a subsystem corresponding to modes at the sites $\n:=\{0,\ldots,n-1\}$ has
measurement operators in the $C^*$-subalgebra 
$\A_n\subseteq\car{\ell^2_d(\bZ)}$ generated by $\{a(\phi):\,\phi\in\hil_n\}$,
\begin{align*}
\hil_n:=\ell^2_d(\n)\equiv\spann\{\egy_{\{k\}}\otimes\egy_{\{j\}}:\,k\in\n,\,j\in\d\}\subseteq 
\ell^2(\bZ)\otimes\bC^d\equiv
\ell^2_d(\bZ).
\end{align*}
This subalgebra is naturally isomorphic to $\car{\ell^2_d(\n)}$.
It is easy to see that if the 
state of the infinite chain is given by a quasi-free state with symbol $Q$ then 
the statistics of any such local measurement is given by the quasi-free state 
$\omega_{Q_n}$
with symbol 
\begin{align}\label{eq:cutoff def}
Q_n:=(P_n\otimes I_d)^*Q(P_n\otimes I_d), \ds\ds\ds\text{where}\ds\ds\ds
P_n:=\sum_{k=0}^{n-1}\pr{\egy_{\{k\}}}.
\end{align}

\section{Szegő-type theorems for block Toeplitz operators}
\label{sec:Szego}

In this section, we will consider generalizations of the 
Szeg\H o-type result given in Lemma \ref{lem:szego-products}. For a proof of the latter, see \cite{MHOF}.

\begin{lemma}\label{lem:szego-products}
    Let $\symfn{a}^{(1)},\dots,\symfn{a}^{(r)}\in L^\infty(\bT)$ with the corresponding translation-invariant operators $A^{(k)}=F\inv M_{\mtxsymfn{a}^{(k)}}F\in\B(\ell^2(\bZ))$. Then
\begin{equation*}
        \lim_{n\to\infty}\frac{1}{n}\tr A^{(1)}_n\cdots A^{(r)}_n = \frac{1}{2\pi}\int_0^{2\pi}\symfn{a}^{(1)}(x)\cdots\symfn{a}^{(r)}(x) \,\dd x,
    \end{equation*}
    where $A_n^{(m)}:= P_n A^{(m)}P_n$, $m\in[r]$, with $P_n := \sum_{k=0}^{n-1} \Ket{\canb{k}}\Bra{\canb{k}}$.
\end{lemma}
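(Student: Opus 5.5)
We compare the product of truncations $A^{(1)}_n\cdots A^{(r)}_n$ with the truncation of the product $P_nA^{(1)}\cdots A^{(r)}P_n$. The reason is that the latter has an exactly computable normalized trace. Indeed, $A^{(1)}\cdots A^{(r)}$ is itself translation-invariant, with symbol $\symfn{a}^{(1)}\cdots\symfn{a}^{(r)}$ by \eqref{eq:tiop Fourier}. For any translation-invariant $B$ with symbol $\symfn{b}$, the diagonal entries are constant:
\begin{align*}
\inner{\canb{k}}{B\canb{k}}=b(0)=\frac{1}{2\pi}\int_0^{2\pi}\symfn{b}(x)\,\dd x .
\end{align*}
Hence $\frac1n\tr P_nBP_n=\frac{1}{2\pi}\int_0^{2\pi}\symfn{b}\,\dd x$ for every $n$. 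It therefore suffices to show that
\begin{align*}
\frac1n\bigl|\tr\bigl(P_nA^{(1)}P_nA^{(2)}P_n\cdots P_nA^{(r)}P_n-P_nA^{(1)}A^{(2)}\cdots A^{(r)}P_n\bigr)\bigr|\to0 .
\end{align*}

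Step 1 is a telescoping decomposition. Insert $I=P_n+P_n^\perp$ one factor at a time. The difference becomes a sum of at most $r-1$ terms, each of the form
\begin{align*}
X\,P_nA^{(j)}P_n^\perp A^{(j+1)}\,Y,
\end{align*}
where $X$ and $Y$ are products of operators with norm at most $\prod_m\norm{A^{(m)}}$. Using $|\tr(XZY)|\le\norm{X}\norm{Y}\norm{Z}_1$, it is enough to show that
\begin{align*}
\frac1n\norm{P_nA P_n^\perp B P_n}_1\to0
\end{align*}
for bounded Toeplitz operators $A,B$.

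Step 2 bounds these cross terms, and this is the main obstacle: the symbols are only $L^\infty$, so no decay of the Fourier coefficients can be assumed. First we reduce to trigonometric polynomial symbols. Given $\epsilon>0$, take Fejér means $\symfn{a}_\epsilon$ with $\norm{\symfn{a}_\epsilon}_\infty\le\norm{\symfn{a}}_\infty$ and $\norm{\symfn{a}-\symfn{a}_\epsilon}_{L^2}<\epsilon$.

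The error from replacing a symbol is controlled via Hilbert–Schmidt norms. We have
\begin{align*}
\frac1n\norm{P_n(A-A_\epsilon)}_2^2=\sum_k|a(k)-a_\epsilon(k)|^2=\frac{1}{2\pi}\norm{\symfn{a}-\symfn{a}_\epsilon}_{L^2}^2 .
\end{align*}
Combining this with $|\tr(XZY)|\le\norm{X}\norm{Z}_2\norm{P_nY}_2$ (where $Y$ also carries a $P_n$ factor, so $\norm{P_nY}_2\le\sqrt n\,\norm{Y}$) shows that each substitution changes $\frac1n\tr$ of both expressions by $O(\epsilon)$. The bound is uniform in $n$, because all operator norms stay bounded by the $L^\infty$ norms.

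For trigonometric polynomials of degree at most $N$, the operator $P_nAP_n^\perp$ is nonzero only on entries with $|k-l|\le N$ and with one index inside and the other outside $[0,n-1]$. So it has rank at most $2N$, and
\begin{align*}
\norm{P_nAP_n^\perp BP_n}_1\le 2N\norm{A}\norm{B},
\end{align*}
which is $o(n)$.

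Letting $n\to\infty$ and then $\epsilon\to0$ finishes the proof.
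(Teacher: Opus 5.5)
Your proof is correct. The paper gives no argument for this lemma and only refers to \cite{MHOF}, so your proposal works as a self-contained replacement. The route you take is a standard one: compare $A^{(1)}_n\cdots A^{(r)}_n$ with $P_nA^{(1)}\cdots A^{(r)}P_n$, whose normalized trace equals $\frac{1}{2\pi}\int_0^{2\pi}\symfn{a}^{(1)}\cdots\symfn{a}^{(r)}\,\dd x$ exactly for every $n$, and then control the boundary terms left by telescoping.

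Two bookkeeping remarks. First, the telescoping actually produces the terms $X\,P_nA^{(j)}P_n^\perp B_jP_n$, with $X=P_nA^{(1)}P_n\cdots P_nA^{(j-1)}$ and $B_j:=A^{(j+1)}\cdots A^{(r)}$. Since $B_j$ is again translation-invariant, this is exactly the form $P_nAP_n^\perp BP_n$ you reduce to; your $Z=P_nA^{(j)}P_n^\perp A^{(j+1)}$ does not end in $P_n$. Second, the Fejér approximation is applied to the full traces, not to the reduced claim of Step 1. So the logical order is really: approximate all symbols first, then telescope for trigonometric polynomials. Your Hilbert--Schmidt estimate works as written for the product of truncations. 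For the target value, the difference factor in $P_nA^{(1)}\cdots A^{(r)}P_n$ is not adjacent to $P_n$. It is simpler there to use the exact integral formula together with $\norm{\symfn{a}-\symfn{a}_\ep}_{L^1}\le\sqrt{2\pi}\,\norm{\symfn{a}-\symfn{a}_\ep}_{L^2}$. Alternatively, move the difference factor next to $P_n$, which is allowed because translation-invariant operators commute.

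The approximation step can in fact be dropped. Square-summability of the Fourier coefficients is all that is needed, so the lack of decay you flag is not an obstacle. Counting the pairs $(j,l)$ with $j\in[n]^*$, $l\notin[n]^*$ and $j-l=k$ gives
\begin{equation*}
\norm{P_nAP_n^\perp}_2^2=\sum_{k\ne 0}\min\{|k|,n\}\,|a(k)|^2=o(n),
\end{equation*}
by dominated convergence, since $a\in\ell^2(\bZ)$. Hence
$\frac1n\norm{P_nAP_n^\perp BP_n}_1\le\frac1n\norm{P_nAP_n^\perp}_2\,\norm{B}\,\norm{P_n}_2=\norm{B}\bigl(\frac1n\norm{P_nAP_n^\perp}_2^2\bigr)^{1/2}\to0$
for arbitrary $L^\infty$ symbols, and Step 1 alone completes the proof. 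What your finite-rank bound adds is an explicit $O(1/n)$ rate when the symbols are trigonometric polynomials.
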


\begin{lemma}\label{lemma:szego-block-prod}
Let $\mtxsymfn{a}^{(1)},\dots,\mtxsymfn{a}^{(r)}\in L^\infty_{d\times d}(\bT)$ with the corresponding translation-invariant operators 
$A^{(k)}=\ft\inv M_{\mtxsymfn{a}^{(k)}}\ft\in\B(\ell^2_d(\bZ))$, $k\in[r]$. Then
    \begin{equation}\label{eqn:szego-block-prod}
        \lim_{n\to\infty}\frac{1}{n}\tr A^{(1)}_n \cdots A^{(r)}_n = \frac{1}{2\pi}\int_0^{2\pi}\tr\mtxsymfn{a}^{(1)}(x)\cdots\mtxsymfn{a}^{(r)}(x) \,\dd x,
    \end{equation}
    where $A_n^{(m)}:= \P_n A^{(m)}\P_n$, $m\in[r]$, with $\P_n := \bigoplus_{k=0}^{d-1}P_n \equiv P_n\otimes I$ and $P_n := \sum_{k=0}^{n-1} \Ket{\canb{k}}\Bra{\canb{k}}$.
\end{lemma}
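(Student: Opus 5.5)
The plan is to reduce the block statement to the scalar Lemma \ref{lem:szego-products} by expanding everything into blocks. Under the identification $\ell^2_d(\bZ)\equiv\ell^2(\bZ)\otimes\bC^d$, write $A^{(m)}=\sum_{k,l\in\d}A^{(m)}_{k,l}\otimes\diad{k}{l}$. Each block $A^{(m)}_{k,l}$ is a translation-invariant operator on $\ell^2(\bZ)$ with symbol $\symfn{a}^{(m)}_{kl}\in L^\infty(\bT)$, as shown in the preliminaries. Since $\P_n=P_n\otimes I$, the compressions are blockwise:
\begin{align*}
A^{(m)}_n=\sum_{k,l\in\d}\bz P_nA^{(m)}_{k,l}P_n\jz\otimes\diad{k}{l}.
\end{align*}

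Multiplying out the product and taking the trace, the factor $\tr(\diad{k_1}{l_1}\diad{k_2}{l_2}\cdots\diad{k_r}{l_r})$ forces $l_1=k_2,\dots,l_{r-1}=k_r$ and $l_r=k_1$. This gives
\begin{align*}
\frac1n\tr A^{(1)}_n\cdots A^{(r)}_n=\sum_{k_1,\dots,k_r\in\d}\frac1n\tr\bz P_nA^{(1)}_{k_1,k_2}P_n\bigr)\bigl(P_nA^{(2)}_{k_2,k_3}P_n\bigr)\cdots\bigl(P_nA^{(r)}_{k_r,k_1}P_n\jz .
\end{align*}
This is a finite sum of $d^r$ terms. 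Lemma \ref{lem:szego-products} applies to each term, with the scalar symbols $\symfn{a}^{(1)}_{k_1k_2},\dots,\symfn{a}^{(r)}_{k_rk_1}$, so each term converges to $\frac1{2\pi}\int_0^{2\pi}\symfn{a}^{(1)}_{k_1k_2}(x)\cdots\symfn{a}^{(r)}_{k_rk_1}(x)\,\dd x$.

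Summing these limits over $k_1,\dots,k_r$ and exchanging the finite sum with the integral, the integrand becomes pointwise the expansion of the matrix trace $\tr\mtxsymfn{a}^{(1)}(x)\cdots\mtxsymfn{a}^{(r)}(x)$. This yields \eqref{eqn:szego-block-prod}.

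The only point that needs checking is that the blocks of the compressed operator are exactly the compressions of the blocks, and that the block symbols lie in $L^\infty(\bT)$. Both facts follow from $\P_n=P_n\otimes I$ and from $\norm{\symfn{a}_{kl}}_\infty\le\norm{\mtxsymfn{a}}_\infty$. I expect no real obstacle; the content is entirely in the scalar lemma.
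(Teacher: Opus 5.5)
Your proposal is correct and follows essentially the same route as the paper. It expands each compression blockwise via $\P_n=P_n\otimes I$, reduces $\tr A^{(1)}_n\cdots A^{(r)}_n$ to a finite sum of $d^r$ scalar traces of products of compressed Toeplitz operators, applies Lemma \ref{lem:szego-products} termwise, and resums the integrands into $\tr\mtxsymfn{a}^{(1)}(x)\cdots\mtxsymfn{a}^{(r)}(x)$; the only difference from the paper is an immaterial cyclic relabelling of the block indices.
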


\begin{proof}
    For all $m\in[r]$, we have 
    \begin{align*}
        A_n^{(m)} &= \P_n A^{(m)} \P_n 
                  = \sum_{k,l=0}^{d-1}P_nA_{k,l}^{(m)}P_n\otimes\ket{k}\bra{l} 
                  = \sum_{k,l=0}^{d-1} \left(A^{(m)}_{k,l}\right)_n \otimes\ket{k}\bra{l},
    \end{align*}
    whence
    \begin{align*}
        \tr A^{(1)}_n A^{(2)}_n\cdots A^{(r)}_n = \tr \sum_{k_1,\dots,k_r=0}^{d-1}\left(A^{(1)}_{k_r,k_1}\right)_n\left(A^{(2)}_{k_1,k_2}\right)_n\cdots\left(A^{(r)}_{k_{r-1},k_r}\right)_n.
    \end{align*}
    Hence, by the linearity of the trace and the limit, we can rewrite the left-hand side of (\ref{eqn:szego-block-prod}) as
    \begin{align*}
 &\sum_{k_1,\dots,k_r=0}^{d-1} \lim_{n\to\infty}\frac{1}{n}\tr\left(A^{(1)}_{k_r,k_1}\right)_n\left(A^{(2)}_{k_1,k_2}\right)_n\cdots\left(A^{(r)}_{k_{r-1},k_r}\right)_n\\
 &\ds=
        \sum_{k_1,\dots,k_r=0}^{d-1}\frac{1}{2\pi}\int_0^{2\pi}\symfn{a}_{k_rk_1}^{(1)}(x)\symfn{a}_{k_1k_2}^{(2)}(x)\cdots\symfn{a}_{k_{r-1}k_r}^{(r)}(x) \,\dd x\\
 &\ds=
  \frac{1}{2\pi}\int_0^{2\pi}
  \sum_{k_1,\dots,k_r=0}^{d-1}\symfn{a}_{k_rk_1}^{(1)}(x)\symfn{a}_{k_1k_2}^{(2)}(x)\cdots\symfn{a}_{k_{r-1}k_r}^{(r)}(x) \,\dd x\\
 &\ds=
  \frac{1}{2\pi}\int_0^{2\pi}
\Tr
\mtxsymfn{a}^{(1)}(x)\mtxsymfn{a}^{(2)}(x)\cdots\mtxsymfn{a}^{(r)}(x) 
\,\dd x,
\end{align*}
where the first equality follows from Lemma \ref{lem:szego-products}, 
and the rest are obvious.
\end{proof}

\begin{theorem}\label{thm:szego-block-poly}
Let $\mtxsymfn{a}^{(1)},\dots,\mtxsymfn{a}^{(r)}\in L^\infty_{d\times d}(\bT)$ with the corresponding translation-in\-va\-ri\-ant operators 
$A^{(k)}=\ft\inv M_{\mtxsymfn{a}^{(k)}}\ft\in\B(\ell^2_d(\bZ))$, $k\in[r]$. Then
\begin{align}\label{eqn:szego-block-poly}
\lim_{n\to\infty}\frac{1}{n}\tr f^{(1)}(A^{(1)}_n) \cdots f^{(r)}(A^{(r)}_n) 
 = \frac{1}{2\pi}\int_0^{2\pi}\tr f^{(1)}(\mtxsymfn{a}^{(1)}(x))\cdots f^{(r)}(\mtxsymfn{a}^{(r)}(x)) \,\dd x,
\end{align}
for any choice of polynomials $f^{(1)},\ldots,f^{(r)}$. If, moreover, each $\mtxsymfn{a}^{(k)}$ is self-adjoint almost everywhere, then (\ref{eqn:szego-block-poly}) holds also when each $f^{(k)}$ is a continuous function on 
$\cD_k:=\operatorname{conv}\bz\spec\left(A^{(k)}\right)\jz$.
\end{theorem}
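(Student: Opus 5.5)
The polynomial case reduces directly to Lemma \ref{lemma:szego-block-prod}. Expanding each $f^{(k)}(t)=\sum_{j}c^{(k)}_j t^j$, the product $f^{(1)}(A^{(1)}_n)\cdots f^{(r)}(A^{(r)}_n)$ is a finite linear combination, with coefficients independent of $n$, of products of the form $A^{(1)}_n\cdots A^{(1)}_n A^{(2)}_n\cdots A^{(r)}_n$ with $j_k$ factors of $A^{(k)}_n$. Each such product is of the form treated in Lemma \ref{lemma:szego-block-prod}, with the list of symbols $\mtxsymfn{a}^{(1)},\dots,\mtxsymfn{a}^{(1)},\dots,\mtxsymfn{a}^{(r)}$ repeated accordingly. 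Products with $j_k=0$ correspond to the identity on $\ran\P_n$, which is the cut-off of the block Toeplitz operator with constant symbol $I_d$, so they are covered as well. Linearity of the trace, of the limit and of the integral then gives \eqref{eqn:szego-block-poly}, since the same expansion applied to $\mtxsymfn{a}^{(k)}(x)$ reassembles $\tr f^{(1)}(\mtxsymfn{a}^{(1)}(x))\cdots f^{(r)}(\mtxsymfn{a}^{(r)}(x))$.

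For continuous $f^{(k)}$ in the self-adjoint case, I will use Weierstrass approximation. First, I localize all spectra in the compact interval $\cD_k=[m_k,M_k]$, where $m_k$ and $M_k$ are the minimum and maximum of $\spec(A^{(k)})$. Since $A^{(k)}$ is self-adjoint, $m_k I\le A^{(k)}\le M_k I$. Compressing by $\P_n$ gives $m_k\le A^{(k)}_n\le M_k$ on $\ran\P_n$, so $\spec(A^{(k)}_n)\subseteq\cD_k$. I also need $\spec(\mtxsymfn{a}^{(k)}(x))\subseteq\cD_k$ for a.e.\ $x$. This holds because $A^{(k)}$ is unitarily equivalent to $M_{\mtxsymfn{a}^{(k)}}$, so $m_k I\le M_{\mtxsymfn{a}^{(k)}}\le M_k I$. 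Testing against functions $\mathbf 1_E\otimes v$, with $E$ ranging over measurable sets and $v$ over a countable dense set of vectors, yields $m_k\le\mtxsymfn{a}^{(k)}(x)\le M_k$ a.e. Both sides of \eqref{eqn:szego-block-poly} are then well defined via functional calculus on $\cD_k$.

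Next, I take polynomials $p^{(k)}$ with $\sup_{\cD_k}|f^{(k)}-p^{(k)}|\le\ep$, so that $\norm{f^{(k)}(A^{(k)}_n)-p^{(k)}(A^{(k)}_n)}\le\ep$ and the analogous bound holds pointwise a.e.\ for the symbols. I telescope
\[
f^{(1)}\cdots f^{(r)}-p^{(1)}\cdots p^{(r)}=\sum_{k}p^{(1)}\cdots p^{(k-1)}\bigl(f^{(k)}-p^{(k)}\bigr)f^{(k+1)}\cdots f^{(r)}.
\]
Using $|\tr X|\le \dim(\ran\P_n)\norm{X}=nd\norm{X}$, the difference of the normalized traces is at most $d\,\ep\, r\,C^{r-1}$, where $C$ bounds $\sup_{\cD_k}|f^{(k)}|+1\ge\sup_{\cD_k}|p^{(k)}|$. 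This bound is uniform in $n$, and the same bound holds for the integrands on the right-hand side. Combined with the polynomial case, the $\limsup$ and $\liminf$ of the left-hand side both lie within $2dr C^{r-1}\ep$ of the right-hand side, and letting $\ep\to0$ finishes the proof.

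The main obstacle is the localization step, especially the a.e.\ spectral bound for the matrix symbols, since it is needed to make the uniform approximation valid on both sides simultaneously. The polynomial reduction and the telescoping estimate are routine.
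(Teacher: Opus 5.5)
Your proposal is correct and follows essentially the same route as the paper. The polynomial case is obtained by expanding into monomials and applying Lemma \ref{lemma:szego-block-prod}, and the continuous case by uniform polynomial approximation on $\cD_k$, the trace bounds $|\tr X|\le nd\norm{X}$ (resp.\ $d\norm{X}$ for the symbols), and a telescoping estimate. Your explicit verification that $\spec(\mtxsymfn{a}^{(k)}(x))\subseteq\cD_k$ for a.e.\ $x$, and your treatment of the identity factor, are details the paper leaves implicit; they do not change the argument.
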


\begin{proof}
The statement for polynomials follows immediately from Lemma \ref{lemma:szego-block-prod}. 
    
Now, if the $\mtxsymfn{a}^{(k)}$ are self-adjoint almost everywhere, then the $A^{(k)}_n$ are also self-adjoint for all $n\in\bN$. Moreover, the spectrum of $A^{(k)}_n$ is easily seen to be contained in the convex hull of the spectrum of $A^{(k)}$. Therefore, $f^{(k)}(A^{(k)}_n)$ is well defined for all $n$. 
By a simple application of the Stone-Weierstrass approximation theorem, one obtains that
for every $\varepsilon>0$ there exist polynomials $f^{(1)}_{\varepsilon},\dots,f^{(r)}_{\varepsilon}$ such that
    \begin{equation*}
        \bnorm{f^{(k)}-f^{(k)}_{\varepsilon}}_{\infty} < \varepsilon 
        \qquad\textrm{and}\qquad 
        \bnorm{f^{(k)}_{\varepsilon}}_{\infty} \leq \bnorm{f^{(k)}}_{\infty},\ds\ds\ds
        k\in[r],
    \end{equation*}
where for $\#=\set{}$ and $\#=\varepsilon$,
\begin{align*}
\bnorm{f^{(k)}_{\#}}_{\infty}:=\max_{t\in\cD_k}\big|f^{(k)}_{\#}(t)\big|\,.
\end{align*}
    Moreover, for all $k\in[r]$ and $\#=\set{}$ and $\#=\varepsilon$, we have
    \begin{align*}
        \bnorm{f^{(k)}_{\#}(A^{(k)}_n)} 
        \leq 
        \bnorm{f^{(k)}_{\#}}_{\infty},
        \ds\ds\ds
        \bnorm{f^{(k)}_{\#}(\mtxsymfn{a}^{(k)})}_{\infty} 
        \leq 
        \bnorm{f^{(k)}_{\#}}_{\infty}\,.
    \end{align*}
   
    Let $I_{nd}$ denote the identity operator on $\ran \P_n$. Then, for any bounded operators $X,Y$ on $\ran \P_n$ and any continuous functions $f,g$ on $\operatorname{conv}\bz\sigma\left(A^{(k)}\right)\jz$,
   we have 
    \begin{align}
        \left|\tr X f(A^{(k)}_n)Y-\tr X g(A^{(k)}_n)Y\right| 
        &\leq 
        \Bnorm{X \left[f(A^{(k)}_n)-g(A^{(k)}_n)\right]Y}_1 \nonumber \\
        &\leq 
        \norm{X}\norm{Y} \bnorm{f(A^{(k)}_n)-g(A^{(k)}_n)}_1 \nonumber \\
        &\leq 
        \norm{X}\norm{Y}\norm{I_{nd}}_1 \bnorm{f(A^{(k)}_n)-g(A^{(k)}_n)}_{\infty} \nonumber \\
        &\leq nd\norm{X}\norm{Y}\|f-g\|_{\infty}, \label{eqn:holder-ineq-1}
    \end{align}
    where in the 
    third inequality we used the Hölder inequality. Similarly, for any bounded operators $Z, W$ on $\bC^d$ and any continuous functions $f,g$ defined on $\operatorname{conv}\bz\spec\left(M_{\mtxsymfn{a}^{(k)}}\right)\jz$, we have
\begin{align}
        \left|\tr Z f(\mtxsymfn{a}^{(k)}(x))W-\tr Z g(\mtxsymfn{a}^{(k)}(x))W\right| 
        &\leq 
        \left\|Z \left[f(\mtxsymfn{a}^{(k)}(x))-g(\mtxsymfn{a}^{(k)}(x))\right]W\right\|_1 \nonumber \\
        &\leq 
        \|Z\| \|W\| \bnorm{f(\mtxsymfn{a}^{(k)}(x))-g(\mtxsymfn{a}^{(k)}(x))}_1 \nonumber \\
        &\leq 
        \|Z\| \|W\| \|I_d\|_1 \bnorm{f(\mtxsymfn{a}^{(k)}(x))-g(\mtxsymfn{a}^{(k)}(x))}_{\infty} \nonumber \\
        &\leq d\|Z\| \|W\|\|f-g\|_{\infty}, \label{eqn:holder-ineq-2}
    \end{align}
for almost every $x\in\bT$.
    
    With repeated use of (\ref{eqn:holder-ineq-1}) and (\ref{eqn:holder-ineq-2}), one can see that
    \begin{align}\label{eqn:szego-block-poly proof1}
        \left|\frac{1}{n}\tr f^{(1)}(A^{(1)}_n) \cdots f^{(r)}(A^{(r)}_n)
        -\frac{1}{n}\tr f^{(1)}_{\varepsilon}(A^{(1)}_n)\cdots f^{(r)}_{\varepsilon}(A^{(r)}_n)\right| \leq \varepsilon d r \max_{1\leq k\leq r} \bnorm{f^{(k)}}_{\infty}^{r-1},
    \end{align}
    and 
    \begin{align}\label{eqn:szego-block-poly proof2}
        \left|\tr f^{(1)}(\mtxsymfn{a}^{(1)}(x))\cdots f^{(r)}(\mtxsymfn{a}^{(r)}(x))
-\tr f^{(1)}_{\varepsilon}(\mtxsymfn{a}^{(1)}(x))\cdots f^{(r)}_{\varepsilon}(\mtxsymfn{a}^{(r)}(x))\right| \leq \varepsilon d r \max_{1\leq k\leq r} \bnorm{f^{(k)}}_{\infty}^{r-1},
    \end{align}
    for almost every $x\in\bT$.
 
As we have established above, \eqref{eqn:szego-block-poly} holds when all the $f^{(k)}$ are polynomials, whence
for every $\ep>0$,
\begin{align}\label{eqn:szego-block-poly proof3}
\lim_{n\to+\infty}
\frac{1}{n}\tr f^{(1)}_{\varepsilon}(A^{(1)}_n) \cdots f^{(r)}_{\varepsilon}(A^{(r)}_n)
=
\frac{1}{2\pi}\int_0^{2\pi}\tr f^{(1)}_{\varepsilon}(\mtxsymfn{a}^{(1)}(x)) 
\cdots f^{(r)}_{\varepsilon}(\mtxsymfn{a}^{(r)}(x))\,\dd x.
\end{align}    
Combining \eqref{eqn:szego-block-poly proof1}--\eqref{eqn:szego-block-poly proof3} then yields
\eqref{eqn:szego-block-poly} by a straightforward argument.
\end{proof}

\begin{corollary}\label{cor:for-renyi-alpha-z}
In the setting of Theorem \ref{thm:szego-block-poly}, 
let $f^{(k)}$ be continuous functions on $\operatorname{conv}\bz\spec(A^{(k)})\jz$ for $k=1,\dots,r$, and define
    \begin{align}\label{eq:B def}
    	B_n := \prod_{k=1}^r f^{(k)}(A^{(k)}_n),\ds\ds n\in\bN,
    	\ds\ds\ds\ds\ds
    	\mtxfn{b}(x) := \prod_{k=1}^r f^{(k)}(\mtxsymfn{a}^{(k)}(x)),\ds\ds x\in\bT.
    \end{align}
Then, for any continuous function $g: [0, M] \to \bR$, where 
$M:= \prod_{k=1}^r \max_{s \in \conv(\spec(A^{(k)}))} |f^{(k)}(s)|^2$, we have
    \begin{equation}\label{eqn:for-renyi-alpha-z}
        \lim_{n\to\infty} \frac{1}{n} \tr g\left(B_n B_n^*\right) 
        = 
        \frac{1}{2\pi} \int_0^{2\pi} \tr g\left(\mtxfn{b}(x)\mtxfn{b}(x)^*\right) \,\dd x.
    \end{equation}
\end{corollary}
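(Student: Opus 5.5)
The plan is to reduce to the polynomial case and then approximate. First take $g(t)=t^m$, a monomial. Then $\tr (B_nB_n^*)^m$ is the trace of a product of $2rm$ factors. Each factor is either $f^{(k)}(A^{(k)}_n)$ or its adjoint $f^{(k)}(A^{(k)}_n)^*=\bar f^{(k)}(A^{(k)}_n)$, using that $A^{(k)}_n$ is self-adjoint. Likewise $\mtxfn{b}(x)\mtxfn{b}(x)^*$ expands into the same word in $f^{(k)}(\mtxsymfn{a}^{(k)}(x))$ and $\bar f^{(k)}(\mtxsymfn{a}^{(k)}(x))$.

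Theorem \ref{thm:szego-block-poly} is stated for a list of $r$ operators with one function each, but nothing prevents repeating operators in the list. So I would apply it to the list $A^{(k_1)},\dots,A^{(k_{2rm})}$, with the corresponding functions $f^{(k_j)}$ or $\bar f^{(k_j)}$. These are continuous functions on the appropriate convex hulls of spectra; complex-valued continuous functions are handled by splitting into real and imaginary parts and using linearity. This gives \eqref{eqn:for-renyi-alpha-z} for every monomial $g$, and then, by linearity, for every polynomial $g$.

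For a general continuous $g$ on $[0,M]$, the key point is that all the relevant spectra lie in $[0,M]$. Indeed, $B_nB_n^*\ge 0$ and
\[
\norm{B_nB_n^*}\le\prod_k\norm{f^{(k)}(A^{(k)}_n)}^2\le M,
\]
because $\spec(A^{(k)}_n)\subseteq\conv(\spec(A^{(k)}))$, as noted in the proof of Theorem \ref{thm:szego-block-poly}. The same bound holds for $\mtxfn{b}(x)\mtxfn{b}(x)^*$ for almost every $x$, since $\spec(\mtxsymfn{a}^{(k)}(x))\subseteq\spec(M_{\mtxsymfn{a}^{(k)}})=\spec(A^{(k)})$ for almost every $x$. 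This essential-range fact is the one step needing a brief justification; I would argue it via the measurable spectral projections of the matrix function.

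Given $\ep>0$, choose a polynomial $p$ by Weierstrass with $\max_{[0,M]}|g-p|<\ep$. The same Hölder estimate as in \eqref{eqn:holder-ineq-1} then gives
\[
\tfrac1n\left|\tr g(B_nB_n^*)-\tr p(B_nB_n^*)\right|\le d\ep,
\]
and similarly the integrand differs by at most $d\ep$ almost everywhere. Combining these two bounds with the polynomial case and letting $\ep\to0$ yields the claim by a standard $3\ep$ argument.

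I expect no serious obstacle. The only delicate points are the bookkeeping that Theorem \ref{thm:szego-block-poly} may be applied to words containing repeated operators and conjugated functions, and the spectral containment that ensures $g$ is only evaluated on $[0,M]$.
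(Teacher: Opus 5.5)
Your proposal is correct and follows essentially the same route as the paper. You approximate $g$ uniformly on $[0,M]$ by a polynomial, get the polynomial case by expanding into words in the $f^{(k)}(A^{(k)}_n)$ and invoking Theorem \ref{thm:szego-block-poly}, and close with the H\"older estimates and a $3\varepsilon$ argument. Along the way you make explicit two points the paper leaves implicit: $B_n^*$ contributes the factors $\bar f^{(k)}(A^{(k)}_n)$, with repeated operators allowed in the theorem, and the spectra of $B_nB_n^*$ and, for almost every $x$, of $\mtxfn{b}(x)\mtxfn{b}(x)^*$ lie in $[0,M]$.
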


\begin{proof}
    Since $g$ is continuous on the compact interval $[0,M]$, the Stone-Weierstrass theorem guarantees that for every $\varepsilon > 0$, there exists a polynomial $g_\varepsilon$ such that
    \begin{equation*}
        \|g - g_\varepsilon\|_\infty := \max_{y \in [0,M]} |g(y) - g_\varepsilon(y)| < \varepsilon.
    \end{equation*}
    
    For a fixed $\varepsilon > 0$, $g_\varepsilon(B_n B_n^*)$ evaluates to a finite sum of products of factors of the form $f^{(k)}(A^{(k)}_n)$. Therefore, Theorem \ref{thm:szego-block-poly} ensures 
    that there exists an $N_\varepsilon \in \bN$ such that for all $n \geq N_\varepsilon$, 
    \begin{equation}\label{for-renyi-alpha-z proof1}
        \left| \frac{1}{n} \tr g_\varepsilon\left(B_n B_n^*\right) - \frac{1}{2\pi} \int_0^{2\pi} \tr g_\varepsilon\left(\mtxfn{b}(x)\mtxfn{b}(x)^*\right) \,\dd x \right| < \varepsilon.
    \end{equation}
    
Using the H\"older inequality, we get
    \begin{align}
            \left| \frac{1}{n} \tr g\left(B_n B_n^*\right) - \frac{1}{n} \tr g_\varepsilon\left(B_n B_n^*\right) \right| 
        &\le \frac{1}{n} \left\| g\left(B_n B_n^*\right) - g_\varepsilon\left(B_n B_n^*\right) \right\|_1 \nn\\
        &\le \frac{1}{n} \|I_{nd}\|_1 \left\|g\left(B_n B_n^*\right) - g_\varepsilon\left(B_n B_n^*\right)\right\|_{\infty} \nn\\
        &\le \frac{nd}{n} \|g - g_\varepsilon\|_\infty < d\varepsilon.
\label{for-renyi-alpha-z proof2}
\end{align}
Similarly, 
\begin{align}
        &\left| \frac{1}{2\pi} \int_0^{2\pi} \tr g\left(\mtxfn{b}(x)\mtxfn{b}(x)^*\right) \,\dd x - \frac{1}{2\pi} \int_0^{2\pi} \tr g_\varepsilon\left(\mtxfn{b}(x)\mtxfn{b}(x)^*\right) \,\dd x \right| \nn\\
        &\le \frac{1}{2\pi} \int_0^{2\pi} \left\| g\left(\mtxfn{b}(x)\mtxfn{b}(x)^*\right) - g_\varepsilon\left(\mtxfn{b}(x)\mtxfn{b}(x)^*\right) \right\|_1 \,\dd x \nn\\
        &\le \frac{1}{2\pi} \int_0^{2\pi} d \left\|g\left(\mtxfn{b}(x)\mtxfn{b}(x)^*\right) - g_\varepsilon\left(\mtxfn{b}(x)\mtxfn{b}(x)^*\right)\right\|_{\infty} \,\dd x < d\varepsilon.
        \label{for-renyi-alpha-z proof3}
    \end{align}
    
Combining \eqref{for-renyi-alpha-z proof1}--\eqref{for-renyi-alpha-z proof3}, we get that for every $n\ge N_{\ep}$, 
    \begin{align*}
        &\left| \frac{1}{n} \tr g\left(B_n B_n^*\right) - \frac{1}{2\pi} \int_0^{2\pi} \tr g\left(\mtxfn{b}(x)\mtxfn{b}(x)^*\right) \,\dd x \right| \\
        &\le \left| \frac{1}{n} \tr g\left(B_n B_n^*\right) - \frac{1}{n} \tr g_\varepsilon\left(B_n B_n^*\right) \right| \\
        &\quad + \left| \frac{1}{n} \tr g_\varepsilon\left(B_n B_n^*\right) - \frac{1}{2\pi} \int_0^{2\pi} \tr g_\varepsilon\left(\mtxfn{b}(x)\mtxfn{b}(x)^*\right) \,\dd x \right| \\
        &\quad + \left| \frac{1}{2\pi} \int_0^{2\pi} \tr g_\varepsilon\left(\mtxfn{b}(x)\mtxfn{b}(x)^*\right) \,\dd x - \frac{1}{2\pi} \int_0^{2\pi} \tr g\left(\mtxfn{b}(x)\mtxfn{b}(x)^*\right) \,\dd x \right| \\
        &< d\varepsilon + \varepsilon + d\varepsilon = (2d + 1)\varepsilon.
    \end{align*}
Since this holds for any $\varepsilon > 0$ and $n\ge N_{\ep}$, 
 \eqref{eqn:for-renyi-alpha-z} follows.
\end{proof}

\begin{rem}
Theorem \ref{thm:szego-block-poly} combined with polynomial approximation can be used to prove various other Szeg\H o-type limit theorems, e.g. of the form
\begin{align}\label{eq:Szego for infty}
\lim_{n\to\infty}\frac{1}{n}\tr g\bz f^{(1)}(A^{(1)}_n)+\ldots+f^{(r)}(A^{(r)}_n)\jz
=
\frac{1}{2\pi}\int_0^{2\pi}\tr g\bz f^{(1)}(\mtxsymfn{a}^{(1)}(x))+\ldots+f^{(r)}(\mtxsymfn{a}^{(r)}(x))\jz \,\dd x,
\end{align}
or
\begin{align}\label{eq:Szego for max}
\lim_{n\to\infty}\frac{1}{n}\tr g_2\bz C_n g_1(B_nB_n^*)C_n^*)\jz
=
\frac{1}{2\pi}\int_0^{2\pi}\tr g_2\bz \mtxfn{c}(x)g_1(\mtxfn{b}(x)\mtxfn{b}(x)^*)\mtxfn{c}(x)^*\jz\,\dd x,
\end{align}
with $B_n$ and $\mtxfn{b}$ as in \eqref{eq:B def}, and 
    \begin{align*}
    	C_n := \prod_{k=r+1}^m f^{(k)}(A^{(k)}_n),\ds\ds n\in\bN,
    	\ds\ds\ds\ds\ds
    	\mtxfn{c}(x) := \prod_{k=r+1}^m f^{(k)}(\mtxsymfn{a}^{(k)}(x)),\ds\ds x\in\bT,
    \end{align*}
provided that all expressions are well defined, and all functions can be uniformly approximated by polynomials 
on the relevant domains. 
While it does not seem clear how to cast all such Szeg\H o-type theorems in a universal form, and therefore 
provide a single proof that would cover all such statements, the proof for each particular case goes the same way as in the proofs of Theorem \ref{thm:szego-block-poly} and Corollary \ref{cor:for-renyi-alpha-z}, with
obvious adaptations. Therefore, we omit the proofs of \eqref{eq:Szego for infty}
and \eqref{eq:Szego for max}, and leave it to the reader to verify that they are valid in the particular settings where we apply them later.
\end{rem}

\section{Rényi divergences and error exponents for quasi-free states}
\label{sec:Renyi}

\subsection{Finite dimension}
\label{sec:finitedim}

For two positive definite operators $A,B\in\B(\hil)\pp$ and $\alpha\in\bR$, let
\begin{align}
\req_{\alpha,z}^{\op}(A\|B)&:=\bz A^{\frac{\alpha}{2z}}B^{\frac{1-\alpha}{z}}A^{\frac{\alpha}{2z}}\jz^z,
\ds\ds\ds z\in(0,+\infty),\\
\req_{\alpha,+\infty}^{\op}(A\|B)&:=\lim_{z\to+\infty}\req^{\op}_{\alpha,z}(A\|B)
=e^{\alpha\log A+(1-\alpha)\log B},\\
\req^{\op}_{\alpha,\geo}(A\|B)&:=B^{1/2}\bz B^{-1/2}AB^{-1/2}\jz^{\alpha}B^{1/2}
=A^{1/2}\bz A^{-1/2}BA^{-1/2}\jz^{1-\alpha}A^{1/2},\label{eq:maxRenyi def}
\end{align}
and for every $\tp\in(0,+\infty]\cup\{\geo\}$, let 
\begin{align*}
\req_{\alpha,\tp}(A\|B)&:=\Tr\req^{\op}_{\alpha,\tp}(A\|B),\\
\psi_{\alpha,\tp}(A\|B)&:=\log \req_{\alpha,\tp}(A\|B).
\end{align*}
The equality in \eqref{eq:maxRenyi def} is well known and easy to verify; see, e.g., 
\cite[Lemma 2.1]{HiaiMosonyi2017}.

From these, we define two different notions of R\'enyi $\alpha$-divergences. The first one
is inspired by \cite{Farooq_etal2024}, and is defined for every $\alpha\in\bR\setminus\{0,1\}$ and 
$\tp\in(0,+\infty]\cup\{\geo\}$ as
\begin{align*}
\redi_{\alpha,\tp}(A\|B):=\frac{1}{\oll\alpha-1}\psi_{\alpha,\tp}\bz\frac{A}{\Tr A}\Big\|\frac{B}{\Tr B}\jz,
\end{align*}
where
\begin{align*}
\oll\alpha:=\max\{\alpha,1-\alpha\},\ds\ds\ds\alpha\in\bR\setminus\{0,1\}.
\end{align*}
The second one, with the more common normalization, is defined for 
every $\alpha\in(0,1)\cup(1,+\infty)$ and $\tp\in(0,+\infty]\cup\{\geo\}$ as 
\begin{align}
\red_{\alpha,\tp}(A\|B):=\frac{1}{\alpha-1}\psi_{\alpha,\tp}(A\|B)-\frac{1}{\alpha-1}\log\Tr A
=
\frac{\oll\alpha-1}{\alpha-1}\redi_{\alpha,\gamma}(A\|B)+\log\Tr A-\log\Tr B.
\label{eq:Renyi scaling2}
\end{align}


For any $\gamma\in(0,+\infty]\cup\{\geo\}$, the above definitions can be extended to 
pairs of non-zero PSD operators $\rho,\sigma\in\B(\hil)\pne$ by 
\begin{align*}
\Delta_{\alpha,\gamma}(\rho\|\sigma):=\lim_{\ep\searrow 0}\Delta_{\alpha,\gamma}(\rho+\ep I\|\sigma+\ep I),
\end{align*}
where $\Delta$ may stand for $\req,\psi,\red$ or $\redi$.
This definition is consistent in the sense that the above definition becomes an identity for pairs of positive definite operators. 

It is straightforward to verify that for any 
$\alpha\in\bR\setminus\{0,1\}$
and any $\gamma\in(0,+\infty]\cup\{\geo\}$, 
$\req_{\alpha,\tp}$ is a 
quantum R\'enyi $\alpha$-quantity in the following sense: 
for any two commuting operators $\rho,\sigma$, both diagonal in some orthonormal basis 
$(\ket{\omega})_{\omega\in\Omega}$
as $\rho=\sum_{\omega\in\Omega}\tilde\rho(\omega)\pr{\omega}$, 
$\sigma=\sum_{\omega\in\Omega}\tilde\sigma(\omega)\pr{\omega}$, 
\begin{align*}
\req_{\alpha,\gamma}(\rho\|\sigma)&=\req_{\alpha}\bz\tilde\rho\|\tilde\sigma\jz:=
\lim_{\ep\searrow 0}\sum_{\omega\in\Omega}(\tilde\rho(\omega)+\ep)^{\alpha}(\tilde\sigma(\omega)+\ep)^{1-\alpha},\ds\ds\ds\gamma\in(0,+\infty]\cup\{\geo\},
\end{align*}
where $\req_{\alpha}(\tilde\rho\|\tilde\sigma)$ is the classical R\'enyi $\alpha$-quantity
\cite{Renyi} of the non-negative functions $\tilde\rho$ and $\tilde\sigma$.
Likewise,
$\red_{\alpha,\tp}$ is a quantum R\'enyi $\alpha$-divergence for every $\alpha\in(0,1)\cup(1,+\infty)$, and  
$\redi_{\alpha,\tp}$ is a quantum R\'enyi $\alpha$-divergence for every $\alpha\in\bR\setminus\{0,1\}$, 
with different normalizations.

We will call both 
$\red_{\alpha,z}$ and $\redi_{\alpha,z}$ \ki{R\'enyi $(\alpha,z)$-divergences} \cite{AD,JOPP}, 
$\red_{\alpha,+\infty}$ and $\redi_{\alpha,+\infty}$ \ki{log-Euclidean R\'enyi divergences} \cite{MO-cqconv}, 
and $\red_{\alpha,\geo}$ and $\redi_{\alpha,\geo}$ \ki{geometric R\'enyi divergences} \cite{Matsumoto_newfdiv,HiaiMosonyi2017}.
The special cases $\red_{\alpha,1}$ and $\redi_{\alpha,1}$ are called \ki{Petz-type} 
R\'enyi divergences \cite{P86}, and 
$\red_{\alpha,\alpha}$ and $\redi_{\alpha,\alpha}$
sandwiched R\' enyi divergences \cite{Renyi_new,WWY}.

A straightforward computation shows that 
\begin{align}
\redi_{1,\geo}(\rho\|\sigma)
&:=
\lim_{\alpha\to 1}\redi_{\alpha,\geo}(\rho\|\sigma)
=
D_{\geo}\bz\frac{\rho}{\Tr \rho}\Big\|\frac{\sigma}{\Tr \sigma}\jz
=
\frac{1}{\Tr \rho}D_{\geo}(\rho\|\sigma)-\log\Tr \rho+\log\Tr \sigma,
\label{eq:BS def1}\\
&=
\lim_{\alpha\to 1}\red_{\alpha,\geo}(\rho\|\sigma)-\log\Tr \rho+\log\Tr \sigma
=:
\red_{1,\geo}(\rho\|\sigma)-\log\Tr \rho+\log\Tr \sigma,
\end{align}
where 
\begin{align}
D_{\geo}(\rho\|\sigma)
&:=
\Tr \rho\log (\rho^{1/2}\sigma\inv \rho^{1/2})\label{eq:BS def0}\\
&=
\Tr \sigma^{1/2}\rho \sigma^{-1/2}\log (\sigma^{-1/2}\rho \sigma^{-1/2}),
\label{eq:BS def2}
\end{align}
when $\rho^0\le\sigma^0$ and $D_{\geo}(\rho\|\sigma):=+\infty$ otherwise,
is the \ki{Belavkin-Staszewski relative entropy}
\cite{BS} of $\rho$ and $\sigma$.
It is also known \cite{LinTomamichel15,Mosonyi_Hiai_Renyi_cont} that for any function
$(1-\delta,1+\delta)\ni\alpha\mapsto z(\alpha)\in(0,+\infty]$ with $\liminf_{\alpha\to 1}z(\alpha)>0$, 
\begin{align}
\lim_{\alpha\to 1}\redi_{\alpha,z(\alpha)}(\rho\|\sigma)
=
D\bz\frac{\rho}{\Tr \rho}\Big\|\frac{\sigma}{\Tr \sigma}\jz
&=
\frac{1}{\Tr \rho}D(\rho\|\sigma)-\log\Tr \rho+\log\Tr \sigma\label{eq:Umegaki def}\\
&=
\lim_{\alpha\to 1}\red_{\alpha,z(\alpha)}(\rho\|\sigma)-\log\Tr \rho+\log\Tr \sigma
\end{align}
where 
\begin{align}
D(\rho\|\sigma):=\Tr \rho(\log \rho-\log \sigma),\label{eq:Umegaki def2}
\end{align}
when $\rho^0\le\sigma^0$ and $D(\rho\|\sigma):=+\infty$ otherwise,
is the \ki{Umegaki relative entropy} \cite{Umegaki} of $\rho$ and $\sigma$.
The easily verifiable symmetry relation 
\begin{align}\label{eq:Renyi symmetry}
\redi_{\alpha,\gamma}(\rho\|\sigma)=\redi_{1-\alpha,\gamma}(\sigma\|\rho),\ds\ds\ds
\alpha\in\bR\setminus\{0,1\},\ds\gamma\in(0,+\infty]\cup\{\geo\},
\end{align}
then yields
\begin{align*}
\redi_{0,\geo}(\rho\|\sigma):=\lim_{\alpha\to 0}\redi_{\alpha,\geo}(\rho\|\sigma)
&=
D_{\geo}\bz\frac{\sigma}{\Tr \sigma}\Big\|\frac{\rho}{\Tr \rho}\jz,
\end{align*}
and for any function
$(-\delta,\delta)\ni\alpha\mapsto z(\alpha)\in(0,+\infty]$ with $\liminf_{\alpha\to 0}z(\alpha)>0$,
\begin{align*}
\lim_{\alpha\to 0}\redi_{\alpha,z(\alpha)}(\rho\|\sigma)
&=
D\bz\frac{\sigma}{\Tr \sigma}\Big\|\frac{\rho}{\Tr \rho}\jz.
\end{align*}
We will thus use the notations 
\begin{align*}
\redi_{1,z}(\rho\|\sigma)&:=D\bz\frac{\rho}{\Tr \rho}\Big\|\frac{\sigma}{\Tr \sigma}\jz,
& &
\redi_{0,z}(\rho\|\sigma):=D\bz\frac{\sigma}{\Tr \sigma}\Big\|\frac{\rho}{\Tr \rho}\jz,\\
\red_{1,z}(\rho\|\sigma)&:=\frac{1}{\Tr\rho}D\bz\rho\|\sigma\jz,
\end{align*}
for any $z\in(0,+\infty]$.

Finally, we note that for any fixed $\rho,\sigma\in\B(\hil)\pne$ and any $\tp\in(0,+\infty]$,
$\alpha\mapsto \redi_{\alpha,\tp}(\rho\|\sigma)$ is monotone decreasing on 
$(-\infty,1/2]$ and monotone increasing on $[1/2,+\infty)$, 
while $\alpha\mapsto \red_{\alpha,\tp}(\rho\|\sigma)$
is monotone increasing on $(0,+\infty)$; see \cite{HiaiJencova_az}.
Moreover, $\redi_{\alpha,\tp}(\rho\|\sigma)\ge 0$
for every $\alpha\in\bR$ and  $\tp\in(0,+\infty]\cup\{\geo\}$, with equality if and only if 
$\rho/\Tr\rho=\sigma/\Tr\sigma$.

\smallskip

In Lemma \ref{lemma:finite-dim psi} below, we give explicit formulas for the above R\'enyi divergences of two quasi-free states in terms of their symbols. For $D_{\alpha,+\infty}$, we will use the following simple identities.

\begin{lemma}\label{lemma:Fock exponentials}
Let $A_1,\ldots, A_r$ be positive definite operators on a finite-dimensional Hilbert space $\kil$ and 
$\alpha_1,\ldots,\alpha_r\in\bR$. Then 
\begin{align}
\exp\bz\sum_{i=1}^r\alpha_i \log A_i^{\wedge k}\jz
=
\bz\exp\bz\sum_{i=1}^r\alpha_i \log A_i\jz\jz^{\wedge k}
\label{eq:wedge exponential1}
\end{align}
for any $k=1,\ldots,\dim\kil$, and
\begin{align}
&\exp\bz\sum_{i=1}^r\alpha_i \log \Gamma(A_i)\jz
=
\Gamma\bz\exp\bz\sum_{i=1}^r\alpha_i \log A_i\jz\jz.
\label{eq:wedge exponential2}
\end{align}
\end{lemma}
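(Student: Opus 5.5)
The plan is to reduce both identities to the multiplicativity of the maps $A\mapsto A^{\wedge k}$ and $A\mapsto\Gamma(A)$, via a derivation-type identity for logarithms. The key observation is that for a positive definite $A$ on $\kil$, the operator $A^{\otimes k}$ restricted to $\kil^{\wedge k}$ has logarithm
\begin{align*}
\log A^{\wedge k}=\Bigl(\sum_{j=1}^k I^{\otimes (j-1)}\otimes\log A\otimes I^{\otimes(k-j)}\Bigr)\Big\vert_{\kil^{\wedge k}}=:\bigl(d\Gamma_k(\log A)\bigr)\big\vert_{\kil^{\wedge k}}.
\end{align*}
To see this, write $A=\sum_j a_j\pr{e_j}$. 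Then the vectors $e_{i_1}\wedge\ldots\wedge e_{i_k}$ form an orthonormal eigenbasis of $A^{\wedge k}$ with eigenvalue $\prod_m a_{i_m}$. The same vectors are eigenvectors of the right-hand side, with eigenvalue $\sum_m\log a_{i_m}$. Alternatively, one notes $A^{\wedge k}=\exp(d\Gamma_k(\log A))\vert_{\kil^{\wedge k}}$, since the summands commute and $d\Gamma_k(\cdot)$ leaves the antisymmetric subspace invariant, and then takes the logarithm of this positive definite operator.

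Next, I will use that $X\mapsto d\Gamma_k(X)$ is linear and that its restriction to $\kil^{\wedge k}$ commutes with the restriction operation. This gives
\begin{align*}
\sum_{i=1}^r\alpha_i\log A_i^{\wedge k}=d\Gamma_k(H)\big\vert_{\kil^{\wedge k}},\ds\ds H:=\sum_{i=1}^r\alpha_i\log A_i.
\end{align*}
Exponentiating is legitimate because $\kil^{\wedge k}$ is invariant under $d\Gamma_k(H)$, so the functional calculus commutes with restriction. Since the $k$ summands in $d\Gamma_k(H)$ pairwise commute, we get $\exp(d\Gamma_k(H))=(e^{H})^{\otimes k}$. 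Restricting to $\kil^{\wedge k}$ yields $(e^H)^{\wedge k}$, which is \eqref{eq:wedge exponential1}.

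For \eqref{eq:wedge exponential2}, note that $\Gamma(A_i)=\medoplus_k A_i^{\wedge k}$ is block diagonal with positive definite blocks, with the $k=0$ block equal to $1$. Logarithms, linear combinations and exponentials of block-diagonal self-adjoint operators are computed blockwise, and the $k=0$ block gives $\exp(\sum_i\alpha_i\log 1)=1=(e^H)^{\wedge 0}$. Applying \eqref{eq:wedge exponential1} to each block $k\ge1$ then gives \eqref{eq:wedge exponential2}.

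There is no real obstacle here. The only point needing care is the justification that the logarithm of the restriction equals the restriction of the logarithm. I would handle it via the invariance of $\kil^{\wedge k}$ under every function of $d\Gamma_k(\log A)$; this holds because $d\Gamma_k(\log A)$ commutes with the permutation unitaries, hence with the antisymmetrizing projection.
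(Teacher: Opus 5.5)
Your proposal is correct and follows essentially the same route as the paper: both write $\log A_i^{\wedge k}$ as the restriction to $\kil^{\wedge k}$ of $\sum_{j=1}^k (\log A_i)\otimes I_{[k]\setminus\{j\}}$, combine the $\alpha_i$-weighted sum by linearity, exponentiate, and then pass to $\Gamma$ blockwise. Your extra justifications, namely the common eigenbasis $e_{i_1}\wedge\ldots\wedge e_{i_k}$ and the invariance of $\kil^{\wedge k}$ that lets the functional calculus commute with restriction, just make explicit steps the paper leaves implicit.
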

\begin{proof}
We have
\begin{align*}
\sum_{i=1}^r\alpha_i\log A_i^{\wedge k}
=\sum_{i=1}^r(\alpha_i\log A_i^{\otimes k})\big\vert_{\wedge^k\kil}
&=
\sum_{i=1}^r\left.\left(\sum_{j=1}^k(\alpha_i\log A_i)\otimes I_{[k]\setminus\{j\}}\right)\right\vert_{\wedge^k\kil}\\
&=
\left.\left(\sum_{j=1}^k\bz\sum_{i=1}^r\alpha_i\log A_i\jz\otimes I_{[k]\setminus\{j\}}\right)\right\vert_{\wedge^k\kil},
\end{align*}
where $X\otimes I_{[k]\setminus\{j\}}$ is the canonical embedding of $X\in\B(\kil)$ into the 
$j$-th tensor component of $\B(\kil)^{\otimes k}$.
Taking the exponential of the first and the last expressions yields \eqref{eq:wedge exponential1}, and 
\eqref{eq:wedge exponential2} follows immediately.
\end{proof}

\begin{lemma}\label{lemma:finite-dim psi}
Let $\hil$ be a finite-dimensional Hilbert space and $Q,R\in\B(\hil)_{(0,1)}$
with corresponding quasi-free states $\omega_Q,\omega_R$ on $\car{\hil}$.
For any 
$\alpha\in\bR$ and any $\tp\in(0,+\infty]\cup\{\geo\}$, 
\begin{align}
\psi_{\alpha,\tp}(\what\omega_Q\|\what\omega_R)
&=
\psi_{\alpha,\tp}^{\qf}(Q\|R)\nn\\
&:=
\alpha\Tr\log(I - Q) + (1-\alpha)\Tr \log (I- R)
+\Tr\log \left(I + \req_{\alpha,\gamma}^{\op}\bz\frac{Q}{I-Q}\Big\|\frac{R}{I-R}\jz\right).
\label{eq:finite-dim psi}
\end{align}
\end{lemma}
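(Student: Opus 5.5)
We plan to use the factorization \eqref{eq:quasifree density}: since $Q,R\in\B(\hil)_{(0,1)}$, we have $\what\omega_Q=\det(I-Q)\,\fockop{\W{Q}}$ and $\what\omega_R=\det(I-R)\,\fockop{\W{R}}$. Here $\W{Q},\W{R}$ are positive definite, so both density operators are positive definite. Every quantity $\req^{\op}_{\alpha,\tp}$ is therefore given directly by its defining formula, and no $\ep$-regularization is needed.

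The first step is to pull out the scalar prefactors. Each $\req^{\op}_{\alpha,\tp}$ is jointly homogeneous: $\req^{\op}_{\alpha,\tp}(aA\|bB)=a^{\alpha}b^{1-\alpha}\req^{\op}_{\alpha,\tp}(A\|B)$ for $a,b>0$. This is immediate for $z\in(0,+\infty)$ from the power structure, for $\tp=+\infty$ from $\log(aA)=\log a\,I+\log A$, and for $\geo$ from the first expression in \eqref{eq:maxRenyi def}. Taking $\Tr$ and $\log$, and using $\log\det(I-Q)=\Tr\log(I-Q)$, produces the first two terms of \eqref{eq:finite-dim psi}. It then remains to show
\[
\req^{\op}_{\alpha,\tp}\bz\fockop{\W{Q}}\big\|\fockop{\W{R}}\jz=\fockop{\req^{\op}_{\alpha,\tp}(\W{Q}\|\W{R})}.
\]
Once this holds, \eqref{eq:Fockop trace} gives trace $\det\bz I+\req^{\op}_{\alpha,\tp}(\W{Q}\|\W{R})\jz$, whose logarithm is the third term.

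For the displayed identity we treat three cases.

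\emph{Case $z\in(0,+\infty)$.} We use the multiplicativity rule $\fockop{A}^x\fockop{B}^y=\fockop{A^xB^y}$ for positive definite $A,B$, together with $\fockop{A}^x=\fockop{A^x}$. This gives $\fockop{\W{Q}}^{\alpha/2z}\fockop{\W{R}}^{(1-\alpha)/z}\fockop{\W{Q}}^{\alpha/2z}=\fockop{C}$ with $C=\W{Q}^{\alpha/2z}\W{R}^{(1-\alpha)/z}\W{Q}^{\alpha/2z}>0$. Then $\fockop{C}^z=\fockop{C^z}$, because $\Gamma$ commutes with functional calculus on positive operators: $C^{\wedge k}$ is diagonal in the wedge basis of eigenvectors of $C$.

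\emph{Case $\tp=+\infty$.} This is exactly \eqref{eq:wedge exponential2} of Lemma \ref{lemma:Fock exponentials}, applied with $A_1=\W{Q}$, $A_2=\W{R}$, $\alpha_1=\alpha$, $\alpha_2=1-\alpha$.

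\emph{Case $\geo$.} We write $\fockop{\W{R}}^{1/2}\bz\fockop{\W{R}}^{-1/2}\fockop{\W{Q}}\fockop{\W{R}}^{-1/2}\jz^{\alpha}\fockop{\W{R}}^{1/2}$. The inner product is $\fockop{\W{R}^{-1/2}\W{Q}\W{R}^{-1/2}}$, again by multiplicativity. Its $\alpha$-power is $\Gamma$ of the $\alpha$-power, since the operator is positive definite. A final application of multiplicativity with $\Gamma(\W{R}^{1/2})$ on both sides gives the claim. The general product rule $\Gamma(X)\Gamma(Y)=\Gamma(XY)$ holds for arbitrary operators, because $(XY)^{\otimes k}=X^{\otimes k}Y^{\otimes k}$ restricts to the antisymmetric subspace. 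So these manipulations are justified.

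The main obstacle is only bookkeeping. One must check that functional calculus commutes with $\Gamma$ for non-integer powers, and this is where the eigenbasis argument is needed. One must also check that the homogeneity is used correctly in the $\geo$ case; this is easiest to see by writing $B^{1/2}(B^{-1/2}AB^{-1/2})^{\alpha}B^{1/2}$ and scaling $A$ and $B$ separately.
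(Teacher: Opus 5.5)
Your proposal is correct and follows essentially the same route as the paper. Both factor $\what\omega_Q=\det(I-Q)\,\Gamma(W_Q)$, pull out the scalar prefactors, and commute $\Gamma$ with each operator expression case by case (using Lemma~\ref{lemma:Fock exponentials} for $\gamma=+\infty$ and the rule $\Gamma(A)^x\Gamma(B)^y=\Gamma(A^xB^y)$ otherwise), then conclude with $\Tr\Gamma(A)=\det(I+A)$. Your explicit homogeneity check and your justification that $\Gamma$ commutes with non-integer functional calculus are details the paper leaves implicit.
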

\begin{proof}
According to \eqref{eq:quasifree density}, we can write $\what\omega_{Q} = \det(I - Q) \Gamma(\W{Q})$ 
and $\what\omega_{R} = \det(I - R) \Gamma(\W{R})$, where 
$\W{Q} = Q(I - Q)^{-1}$ and $\W{R} = R(I - R)^{-1}$.
    
Let us start with $\tp=z\in(0,+\infty)$. Then 
\begin{align}
\psi_{\alpha,z}\left(\what\omega_{Q}\|\what\omega_{R}\right)
&= 
\log \tr\left[\left(\what\omega_{Q}^{\frac{\alpha}{2z}} \what\omega_{R}^{\frac{1-\alpha}{z}} \what\omega_{Q}^{\frac{\alpha}{2z}}\right)^z\right]\nn\\
&=
\log \det(I - Q)^\alpha + \log \det(I - R)^{1-\alpha}+
\log\Tr  \bz\Gamma(\W{Q})^{\frac{\alpha}{2z}} \Gamma(\W{R})^{\frac{1-\alpha}{z}} \Gamma(\W{Q})^{\frac{\alpha}{2z}}\jz^z,\label{eq:finite-dim psi proof1}
\end{align}
and
\begin{align}
&\Tr  \bz\Gamma(\W{Q})^{\frac{\alpha}{2z}} \Gamma(\W{R})^{\frac{1-\alpha}{z}} \Gamma(\W{Q})^{\frac{\alpha}{2z}}\jz^z\nn\\
&\ds=
\Tr  \Gamma\bz \W{Q}^{\frac{\alpha}{2z}}\W{R}^{\frac{1-\alpha}{z}}\W{Q}^{\frac{\alpha}{2z}} \jz^z
=
\Tr  \Gamma\bz\req_{\alpha,z}^{\op}\bz\W{Q}\|\W{R}\jz\jz=
\det\left(I+ \req_{\alpha,z}^{\op}\bz\W{Q}\|\W{R}\jz\right),
\label{eq:finite-dim psi proof2}
\end{align}
where we used \eqref{eq:Fockop trace} in the last step.
Using also that $\log\det=\Tr\log$ on positive definite operators, 
\eqref{eq:finite-dim psi proof1}--\eqref{eq:finite-dim psi proof2} yield \eqref{eq:finite-dim psi}.
     
Next, let us consider $\tp=+\infty$. Then 
\begin{align}
\psi_{\alpha,+\infty}(\what\omega_{Q}\|\what\omega_{R})
=&
\log\Tr\exp\bz\alpha\log\what\omega_{Q}+(1-\alpha)\log\what\omega_{R}\jz\nn\\
=&
\log \det(I-Q)^\alpha +\log\det(I-R)^{1-\alpha}\nn\\
&+\log\Tr\exp\bz\alpha\log\Gamma(\W{Q})+(1-\alpha)\log\Gamma(\W{R})\jz,
\label{eq:finite-dim psi proof3}
\end{align}
and
\begin{align}
\Tr\exp\bz\alpha\log\Gamma(\W{Q})+(1-\alpha)\log\Gamma(\W{R})\jz
&=
\Tr\Gamma\bz\exp\bz\alpha\log\W{Q}+(1-\alpha)\log\W{R}\jz\jz\nn\\
&=
\det\bz I+\req_{\alpha,+\infty}^{\op}\bz\W{Q}\|\W{R}\jz\jz,
\label{eq:finite-dim psi proof4}
\end{align}
where the first equality is by 
\eqref{eq:wedge exponential2}, and the second equality is due to \eqref{eq:Fockop trace}.
Replacing again all $\log\det$ with $\Tr\log$, 
\eqref{eq:finite-dim psi proof3}--\eqref{eq:finite-dim psi proof4}
yield \eqref{eq:finite-dim psi}.

Finally, we consider the case $\tp=\geo$. Then 
\begin{align}
\psi_{\alpha,\geo}(\what\omega_{Q}\|\what\omega_{R})&=
\log\Tr\what\omega_{R}^{1/2}\bz\what\omega_{R}^{-1/2}\what\omega_{Q}\what\omega_{R}^{-1/2}\jz^{\alpha}\omega_{R}^{1/2}\nn\\
&=
\log \det(I - Q)^\alpha +  \log \det(I - R)^{1-\alpha}\nn\\
&\ds\ds+\log\Tr\Gamma(\W{R})^{1/2}\bz\Gamma(\W{R})^{-1/2}\Gamma(\W{Q})
\Gamma(\W{R})^{-1/2}\jz^{\alpha}\Gamma(\W{R})^{1/2},
\label{eq:finite-dim psi proof5}
\end{align}
and
\begin{align}
&\Tr\Gamma(\W{R})^{1/2}\bz\Gamma(\W{R})^{-1/2}\Gamma(\W{Q})
\Gamma(\W{R})^{-1/2}\jz^{\alpha}\Gamma(\W{R})^{1/2}\nn\\
&\ds=
\Tr\Gamma\bz\W{R}^{1/2}\bz\W{R}^{-1/2}\W{Q}^{1/2}\W{R}^{-1/2}\jz^{\alpha}\W{R}^{1/2}\jz\nn\\
&\ds=
\Tr\Gamma\bz\req_{\alpha,\geo}^{\op}(\W{Q}\|\W{R})\jz=
\det(I+\req_{\alpha,\geo}^{\op}(\W{Q}\|\W{R})).
\label{eq:finite-dim psi proof6}
\end{align}
Replacing again all $\log\det$ with $\Tr\log$, 
\eqref{eq:finite-dim psi proof5}--\eqref{eq:finite-dim psi proof6}
yield \eqref{eq:finite-dim psi}.
\end{proof}

For the Petz-type R\'enyi quantities one may obtain a different expression as follows.

\begin{lemma}
In the setting of Lemma \ref{lemma:finite-dim psi},
\begin{align*}
\psi_{\alpha,1}\left(\what\omega_{Q}\|\what\omega_{R}\right)
=
\log\det\left[ Q^{\alpha}R^{1-\alpha}+(I-Q)^{\alpha}(I-R)^{1-\alpha}\right],\ds\ds\ds
\alpha\in\bR.
\end{align*}
\end{lemma}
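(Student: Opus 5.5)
Start from Lemma \ref{lemma:finite-dim psi} with $\tp=1$, which gives
\begin{align*}
\psi_{\alpha,1}(\what\omega_Q\|\what\omega_R)
=
\log\det(I-Q)^{\alpha}+\log\det(I-R)^{1-\alpha}+\log\det\bz I+\W{Q}^{\alpha}\W{R}^{1-\alpha}\jz,
\end{align*}
since $\req^{\op}_{\alpha,1}(A\|B)=A^{\alpha}B^{1-\alpha}$. Here we use $\Tr\log=\log\det$ on positive definite operators; the operator $I+\W{Q}^{\alpha}\W{R}^{1-\alpha}$ need not be self-adjoint, but it is similar to the positive definite operator $I+\W{Q}^{\alpha/2}\W{R}^{1-\alpha}\W{Q}^{\alpha/2}$, so its determinant is positive. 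In fact, it is cleaner to write the last term directly as $\log\det(I+\W{Q}^{\alpha}\W{R}^{1-\alpha})$, which the proof of Lemma \ref{lemma:finite-dim psi} effectively gives through \eqref{eq:Fockop trace}.

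Next, combine the three determinants by multiplicativity. I would write
\begin{align*}
\det(I-Q)^{\alpha}\det\bz I+\W{Q}^{\alpha}\W{R}^{1-\alpha}\jz\det(I-R)^{1-\alpha}
=
\det\Bigl[(I-Q)^{\alpha}\bz I+\W{Q}^{\alpha}\W{R}^{1-\alpha}\jz(I-R)^{1-\alpha}\Bigr],
\end{align*}
where $\det(I-Q)^{\alpha}=\det\bz(I-Q)^{\alpha}\jz$, since both equal the product of the eigenvalues raised to the power $\alpha$. Then expand the product. The operators $Q$ and $I-Q$ commute, so $(I-Q)^{\alpha}\W{Q}^{\alpha}=(I-Q)^{\alpha}Q^{\alpha}(I-Q)^{-\alpha}=Q^{\alpha}$. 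Similarly, $\W{R}^{1-\alpha}(I-R)^{1-\alpha}=R^{1-\alpha}$. This yields $(I-Q)^{\alpha}(I-R)^{1-\alpha}+Q^{\alpha}R^{1-\alpha}$, and taking the logarithm gives the claim.

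The only step needing care is the manipulation of non-self-adjoint products inside the determinant, namely the use of $\det(XY)=\det X\det Y$ and the positivity needed for the logarithm to be real. Multiplicativity of the determinant holds for arbitrary matrices, so nothing beyond it is needed. Positivity follows from the similarity noted above, or alternatively from the equality itself, since the left-hand side is the logarithm of a positive trace. The formula holds for all $\alpha\in\bR$, because $Q,R\in\B(\hil)_{(0,1)}$ makes every power well defined.
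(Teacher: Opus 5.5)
Your proposal is correct and follows essentially the paper's route: reduce $\psi_{\alpha,1}$ to $\log\bigl[\det(I-Q)^{\alpha}\det(I-R)^{1-\alpha}\det(I+\W{Q}^{\alpha}\W{R}^{1-\alpha})\bigr]$ via the Fock-space factorization and \eqref{eq:Fockop trace}, then merge the determinants by multiplicativity. You also spell out the final algebra that the paper leaves implicit, namely $(I-Q)^{\alpha}\W{Q}^{\alpha}=Q^{\alpha}$ and $\W{R}^{1-\alpha}(I-R)^{1-\alpha}=R^{1-\alpha}$. One small slip: by the paper's definition $\req^{\op}_{\alpha,1}(A\|B)=A^{\alpha/2}B^{1-\alpha}A^{\alpha/2}$, not $A^{\alpha}B^{1-\alpha}$; as you note yourself, $I+\W{Q}^{\alpha/2}\W{R}^{1-\alpha}\W{Q}^{\alpha/2}$ and $I+\W{Q}^{\alpha}\W{R}^{1-\alpha}$ are similar, so their determinants agree and nothing breaks.
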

\begin{proof}
Follows by a straightforward computation, as
\begin{align*}
\psi_{\alpha,1}\left(\what\omega_{Q}\|\what\omega_{R}\right)
&= 
\log \tr\bz\what\omega_{Q}^{\alpha} \what\omega_{R}^{1-\alpha}\jz\\
&=
\log \det(I - Q)^\alpha + \log \det(I - R)^{1-\alpha}+
\log\Tr  \bz\Gamma(\W{Q})^{\alpha} \Gamma(\W{R})^{1-\alpha}\jz\\
&=
\log \det(I - Q)^\alpha + \log \det(I - R)^{1-\alpha}+
\log\Tr \Gamma\bz\W{Q}^{\alpha} \W{R}^{1-\alpha}\jz\\
&=
\log \det(I - Q)^\alpha + \log \det(I - R)^{1-\alpha}+
\log\det\bz I+\W{Q}^{\alpha}\W{R}^{1-\alpha}\jz\\
&=
\log\det\left[ Q^{\alpha}R^{1-\alpha}+(I-Q)^{\alpha}(I-R)^{1-\alpha}\right].
\end{align*}
\end{proof}

\begin{rem}
Note that $\psi_{\alpha,\tp}^{\qf}(Q\|R)$ in \eqref{eq:finite-dim psi} may be written as
\begin{align*}
\psi_{\alpha,\tp}^{\qf}(Q\|R)
&=
\log\underbrace{\left[\det(I-Q)^{\alpha}\det(I-R)^{1-\alpha}\det\bz I + \req_{\alpha,\gamma}^{\op}\bz\frac{Q}{I-Q}\Big\|\frac{R}{I-R}\jz\jz\right]}_{=:\req_{\alpha,\tp}^{\qf}(Q\|R)},
\end{align*}
where
\begin{align*}
\req_{\alpha,\tp}^{\qf}(Q\|R)
&=
\det\Bigg[
(I-R)^{\frac{1-\alpha}{2}}(I-Q)^{\frac{\alpha}{2}}
\Bigg[I+
\req_{\alpha,\gamma}^{\op}\bz\frac{Q}{I-Q}\Big\|\frac{R}{I-R}\jz\Bigg]
(I-Q)^{\frac{\alpha}{2}}(I-R)^{\frac{1-\alpha}{2}}\Bigg]\,.
\end{align*}
\end{rem}

\begin{rem}
The log-Euclidean R\'enyi divergence can be defined for multiple positive definite operators 
$A_1,\ldots,A_r\in\B(\kil)\pp$ and positive weights $\alpha_1,\ldots,\alpha_r$ summing to $1$, and expressed via a variational formula, as
\begin{align}\label{eq:multivariate log-Euclidean}
\psi_{\vecc{\alpha}}(A_1,\ldots,A_r)
:=\log\Tr\exp\bz\sum_{i=1}^r\alpha_i \log A_i\jz
=
-\min_{\omega\in\S(\kil)}\sum_{i=1}^rD(\omega\|A_i),
\end{align}
where the unique optimal $\omega$ is given by
\begin{align*}
\oll\omega_{\vecc{\alpha}}=\frac{\exp\bz\sum_{i=1}^r\alpha_i \log A_i\jz}{\Tr\exp\bz\sum_{i=1}^r\alpha_i \log A_i\jz};
\end{align*}
see \cite{mosonyi2022geometric,MO-cqconv} for details. It is worth noting that if $A_i=\what\omega_{Q_i}$ are quasi-free states with 
symbols $Q_i\in\B(\hil)_{(0,1)}$, where $\hil$ is finite-dimensional, then for 
any $\vecc{\alpha}$, $\oll\omega_{\vecc{\alpha}}$ is also quasi-free with symbol 
\begin{align*}
\oll Q=f\inv\bz\exp\bz\sum_{i=1}^r\alpha_i \log\frac{Q_i}{I-Q_i}\jz\jz=
I-\bz I+\exp\bz\sum_{i=1}^r\alpha_i \log\frac{Q_i}{I-Q_i}\jz\jz\inv,
\end{align*}
where $f(x):=x/(1-x)$, $x\in(0,1)$. Indeed, this follows immediately from the fact that by \eqref{eq:quasifree density} and Lemma \ref{lemma:Fock exponentials}, 
$\oll\omega_{\vecc{\alpha}}$ in this case is proportional to 
\begin{align*}
\exp\bz\sum_{i=1}^r\alpha_i\log\what\omega_{Q_i}\jz
&=
\exp\bz\sum_{i=1}^r\alpha_i\log\det(I-Q_i)+\sum_{i=1}^r\alpha_i\log\Gamma(f(Q_i))\jz\\
&=
\bz\prod_{i=1}^r\det(I-Q_i)^{\alpha_i}\jz\Gamma\bz\exp\bz\sum_{i=1}^r\alpha_i\log f(Q_i)\jz\jz.
\end{align*}
Equivalently, the multivariate R\'enyi $\psi$ quantity for quasi-free states can also be expressed
by a variational formula as in \eqref{eq:multivariate log-Euclidean}, but with the minimization restricted to quasi-free states.
\end{rem}

\begin{lemma}\label{lemma:psi derivative}
In the setting of Lemma \ref{lemma:finite-dim psi},
$\bR\times(0,+\infty)\ni(\alpha,z)\mapsto \psi_{\alpha,z}(\what\omega_Q\|\what\omega_R)$
is infinitely many times differentiable, and for any fixed $z\in(0,+\infty)$, 
\begin{align}
\partial_{\alpha}\psi_{\alpha,z}(\what\omega_Q\|\what\omega_R)
=&
\Tr\log(I - Q)-\Tr \log (I- R)\nn\\
&+\Tr\left[(I+W_{\alpha,z}^{-z})\inv \left[(\log\W{Q})
-
W_{\alpha,z}\inv
\W{Q}^{\frac{\alpha}{2z}}(\log\W{R})\W{R}^{\frac{1-\alpha}{z}}\W{Q}^{\frac{\alpha}{2z}}
\right]\right],
\label{eq:psi derivative}
\end{align}
where
\begin{align}\label{eq:Walphaz def}
W_{\alpha,z}
:=
\req_{\alpha,z}^{\op}\bz\frac{Q}{I-Q}\Big\|\frac{R}{I-R}\jz^{1/z}
=
\req_{\alpha,z}^{\op}\bz\W{Q}\|\W{R}\jz^{1/z}
=
\W{Q}^{\frac{\alpha}{2z}}\W{R}^{\frac{1-\alpha}{z}}\W{Q}^{\frac{\alpha}{2z}}.
\end{align}
\end{lemma}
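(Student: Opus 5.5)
By Lemma \ref{lemma:finite-dim psi}, $\psi_{\alpha,z}(\what\omega_Q\|\what\omega_R)=\alpha\Tr\log(I-Q)+(1-\alpha)\Tr\log(I-R)+\Tr\log\bz I+W_{\alpha,z}^z\jz$, with $W_{\alpha,z}$ as in \eqref{eq:Walphaz def}. So the whole statement reduces to the map $(\alpha,z)\mapsto \Tr\log(I+W_{\alpha,z}^z)$. The first two terms are affine in $\alpha$ and independent of $z$, and they contribute exactly $\Tr\log(I-Q)-\Tr\log(I-R)$ to $\partial_\alpha$.

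For smoothness, note that $\W{Q},\W{R}$ are positive definite because $Q,R\in\B(\hil)_{(0,1)}$. Hence $\W{Q}^{s}=e^{s\log\W{Q}}$ and $\W{R}^{s}$ are real-analytic in $s$. The map $(\alpha,z)\mapsto W_{\alpha,z}=e^{\frac{\alpha}{2z}\log\W{Q}}e^{\frac{1-\alpha}{z}\log\W{R}}e^{\frac{\alpha}{2z}\log\W{Q}}$ is then $C^\infty$ on $\bR\times(0,+\infty)$ with values in the positive definite operators. Next, $W^z=\exp(z\log W)$ is jointly $C^\infty$ in $(W,z)$ on $\B(\hil)_{>0}\times\bR$, since $\log$ is $C^\infty$ on positive definite matrices by the Fréchet-derivative theory around \eqref{eq:opfunction derivative}. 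Finally, $\Tr\log(I+\cdot)$ is $C^\infty$ on positive definite operators. Composing these maps gives infinite differentiability.

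For the derivative at fixed $z$, write $\Tr\log(I+W^z)=\Tr h(W)$ with $h(t):=\log(1+t^z)$, $t>0$. Then \eqref{eq:Tr derivative} gives
\begin{align*}
\partial_\alpha\Tr h(W_{\alpha,z})=\Tr\left[h'(W_{\alpha,z})\,\partial_\alpha W_{\alpha,z}\right],\qquad h'(t)=\frac{zt^{z-1}}{1+t^z}=\frac{z}{t}\,(1+t^{-z})^{-1}.
\end{align*}
Differentiating the three-factor product gives
\begin{align*}
\partial_\alpha W_{\alpha,z}=\frac{1}{2z}\bz(\log\W{Q})W_{\alpha,z}+W_{\alpha,z}(\log\W{Q})\jz-\frac1z\W{Q}^{\frac{\alpha}{2z}}(\log\W{R})\W{R}^{\frac{1-\alpha}{z}}\W{Q}^{\frac{\alpha}{2z}},
\end{align*}
where we used that $\log\W{Q}$ commutes with $\W{Q}^{\alpha/2z}$ and $\log\W{R}$ commutes with $\W{R}^{(1-\alpha)/z}$.

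The step to handle with care is inserting this into the trace. The operator $h'(W)=zW^{-1}(I+W^{-z})^{-1}$ commutes with $W$. By cyclicity, the symmetric part $\frac{1}{2z}\bz(\log\W{Q})W+W\log\W{Q}\jz$ therefore contributes $\Tr[(I+W^{-z})^{-1}\log\W{Q}]$. The second part contributes $-\Tr\left[(I+W^{-z})^{-1}W^{-1}\W{Q}^{\frac{\alpha}{2z}}(\log\W{R})\W{R}^{\frac{1-\alpha}{z}}\W{Q}^{\frac{\alpha}{2z}}\right]$. Adding the contribution of the affine terms yields exactly \eqref{eq:psi derivative}. No genuine obstacle is expected beyond justifying the use of \eqref{eq:Tr derivative}: it needs $h$ to be $C^1$ on an open interval containing the spectra of $W_{\alpha,z}$, which holds since $h$ is $C^1$ on all of $(0,+\infty)$ and $W_{\alpha,z}$ is positive definite.
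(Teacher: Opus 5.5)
Your proposal is correct and follows essentially the same route as the paper: you reduce via Lemma \ref{lemma:finite-dim psi} to $\Tr\log(I+W_{\alpha,z}^z)$, apply \eqref{eq:Tr derivative} together with the product-rule expression for $\partial_\alpha W_{\alpha,z}$, and simplify using that functions of $W_{\alpha,z}$ commute with $W_{\alpha,z}$ and cyclicity of the trace. Your smoothness argument is more detailed than the paper's one-line remark. One small overstatement: \eqref{eq:opfunction derivative} by itself only gives first-order differentiability, whereas the $C^\infty$ property of $\log$ and of $W\mapsto W^z$ on positive definite matrices is a standard fact that follows, e.g., from the analytic functional calculus.
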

\begin{proof}
Since $\psi_{\alpha,z}(\what\omega_Q\|\what\omega_R)$ is put together from functions that are all smooth (in fact, analytic) in $(\alpha,z)$, $\psi_{\alpha,z}(\what\omega_Q\|\what\omega_R)$ itself has this property. 
It is clear that the derivative of the first two terms in \eqref{eq:finite-dim psi} w.r.t.~$\alpha$ gives the first two terms in \eqref{eq:psi derivative}.
The derivative of the last term in \eqref{eq:finite-dim psi} can be computed 
using \eqref{eq:Tr derivative} as
\begin{align*}
&\partial_{\alpha}\Tr\log \left(I + \req_{\alpha,z}^{\op}\bz\frac{Q}{I-Q}\Big\|\frac{R}{I-R}\jz\right)\\
&\ds=\partial_{\alpha}\Tr\log \left(I + W_{\alpha,z}^z\right)
=
\Tr\left[(I+W_{\alpha,z}^z)\inv zW_{\alpha,z}^{z-1}\partial_{\alpha}W_{\alpha,z}
\right]\\
&\ds=
\Tr\left[(I+W_{\alpha,z}^z)\inv zW_{\alpha,z}^{z-1}
\left[\frac{1}{2z}(\log\W{Q})W_{\alpha,z}+\frac{1}{2z}W_{\alpha,z}(\log\W{Q})
-\frac{1}{z}\W{Q}^{\frac{\alpha}{2z}}(\log\W{R})\W{R}^{\frac{1-\alpha}{z}}\W{Q}^{\frac{\alpha}{2z}}
\right]\right]\\
&\ds=
\Tr\left[(I+W_{\alpha,z}^z)\inv W_{\alpha,z}^z(\log\W{Q})\right]
-
\Tr\left[(I+W_{\alpha,z}^z)\inv W_{\alpha,z}^{z-1}
\W{Q}^{\frac{\alpha}{2z}}(\log\W{R})\W{R}^{\frac{1-\alpha}{z}}\W{Q}^{\frac{\alpha}{2z}}
\right]\\
&\ds=
\Tr\left[(I+W_{\alpha,z}^{-z})\inv \left[(\log\W{Q})
-
W_{\alpha,z}\inv
\W{Q}^{\frac{\alpha}{2z}}(\log\W{R})\W{R}^{\frac{1-\alpha}{z}}\W{Q}^{\frac{\alpha}{2z}}
\right]\right],
\end{align*}
completing the proof of \eqref{eq:psi derivative}.
\end{proof}

\begin{lemma}\label{lemma:psi infty derivative}
In the setting of Lemma \ref{lemma:finite-dim psi},
$\bR\ni\alpha\mapsto \psi_{\alpha,+\infty}(\what\omega_Q\|\what\omega_R)$
is infinitely many times differentiable, and 
\begin{align}
\partial_{\alpha}\psi_{\alpha,+\infty}(\what\omega_Q\|\what\omega_R)
=&
\Tr\log(I - Q)-\Tr \log (I- R)\nn\\
&+\Tr\left[
\left[
I+\exp\bz-\alpha\log\W{Q}-(1-\alpha)\log\W{R}\jz\right]\inv\left[\log\W{Q}-\log\W{R}
\right]
\right].
\label{eq:psi infty derivative}
\end{align}
\end{lemma}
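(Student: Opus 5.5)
Starting from Lemma \ref{lemma:finite-dim psi} with $\tp=+\infty$, we have
\begin{align*}
\psi_{\alpha,+\infty}(\what\omega_Q\|\what\omega_R)
=\alpha\Tr\log(I-Q)+(1-\alpha)\Tr\log(I-R)+\Tr\log\bz I+e^{H(\alpha)}\jz,
\end{align*}
where $H(\alpha):=\alpha\log\W{Q}+(1-\alpha)\log\W{R}$. The map $\alpha\mapsto H(\alpha)$ is affine, with values in the self-adjoint operators on the finite-dimensional space $\hil$. Also, $t\mapsto\log(1+e^t)$ is real analytic on $\bR$. Hence $\alpha\mapsto\Tr\log(I+e^{H(\alpha)})$ is a composition of smooth maps, and is therefore infinitely many times differentiable. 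For the latter one can use either the Fr\'echet differentiability of operator functions \eqref{eq:opfunction derivative}, applied iteratively to higher derivatives, or simply analyticity via the holomorphic functional calculus. The first two terms are affine in $\alpha$, so smoothness of $\psi_{\alpha,+\infty}$ follows.

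For the derivative, the two affine terms contribute $\Tr\log(I-Q)-\Tr\log(I-R)$. For the remaining term, we apply \eqref{eq:Tr derivative} with $f(t):=\log(1+e^t)$ and $A(\alpha):=H(\alpha)$. Since $f'(t)=e^t/(1+e^t)=(1+e^{-t})^{-1}$ and $\frac{d}{d\alpha}H(\alpha)=\log\W{Q}-\log\W{R}$, this gives
\begin{align*}
\frac{d}{d\alpha}\Tr\log\bz I+e^{H(\alpha)}\jz=\Tr\left[\bz I+e^{-H(\alpha)}\jz^{-1}\bz\log\W{Q}-\log\W{R}\jz\right],
\end{align*}
which is exactly the last term of \eqref{eq:psi infty derivative}.

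The only point needing care is to write the trace function as $\Tr f(H(\alpha))$ with a single operator function. This lets \eqref{eq:Tr derivative} apply directly, even though $\log\W{Q}$ and $\log\W{R}$ need not commute; the trace formula sidesteps the noncommutativity. Note that we should \emph{not} differentiate $e^{H(\alpha)}$ itself via a naive chain rule: that derivative is given by a Duhamel integral. The trace, however, collapses it by cyclicity, which is precisely the content of \eqref{eq:Tr derivative}. Since $f$ is $C^1$ on all of $\bR$, no domain restrictions arise.
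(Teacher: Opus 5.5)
Your proposal is correct and follows essentially the same route as the paper: the paper also differentiates only the last term of \eqref{eq:finite-dim psi} using \eqref{eq:Tr derivative}, obtaining $\Tr\bigl(I+\req^{\op}_{\alpha,+\infty}(\W{Q}\|\W{R})\bigr)^{-1}\req^{\op}_{\alpha,+\infty}(\W{Q}\|\W{R})\bigl[\log\W{Q}-\log\W{R}\bigr]$, which is exactly your $\Tr f'(H(\alpha))H'(\alpha)$ with $f(t)=\log(1+e^t)$, and then rewrites it in the stated form. Your remarks on smoothness and on why the noncommutativity of $\log\W{Q}$ and $\log\W{R}$ causes no trouble just make explicit what the paper treats as obvious.
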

\begin{proof}
The claim about infinite differentiability is obvious.
It is clear that the derivative of the first two terms in \eqref{eq:finite-dim psi} w.r.t.~$\alpha$ gives the first two terms in \eqref{eq:psi infty derivative}.
The derivative of the last term in \eqref{eq:finite-dim psi} can be computed 
using \eqref{eq:Tr derivative} as
\begin{align*}
&\partial_{\alpha}\Tr\log\bz I+\req_{\alpha,+\infty}^{\op}(\W{Q}\|\W{R})\jz\\
&\ds=
\Tr\bz I+\req_{\alpha,+\infty}^{\op}(\W{Q}\|\W{R})\jz\inv\req_{\alpha,+\infty}^{\op}(\W{Q}\|\W{R})
\left[\log\W{Q}-\log\W{R}\right]\\
&\ds=
\Tr\bz I+\req_{\alpha,+\infty}^{\op}(\W{Q}\|\W{R})\inv\jz\inv\left[\log\W{Q}-\log\W{R}\right],
\end{align*}
which is exactly the last term in \eqref{eq:psi infty derivative}.
\end{proof}

\begin{lemma}\label{lemma:psi max derivative}
In the setting of Lemma \ref{lemma:finite-dim psi},
$\bR\ni\alpha\mapsto \psi_{\alpha,\geo}(\what\omega_Q\|\what\omega_R)$
is infinitely many times differentiable, and 
\begin{align}
\partial_{\alpha}\psi_{\alpha,\geo}(\what\omega_Q\|\what\omega_R)
=&
\Tr\log(I - Q)-\Tr \log (I- R)\nn\\
&+\Tr\bz I+\req_{\alpha,\geo}^{\op}(\W{Q}\|\W{R})\inv\jz\inv
\W{R}^{-1/2}\bz\log\bz\W{R}^{-1/2}\W{Q}\W{R}^{-1/2}\jz\jz\W{R}^{1/2}
\label{eq:psi max derivative1}\\
=&
\Tr\log(I - Q)-\Tr \log (I- R)\nn\\
&+\Tr\bz I+\req_{\alpha,\geo}^{\op}(\W{Q}\|\W{R})\inv\jz\inv
\W{Q}^{-1/2}\bz\log\bz\W{Q}^{1/2}\W{R}\inv\W{Q}^{1/2}\jz\jz\W{Q}^{1/2}.
\label{eq:psi max derivative2}
\end{align}
\end{lemma}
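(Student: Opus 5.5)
The plan is to differentiate the explicit formula of Lemma \ref{lemma:finite-dim psi} for $\tp=\geo$ directly, in the same way as in Lemmas \ref{lemma:psi derivative} and \ref{lemma:psi infty derivative}. By that lemma,
\begin{align*}
\psi_{\alpha,\geo}(\what\omega_Q\|\what\omega_R)=\alpha\Tr\log(I-Q)+(1-\alpha)\Tr\log(I-R)+\Tr\log\bz I+G_\alpha\jz,
\end{align*}
where $G_\alpha:=\req^{\op}_{\alpha,\geo}(\W{Q}\|\W{R})=\W{R}^{1/2}X^{\alpha}\W{R}^{1/2}$ and $X:=\W{R}^{-1/2}\W{Q}\W{R}^{-1/2}$. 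Here $X$ is positive definite, since $Q,R\in\B(\hil)_{(0,1)}$ make $\W{Q},\W{R}$ positive definite.

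\emph{Smoothness.} The map $\alpha\mapsto X^{\alpha}=e^{\alpha\log X}$ is real analytic. Hence $\alpha\mapsto G_\alpha$ is analytic and takes values among positive definite operators. Composing with the smooth function $\log(1+\cdot)$ on $(0,+\infty)$ gives infinite differentiability. The first two terms of the formula are linear in $\alpha$, and their derivative is $\Tr\log(I-Q)-\Tr\log(I-R)$.

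\emph{First expression \eqref{eq:psi max derivative1}.} We have $\partial_\alpha G_\alpha=\W{R}^{1/2}X^{\alpha}(\log X)\W{R}^{1/2}$. By \eqref{eq:Tr derivative} with $f=\log(1+\cdot)$,
\begin{align*}
\partial_\alpha\Tr\log(I+G_\alpha)=\Tr\left[(I+G_\alpha)\inv\W{R}^{1/2}X^{\alpha}(\log X)\W{R}^{1/2}\right].
\end{align*}
To reach the stated form, use $(I+G_\alpha\inv)\inv=(I+G_\alpha)\inv G_\alpha$ together with $G_\alpha\W{R}^{-1/2}=\W{R}^{1/2}X^{\alpha}$. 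These give
\begin{align*}
(I+G_\alpha\inv)\inv\W{R}^{-1/2}(\log X)\W{R}^{1/2}=(I+G_\alpha)\inv\W{R}^{1/2}X^{\alpha}(\log X)\W{R}^{1/2},
\end{align*}
which is exactly the previous integrand. This proves \eqref{eq:psi max derivative1}.

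\emph{Second expression \eqref{eq:psi max derivative2}.} Use the other representation from \eqref{eq:maxRenyi def}: $G_\alpha=\W{Q}^{1/2}Y^{1-\alpha}\W{Q}^{1/2}$ with $Y:=\W{Q}^{-1/2}\W{R}\W{Q}^{-1/2}$. Then
\begin{align*}
\partial_\alpha G_\alpha=-\W{Q}^{1/2}Y^{1-\alpha}(\log Y)\W{Q}^{1/2}=\W{Q}^{1/2}Y^{1-\alpha}\log\bz Y\inv\jz\W{Q}^{1/2},\qquad Y\inv=\W{Q}^{1/2}\W{R}\inv\W{Q}^{1/2}.
\end{align*}
The same manipulation applies, now with $G_\alpha\W{Q}^{-1/2}=\W{Q}^{1/2}Y^{1-\alpha}$, and yields \eqref{eq:psi max derivative2}. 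Alternatively, one may obtain it from the first form via the symmetry \eqref{eq:Renyi symmetry}.

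No step is a real obstacle. The only point needing care is the algebraic rearrangement of the trace expressions, specifically that $X^\alpha$ and $\log X$ commute and the identity $G_\alpha\W{R}^{-1/2}=\W{R}^{1/2}X^\alpha$. Everything else follows directly from \eqref{eq:Tr derivative} and Lemma \ref{lemma:finite-dim psi}.
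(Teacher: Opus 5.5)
Your proposal is correct and follows the paper's proof essentially step by step. Like the paper, you differentiate the last term of \eqref{eq:finite-dim psi} via \eqref{eq:Tr derivative}, once with each of the two representations of $\req^{\op}_{\alpha,\geo}$ in \eqref{eq:maxRenyi def}, and then rewrite $(I+G_\alpha)^{-1}G_\alpha$ as $(I+G_\alpha^{-1})^{-1}$. Your alternative derivation of the second form via the symmetry $\psi_{\alpha,\geo}(\what\omega_Q\|\what\omega_R)=\psi_{1-\alpha,\geo}(\what\omega_R\|\what\omega_Q)$ is also valid, since $\req^{\op}_{1-\alpha,\geo}(\W{R}\|\W{Q})=\req^{\op}_{\alpha,\geo}(\W{Q}\|\W{R})$.
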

\begin{proof}
The claim about infinite differentiability is obvious.
It is clear that the derivative of the first two terms in \eqref{eq:finite-dim psi} w.r.t.~$\alpha$ gives the first two terms in \eqref{eq:psi max derivative1} as well as in \eqref{eq:psi max derivative2}.
The derivative of the last term in \eqref{eq:finite-dim psi} can be computed 
using \eqref{eq:Tr derivative} as
\begin{align*}
&\partial_{\alpha}\Tr\log\bz I+\req_{\alpha,\geo}^{\op}(\W{Q}\|\W{R})\jz\\
&\ds=
\Tr\bz I+\req_{\alpha,\geo}^{\op}(\W{Q}\|\W{R})\jz\inv
\W{R}^{1/2}\left[\partial_{\alpha}\bz\W{R}^{-1/2}\W{Q}\W{R}^{-1/2}\jz^{\alpha}\right]\W{R}^{1/2}\\
&\ds=
\Tr\bz I+\req_{\alpha,\geo}^{\op}(\W{Q}\|\W{R})\jz\inv
\W{R}^{1/2}\left[\bz\W{R}^{-1/2}\W{Q}\W{R}^{-1/2}\jz^{\alpha}\log\bz\W{R}^{-1/2}\W{Q}\W{R}^{-1/2}\jz\right]\W{R}^{1/2}\\
&\ds=
\Tr\bz I+\req_{\alpha,\geo}^{\op}(\W{Q}\|\W{R})\jz\inv
\req_{\alpha,\geo}^{\op}(\W{Q}\|\W{R})\W{R}^{-1/2}\bz\log\bz\W{R}^{-1/2}\W{Q}\W{R}^{-1/2}\jz\jz\W{R}^{1/2}\\
&\ds=
\Tr\bz I+\req_{\alpha,\geo}^{\op}(\W{Q}\|\W{R})\inv\jz\inv
\W{R}^{-1/2}\bz\log\bz\W{R}^{-1/2}\W{Q}\W{R}^{-1/2}\jz\jz\W{R}^{1/2},
\end{align*}
which is exactly the last term in \eqref{eq:psi max derivative1}.

Alternatively, one may use the second expression for $\req_{\alpha,\geo}^{\op}$ in 
\eqref{eq:maxRenyi def} to obtain
\begin{align*}
&\partial_{\alpha}\Tr\log\bz I+\req_{\alpha,\geo}^{\op}(\W{Q}\|\W{R})\jz\\
&\ds=
\Tr\bz I+\req_{\alpha,\geo}^{\op}(\W{Q}\|\W{R})\jz\inv
\W{Q}^{1/2}\left[\partial_{\alpha}\bz\W{Q}^{-1/2}\W{R}\W{Q}^{-1/2}\jz^{1-\alpha}\right]\W{Q}^{1/2}\\
&\ds=
-\Tr\bz I+\req_{\alpha,\geo}^{\op}(\W{Q}\|\W{R})\jz\inv
\W{Q}^{1/2}\left[\bz\W{Q}^{-1/2}\W{R}\W{Q}^{-1/2}\jz^{1-\alpha}\log\bz\W{Q}^{-1/2}\W{R}\W{Q}^{-1/2}\jz\right]\W{Q}^{1/2}\\
&\ds=
-\Tr\bz I+\req_{\alpha,\geo}^{\op}(\W{Q}\|\W{R})\jz\inv
\req_{\alpha,\geo}^{\op}(\W{Q}\|\W{R})\W{Q}^{-1/2}\bz\log\bz\W{Q}^{-1/2}\W{R}\W{Q}^{-1/2}\jz\jz\W{Q}^{1/2}\\
&\ds=
-\Tr\bz I+\req_{\alpha,\geo}^{\op}(\W{Q}\|\W{R})\inv\jz\inv
\W{Q}^{-1/2}\bz\log\bz\W{Q}^{-1/2}\W{R}\W{Q}^{-1/2}\jz\jz\W{Q}^{1/2}\\
&\ds=
\Tr\bz I+\req_{\alpha,\geo}^{\op}(\W{Q}\|\W{R})\inv\jz\inv
\W{Q}^{-1/2}\bz\log\bz\W{Q}^{1/2}\W{R}\inv\W{Q}^{1/2}\jz\jz\W{Q}^{1/2},
\end{align*}
which is exactly the last term in \eqref{eq:psi max derivative2}.
\end{proof}

\begin{cor}
In the setting of Lemma \ref{lemma:finite-dim psi},
\begin{align}
D(\what\omega_Q\|\what\omega_R)
=&
D^{\qf}(Q\|R)\nn\\
:=&
\Tr \left[Q\bz\log Q-\log R\jz
+(I-Q)\bz\log(I-Q)-\log (I-R)\jz\right],
\label{eq:qf Umegaki}\\
D_{\geo}(\what\omega_Q\|\what\omega_R)
=&
D_{\geo}^{\qf}(Q\|R)\nn\\
:=&
\Tr\log(I - Q)-\Tr \log (I- R)\nn\\
&+\Tr Q
\W{R}^{-1/2}\bz\log\bz\W{R}^{-1/2}\W{Q}\W{R}^{-1/2}\jz\jz\W{R}^{1/2}
\label{eq:qf BS1}\\
=&
\Tr\log(I - Q)-\Tr \log (I- R)+\Tr Q\log\bz\W{Q}^{1/2}\W{R}\inv\W{Q}^{1/2}\jz.
\label{eq:qf BS2}
\end{align}
\end{cor}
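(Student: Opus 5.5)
The plan is to obtain both formulas as $\alpha\to1$ limits, using the derivative formulas of Lemmas \ref{lemma:psi derivative} and \ref{lemma:psi max derivative}. Since $\what\omega_Q$ and $\what\omega_R$ are density operators, $\psi_{1,\gamma}(\what\omega_Q\|\what\omega_R)=\log\Tr\what\omega_Q=0$. Moreover, $\Tr\what\omega_Q=1$ gives $D(\what\omega_Q\|\what\omega_R)=\lim_{\alpha\to1}\frac{1}{\alpha-1}\psi_{\alpha,z}(\what\omega_Q\|\what\omega_R)$, so this limit equals $\partial_\alpha\psi_{\alpha,z}(\what\omega_Q\|\what\omega_R)\big|_{\alpha=1}$. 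The same holds for $D_{\geo}$ with $\gamma=\geo$. This uses \eqref{eq:Umegaki def} and \eqref{eq:BS def1}, and the differentiability supplied by the lemmas; both states are positive definite because $Q,R\in\B(\hil)_{(0,1)}$.

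For the Umegaki case I would take $z=1$ in \eqref{eq:psi derivative}. At $\alpha=1$ we get $W_{1,1}=\W{Q}$, so $(I+W^{-1})^{-1}=\W{Q}(I+\W{Q})^{-1}=Q$, since $I+\W{Q}=(I-Q)^{-1}$. The bracket becomes $\log\W{Q}-\W{Q}^{-1}\W{Q}^{1/2}(\log\W{R})\W{Q}^{1/2}$. Multiplying by $Q$ (which commutes with $\W{Q}$) and using cyclicity of the trace, the trace of the second piece reduces to $\Tr Q\log\W{R}$. Hence
\begin{align*}
\partial_\alpha\psi\big|_{\alpha=1}=\Tr\log(I-Q)-\Tr\log(I-R)+\Tr Q\bigl(\log Q-\log(I-Q)-\log R+\log(I-R)\bigr),
\end{align*}
and regrouping gives \eqref{eq:qf Umegaki}.

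For the geometric case I would evaluate \eqref{eq:psi max derivative1} at $\alpha=1$. There $\req^{\op}_{1,\geo}(\W{Q}\|\W{R})=\W{Q}$, so the prefactor $(I+\W{Q}^{-1})^{-1}$ again equals $Q$, and this yields \eqref{eq:qf BS1} directly. For \eqref{eq:qf BS2} I would use \eqref{eq:psi max derivative2} instead. Its last term is $\Tr Q\,\W{Q}^{-1/2}\log(\W{Q}^{1/2}\W{R}^{-1}\W{Q}^{1/2})\W{Q}^{1/2}$, and because $Q$ commutes with $\W{Q}^{\pm1/2}$, cyclicity removes the conjugation.

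The only delicate point is justifying that the $\alpha\to1$ limit of $\frac{1}{\alpha-1}\psi_{\alpha,\gamma}$ is this derivative. This follows from smoothness in $\alpha$ together with $\psi_{1,\gamma}=0$, which I would check directly from \eqref{eq:finite-dim psi}: at $\alpha=1$ it gives $\Tr\log(I-Q)+\Tr\log(I+\W{Q})=0$. No step is expected to be a real obstacle.
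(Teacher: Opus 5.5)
Your proposal is correct and follows essentially the paper's proof. Both identify $D$ and $D_{\geo}$ with $\partial_\alpha\psi_{\alpha,\gamma}(\what\omega_Q\|\what\omega_R)\big|_{\alpha=1}$ and evaluate the derivative lemmas at $\alpha=1$, and your geometric part via \eqref{eq:psi max derivative1} and \eqref{eq:psi max derivative2} is identical to the paper's. The only difference is in the Umegaki case: you use the $z=1$ formula \eqref{eq:psi derivative}, which needs your (valid) extra commutation step $\Tr Q\W{Q}^{-1/2}(\log\W{R})\W{Q}^{1/2}=\Tr Q\log\W{R}$, whereas the paper uses the log-Euclidean formula \eqref{eq:psi infty derivative}, where at $\alpha=1$ the bracket is already $\log\W{Q}-\log\W{R}$.
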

\begin{proof}
We have
\begin{align*}
D(\what\omega_Q\|\what\omega_R)
=&
\lim_{\alpha\to 1}D_{\alpha,+\infty}(\what\omega_Q\|\what\omega_R)
=
\partial_{\alpha}\psi_{\alpha,+\infty}(\what\omega_Q\|\what\omega_R)\big\vert_{\alpha=1}\\
=&
\Tr\log(I - Q)-\Tr \log (I- R)\nn\\
&+\Tr\left[
\left[
I+\exp\bz-\alpha\log\W{Q}-(1-\alpha)\log\W{R}\jz\right]\inv\left[\log\W{Q}-\log\W{R}
\right]
\right]\Big\vert_{\alpha=1}\\
=&
\Tr\log(I - Q)-\Tr \log (I- R)\nn\\
&+\Tr\left[
\left[I+\W{Q}\inv\right]\inv\left[\log\W{Q}-\log\W{R}
\right]
\right]\\
=&
\Tr\log(I - Q)-\Tr \log (I- R)
+\Tr Q\left[\log\frac{Q}{I-Q}-\log\frac{R}{I-R}\right]\\
=&\Tr \left[Q\bz\log Q-\log R\jz
+(I-Q)\bz\log(I-Q)-\log (I-R)\jz\right],
\end{align*}
where the first equality follows from \eqref{eq:Umegaki def} with $z(\alpha)\equiv+\infty$,
the second equality is by definition, 
the third equality follows from \eqref{eq:psi infty derivative} and the rest are obvious.
This proves \eqref{eq:qf Umegaki}.

Similarly,
\begin{align*}
D_{\geo}(\what\omega_Q\|\what\omega_R)
=&
\lim_{\alpha\to 1}D_{\alpha,\geo}(\what\omega_Q\|\what\omega_R)
=
\partial_{\alpha}\psi_{\alpha,\geo}(\what\omega_Q\|\what\omega_R)\big\vert_{\alpha=1}\\
=&
\Tr\log(I - Q)-\Tr \log (I- R)\nn\\
&+\Tr\underbrace{\bz I+\req_{1,\geo}^{\op}(\W{Q}\|\W{R})\inv\jz\inv}_{=Q}
\W{R}^{-1/2}\bz\log\bz\W{R}^{-1/2}\W{Q}\W{R}^{-1/2}\jz\jz\W{R}^{1/2}
\end{align*}
where the first equality follows from \eqref{eq:BS def1}--\eqref{eq:BS def2},
the second equality is by definition, 
the third equality follows from \eqref{eq:psi max derivative1} and the rest are obvious.
This proves \eqref{eq:qf BS1}. Using instead \eqref{eq:psi max derivative2} in the third equality, we get 
\begin{align*}
D_{geo}(\what\omega_Q\|\what\omega_R)
=&
\Tr\log(I - Q)-\Tr \log (I- R)\nn\\
&+\Tr\bz I+\req_{1,\geo}^{\op}(\W{Q}\|\W{R})\inv\jz\inv
\W{Q}^{-1/2}\bz\log\bz\W{Q}^{1/2}\W{R}\inv\W{Q}^{1/2}\jz\jz\W{Q}^{1/2}\\
=&
\Tr\log(I - Q)-\Tr \log (I- R)+\Tr Q\log\bz\W{Q}^{1/2}\W{R}\inv\W{Q}^{1/2}\jz,
\end{align*}
proving \eqref{eq:qf BS2}.
\end{proof}

\subsection{Regularized Rényi divergences of translation-invariant states} \label{sec:renyi-divs-qf}
\label{sec:regularized Renyi}

\subsubsection{$(\alpha,z)$ and geometric}
\label{sec:regularized az}

Let us consider a doubly infinite fermion chain with $d$ internal degrees of freedom, 
described by the single-particle Hilbert space $\ell^2_d(\bZ)$. 
For the rest of the section, 
let $\omega_Q$ and $\omega_R$ be 
translation-invariant quasi-free states of this system 
with corresponding symbol functions
    $\mtxsymfn{q},\mtxsymfn{r}\in L^{\infty}_{d\times d}(\bT)$, i.e., 
    $Q = \ft^{-1} M_{\mtxsymfn{q}} \ft$ and $R = \ft^{-1} M_{\mtxsymfn{r}} \ft$.
We will assume throughout this section that there exists a constant $c \in (0, 1/2)$  such that 
\begin{align}\label{eq:strict bound1}
c I_d \le \mtxsymfn{q}(x), \mtxsymfn{r}(x) \le (1-c)I_d
\end{align}
almost everywhere, or equivalently, that 
\begin{align}\label{eq:strict bound2}
cI\le Q,R\le (1-c)I,
\end{align}
and we define
\begin{align*}
\weightmtxfn{Q}(x) := \mtxsymfn{q}(x)(I_d - \mtxsymfn{q}(x))^{-1},
\ds\ds\ds
\weightmtxfn{R}(x) := \mtxsymfn{r}(x)(I_d - \mtxsymfn{r}(x))^{-1},\ds\ds\ds
x\in\bT.
\end{align*}
We may and will assume without loss of generality that \eqref{eq:strict bound1} holds at every $x\in\bT$.
Assumptions \eqref{eq:strict bound1}--\eqref{eq:strict bound2} guarantee that 
\begin{align}\label{eq:strict bound3}
cI_{nd}\le Q_n,R_n\le(1-c)I_{nd},\ds n\in\bN,\ds\ds\ds
\frac{c}{1-c}I_d\le \weightmtxfn{Q}(x) ,\weightmtxfn{R}(x) \le\frac{1-c}{c} I_d,\ds x\in\bT,
\end{align}
where $Q_n$ is as in \eqref{eq:cutoff def}, and $R_n$ is defined analogously.

We are interested in the \ki{regularized} versions of the R\'enyi quantities, defined as
\begin{align*}
\divv^{\reg}(\omega_Q\|\omega_R):=\lim_{n\to+\infty}\frac{1}{n}\divv(\what\omega_{Q_n}\|\what\omega_{R_n})
\end{align*}
whenever the limit exists, 
where $\divv=\psi_{\alpha,\gamma}$ or $\divv=\redi_{\alpha,\gamma}$ for some 
$\alpha\in\bR$ and $\gamma\in(0,+\infty]\cup\{\geo\}$.

\begin{theorem}\label{thm:psi-existence}  
For every $\alpha\in\bR$ and $\tp\in(0,+\infty]\cup\{\geo\}$,    
\begin{align}
&\psi^{\reg}_{\alpha,\tp}(\omega_Q\|\omega_R)
=
\frac{1}{2\pi} \int_0^{2\pi}\psi_{\alpha,\tp}(\what\omega_{\mtxsymfn{q}(x)}\|\what\omega_{\mtxsymfn{r}(x)})\,\dd x
=
\frac{1}{2\pi} \int_0^{2\pi}\psi_{\alpha,\tp}^{\qf}(\mtxsymfn{q}(x)\|\mtxsymfn{r}(x))\,\dd x
\label{eq:regularized_psi1}
\\
&= \frac{1}{2\pi} \int_0^{2\pi} \tr \Big[\log(I_d - \mtxsymfn{q}(x))^{\alpha} 
+ \log(I_d - \mtxsymfn{r}(x))^{1-\alpha} + \log\bz I_d + 
\req^{\op}_{\alpha,\tp}(\weightmtxfn{Q}(x)\| \weightmtxfn{R}(x))\jz\Big] \,\dd x.
\label{eq:regularized_psi2}
\end{align}
\end{theorem}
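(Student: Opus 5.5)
The plan is to apply Lemma \ref{lemma:finite-dim psi} at each finite $n$ and then pass to the limit term by term with the Szeg\H o-type theorems of Section \ref{sec:Szego}. By \eqref{eq:strict bound3}, $Q_n,R_n\in\B(\hil_n)_{(0,1)}$, so Lemma \ref{lemma:finite-dim psi} gives
\begin{align*}
\frac{1}{n}\psi_{\alpha,\tp}(\what\omega_{Q_n}\|\what\omega_{R_n})
=\frac{\alpha}{n}\Tr\log(I-Q_n)+\frac{1-\alpha}{n}\Tr\log(I-R_n)
+\frac{1}{n}\Tr\log\bz I+\req^{\op}_{\alpha,\tp}(\W{Q_n}\|\W{R_n})\jz .
\end{align*}
The same lemma applied to $\mtxsymfn{q}(x),\mtxsymfn{r}(x)\in\B(\bC^d)_{(0,1)}$ identifies the integrand in \eqref{eq:regularized_psi1} with $\psi^{\qf}_{\alpha,\tp}$, which is \eqref{eq:regularized_psi2}. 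So the first equality in \eqref{eq:regularized_psi1} is the only thing to prove, and I will prove it term by term.

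\textbf{The first two terms.} Apply Theorem \ref{thm:szego-block-poly} with $r=1$ and $f(t)=\log(1-t)$. This $f$ is continuous on $\conv\spec(Q)\subseteq[c,1-c]$, so $\frac1n\Tr\log(I-Q_n)\to\frac1{2\pi}\int\Tr\log(I_d-\mtxsymfn{q}(x))\,\dd x$, and likewise for $R$.

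\textbf{The third term.} The key observation is that every operator in it is a continuous function of $Q_n$ or $R_n$ with spectrum in a fixed compact interval. Indeed, $\W{Q_n}=f(Q_n)$ with $f(t)=t/(1-t)$, and its spectrum lies in $[c/(1-c),(1-c)/c]$. I then treat each $\tp$ separately.

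\emph{The case $\tp=z\in(0,+\infty)$.} Write $\req^{\op}_{\alpha,z}(\W{Q_n}\|\W{R_n})=(B_nB_n^*)^z$ with
\begin{align*}
B_n=\W{Q_n}^{\alpha/2z}\,\W{R_n}^{(1-\alpha)/2z},
\end{align*}
which is a product $f^{(1)}(Q_n)f^{(2)}(R_n)$ of continuous functions. Corollary \ref{cor:for-renyi-alpha-z} with $g(y)=\log(1+y^z)$, continuous on $[0,M]$, then gives the limit $\frac1{2\pi}\int\Tr\log(I_d+\req^{\op}_{\alpha,z}(\weightmtxfn{Q}(x)\|\weightmtxfn{R}(x)))\,\dd x$.

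\emph{The case $\tp=+\infty$.} Use \eqref{eq:Szego for infty} with $f^{(1)}(t)=\alpha\log\frac{t}{1-t}$, $f^{(2)}(t)=(1-\alpha)\log\frac{t}{1-t}$ and $g(y)=\log(1+e^y)$. All are continuous on the relevant compact intervals, since the spectrum of the sum is bounded by the sum of the norms.

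\emph{The case $\tp=\geo$.} Write
\begin{align*}
\req^{\op}_{\alpha,\geo}=C_n\,g_1(B_nB_n^*)\,C_n^*,\qquad C_n=\W{R_n}^{1/2},\quad B_n=\W{R_n}^{-1/2}\W{Q_n}^{1/2},\quad g_1(y)=y^\alpha .
\end{align*}
Here $B_nB_n^*=\W{R_n}^{-1/2}\W{Q_n}\W{R_n}^{-1/2}$ has spectrum in $[(c/(1-c))^2,((1-c)/c)^2]$, which is bounded away from $0$, so $g_1$ is continuous there for any real $\alpha$. Then apply \eqref{eq:Szego for max} with $g_2(y)=\log(1+y)$. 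The spectrum of $C_ng_1(B_nB_n^*)C_n^*$ is again uniformly bounded and positive, so $g_2$ is continuous on a fixed compact interval.

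\textbf{Main obstacle.} The delicate step is the geometric (and, to a lesser degree, the log-Euclidean) case. There one must check that the omitted proofs of \eqref{eq:Szego for infty} and \eqref{eq:Szego for max} really apply. Concretely, every intermediate operator, for finite $n$ and for the symbol alike, must have spectrum in a compact interval independent of $n$ and $x$, and on that interval the outer function must be uniformly approximable by polynomials. The bounds in \eqref{eq:strict bound3}, namely operator-norm bounds and bounds away from zero that are inherited under compressions and products, supply exactly this. It is also worth noting that negative powers cause no trouble for $\alpha\notin[0,1]$, because $\W{Q_n}$ and $\W{R_n}$ are uniformly bounded below. Summing the three limits gives \eqref{eq:regularized_psi1}.
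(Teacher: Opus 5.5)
Your proposal is correct and follows the paper's proof essentially step by step. You split via Lemma~\ref{lemma:finite-dim psi}, handle the $\log(I-Q_n)$ and $\log(I-R_n)$ terms with Theorem~\ref{thm:szego-block-poly}, and treat the third term with Corollary~\ref{cor:for-renyi-alpha-z} (using $g(y)=\log(1+y^z)$), \eqref{eq:Szego for infty} and \eqref{eq:Szego for max} for $\tp=z$, $\tp=+\infty$ and $\tp=\geo$, respectively. Your identification $\req^{\op}_{\alpha,z}(\W{Q_n}\|\W{R_n})=(B_nB_n^*)^z$ is in fact the accurate form of what the paper writes more loosely, and your check of uniform spectral bounds is exactly the verification the paper leaves to the reader for the omitted Szeg\H o-type variants.
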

\begin{proof}
The equalities of the two integrals in \eqref{eq:regularized_psi1} and the one in \eqref{eq:regularized_psi2} 
are by definition.
By Lemma \ref{lemma:finite-dim psi},
\begin{align}
\psi^{\reg}_{\alpha,\tp}(\omega_Q\|\omega_R)
=&
\alpha\lim_{n\to+\infty}\frac{1}{n}\Tr\log(I_{nd} - Q_n) 
+(1-\alpha)\lim_{n\to+\infty}\frac{1}{n}\Tr \log (I_{nd}- R_n)
\label{eq:regularized_psi_proof1}\\
&+\lim_{n\to+\infty}\frac{1}{n}\Tr\log \left(I_{nd} + \req_{\alpha,\gamma}^{\op}\bz\frac{Q_n}{I_{nd}-Q_n}\Big\|\frac{R_n}{I_{nd}-R_n}\jz\right).
\label{eq:regularized_psi_proof2}
\end{align}

By assumption \eqref{eq:strict bound2}, $\log$ is continuous on 
$\conv(\spec(I-Q))\cup\conv(\spec(I-R))$, and hence, by Theorem \ref{thm:szego-block-poly}, the limits in 
\eqref{eq:regularized_psi_proof1} are equal to 
\begin{align}
\frac{1}{2\pi} \int_0^{2\pi} \tr \left[\alpha\log(I_d - \mtxsymfn{q}(x))
+(1-\alpha) \log(I_d - \mtxsymfn{r}(x))\right] \,\dd x,
\end{align}
which gives the first two terms in the integral in \eqref{eq:regularized_psi2}.
Hence, our aim is to show that the limit in \eqref{eq:regularized_psi_proof2}
is equal to the last integral term in \eqref{eq:regularized_psi2}.

Let us first consider $\tp=z\in(0,+\infty)$. Define
$g(x) := \log(1 + x^z)$, let $B_n$ and $\mtxfn{b}$ be as in \eqref{eq:B def} with 
$f^{(1)}(x) = (x/(1-x))^{\frac{\alpha}{2z}}$, $f^{(2)}(x) = (x/(1-x))^{\frac{1-\alpha}{2z}}$, and 
$\mtxsymfn{a}^{(1)}=\mtxsymfn{q}$,
$\mtxsymfn{a}^{(2)}=\mtxsymfn{r}$, so that 
$\req_{\alpha,\gamma}^{\op}\bz\frac{Q_n}{I_{nd}-Q_n}\Big\|\frac{R_n}{I_{nd}-R_n}\jz=B_nB_n^*$. 
Then the limit in \eqref{eq:regularized_psi_proof2} is equal to 
\begin{align}
\lim_{n\to+\infty}\frac{1}{n} \tr g\left(B_n B_n^*\right)
&=
\frac{1}{2\pi} \int_0^{2\pi} \tr g\left(\mtxfn{b}(x)\mtxfn{b}(x)^*\right) \,\dd x\nn\\
&=
\frac{1}{2\pi} \int_0^{2\pi} \tr \log\Big(I_d + 
\underbrace{\bz(\weightmtxfn{Q}(x))^{\frac{\alpha}{2z}} (\weightmtxfn{R}(x))^{\frac{1-\alpha}{z}}(\weightmtxfn{Q}(x))^{\frac{\alpha}{2z}}\jz^{z}}_{=\req^{\op}_{\alpha,z}(\weightmtxfn{Q}(x)\| \weightmtxfn{R}(x))}\Big) \,\dd x
\label{eq:regularized az proof2}
\end{align}
where the first equality is by definition, and the second equality 
follows from Corollary \ref{cor:for-renyi-alpha-z}, since 
all the continuity requirements for $f^{(1)},f^{(2)}$ and $g$ in Corollary \ref{cor:for-renyi-alpha-z} are met 
due to the assumption in \eqref{eq:strict bound2}. 
This completes the proof of \eqref{eq:regularized_psi2} in the above case.

Next, let us consider $\tp=+\infty$. 
In this case the limit in \eqref{eq:regularized_psi_proof2} is equal to 
\begin{multline}
\lim_{n\to+\infty}\frac{1}{n}\Tr\log\bz I_{nd}+\exp\bz\alpha\log\W{Q_n}+
(1-\alpha)\log\W{R_n}\jz\jz\\
=
\frac{1}{2\pi} \int_0^{2\pi} 
\tr \log\big( I_d+\underbrace{\exp\bz\alpha\log\weightmtxfn{Q}(x)+(1-\alpha)\log \weightmtxfn{R}(x)\jz}_{
=\req^{\op}_{\alpha,+\infty}(\weightmtxfn{Q}(x)\| \weightmtxfn{R}(x))}\big)\,\dd x,
\end{multline}
where the second expression follows from the first 
as a special case of \eqref{eq:Szego for infty}.

Finally, consider the case $\tp=\geo$. Then 
the limit in \eqref{eq:regularized_psi_proof2} is equal to 
\begin{multline*}
\lim_{n\to+\infty}\frac{1}{n}\Tr\log\bz I_{nd}+\W{R_n}^{1/2}\bz\W{R_n}^{-1/2}\W{Q_n}\W{R_n}^{-1/2}\jz^{\alpha}\W{R_n}^{1/2}\jz\\
=
\frac{1}{2\pi} \int_0^{2\pi} 
\tr \log\big( I_d+\underbrace{
(\weightmtxfn{R}(x))^{1/2}\big((\weightmtxfn{R}(x))^{-1/2}\weightmtxfn{Q}(x)
(\weightmtxfn{R}(x))^{-1/2}\big)^{\alpha}(\weightmtxfn{R}(x))^{1/2}}_{
=\req^{\op}_{\alpha,\geo}(\weightmtxfn{Q}(x)\| \weightmtxfn{R}(x))}\big)\,\dd x,
\end{multline*}
where the second expression follows from the first 
as a special case of \eqref{eq:Szego for max}.
\end{proof}

Theorem \ref{thm:psi-existence} yields immediately the following:
\begin{theorem}\label{thm:psi-existence-2}  
For every $\alpha\in\bR\setminus\{0,1\}$ and $\tp\in(0,+\infty]\cup\{\geo\}$,    
\begin{align}
&\redi_{\alpha,\tp}^{\reg}(\omega_Q\|\omega_R)
=
\frac{1}{2\pi} \int_0^{2\pi}\redi_{\alpha,\tp}(\what\omega_{\mtxsymfn{q}(x)}\|\what\omega_{\mtxsymfn{r}(x)})\,\dd x
=
\frac{1}{2\pi} \int_0^{2\pi}\redi_{\alpha,\tp}^{\qf}(\mtxsymfn{q}(x)\|\mtxsymfn{r}(x))\,\dd x
\label{eq:regularized_psi1-2}
\\
&= \frac{1}{2\pi} \int_0^{2\pi} \frac{1}{\oll\alpha-1}\tr \Big[\log(I_d - \mtxsymfn{q}(x))^{\alpha} 
+ \log(I_d - \mtxsymfn{r}(x))^{1-\alpha} + \log\bz I_d + 
\req^{\op}_{\alpha,\tp}(\weightmtxfn{Q}(x)\| \weightmtxfn{R}(x))\jz\Big] \,\dd x.
\label{eq:regularized_psi2-2}
\end{align}
For every $\alpha\in(0,1)\cup(1,+\infty)$ and $\tp\in(0,+\infty]\cup\{\geo\}$,    
\begin{align}
&\red_{\alpha,\tp}^{\reg}(\omega_Q\|\omega_R)
=
\frac{1}{2\pi} \int_0^{2\pi}\red_{\alpha,\tp}(\what\omega_{\mtxsymfn{q}(x)}\|\what\omega_{\mtxsymfn{r}(x)})\,\dd x
\\
&= \frac{1}{2\pi} \int_0^{2\pi} \frac{1}{\alpha-1}\tr \Big[\log(I_d - \mtxsymfn{q}(x))^{\alpha} 
+ \log(I_d - \mtxsymfn{r}(x))^{1-\alpha} + \log\bz I_d + 
\req^{\op}_{\alpha,\tp}(\weightmtxfn{Q}(x)\| \weightmtxfn{R}(x))\jz\Big] \,\dd x.
\end{align}
\end{theorem}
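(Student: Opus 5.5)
The plan is to reduce everything to Theorem \ref{thm:psi-existence} together with the normalization formulas relating $\redi$, $\red$ and $\psi$. The one thing to watch is the normalization by traces. Since $\what\omega_{Q_n}$ and $\what\omega_{R_n}$ are density operators, $\Tr\what\omega_{Q_n}=\Tr\what\omega_{R_n}=1$. Hence, by definition,
\begin{align*}
\redi_{\alpha,\tp}(\what\omega_{Q_n}\|\what\omega_{R_n})&=\frac{1}{\oll\alpha-1}\psi_{\alpha,\tp}(\what\omega_{Q_n}\|\what\omega_{R_n}),\\
\red_{\alpha,\tp}(\what\omega_{Q_n}\|\what\omega_{R_n})&=\frac{1}{\alpha-1}\psi_{\alpha,\tp}(\what\omega_{Q_n}\|\what\omega_{R_n}).
\end{align*}
Similarly, for each $x\in\bT$, $\what\omega_{\mtxsymfn{q}(x)}$ and $\what\omega_{\mtxsymfn{r}(x)}$ are density operators on $\fock{\bC^d}$. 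This uses \eqref{eq:strict bound1}, which guarantees that the symbols lie in $\B(\bC^d)_{(0,1)}$, so \eqref{eq:quasifree density} applies. The same identities therefore hold pointwise in $x$.

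Next I would divide by $n$ and pass to the limit. For fixed $\alpha\ne 0,1$, the prefactors $1/(\oll\alpha-1)$ and $1/(\alpha-1)$ are finite constants independent of $n$; note $\oll\alpha>1/2$, and $\oll\alpha=1$ only when $\alpha\in\{0,1\}$. So the existence of $\psi^{\reg}_{\alpha,\tp}(\omega_Q\|\omega_R)$ from Theorem \ref{thm:psi-existence} gives the existence of both regularized limits, each equal to the same constant times $\psi^{\reg}_{\alpha,\tp}$. Substituting \eqref{eq:regularized_psi1}, I pull the constant inside the integral. Using the pointwise identities from the first step, the integrand becomes $\redi_{\alpha,\tp}(\what\omega_{\mtxsymfn{q}(x)}\|\what\omega_{\mtxsymfn{r}(x)})$, respectively $\red_{\alpha,\tp}(\what\omega_{\mtxsymfn{q}(x)}\|\what\omega_{\mtxsymfn{r}(x)})$.

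The remaining two equalities follow the same way. The form in terms of $\redi^{\qf}_{\alpha,\tp}$ (understood as $\psi^{\qf}_{\alpha,\tp}/(\oll\alpha-1)$) comes from Lemma \ref{lemma:finite-dim psi} applied with $\hil=\bC^d$. The explicit trace formula is obtained by multiplying \eqref{eq:regularized_psi2} by the constant.

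There is no real obstacle here. The only points needing care are two small checks. First, the integrand is measurable and bounded, which \eqref{eq:strict bound3} ensures, so the integral is well defined and linearity applies. Second, the normalization by traces in the definitions of $\redi$ and $\red$ is trivial, both for the finite-$n$ density operators and for the fibrewise ones.
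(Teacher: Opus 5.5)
Your proposal is correct and follows the paper's own route: the paper presents this theorem as an immediate consequence of Theorem \ref{thm:psi-existence}, using that all states involved have unit trace (both $\what\omega_{Q_n},\what\omega_{R_n}$ and the fibrewise $\what\omega_{\mtxsymfn{q}(x)},\what\omega_{\mtxsymfn{r}(x)}$), so $\redi_{\alpha,\tp}=\psi_{\alpha,\tp}/(\oll\alpha-1)$ and $\red_{\alpha,\tp}=\psi_{\alpha,\tp}/(\alpha-1)$ with finite, $n$-independent prefactors. One small correction that does not affect the argument: $\oll\alpha\ge 1/2$, with equality at $\alpha=1/2$, rather than $\oll\alpha>1/2$; what the argument needs is $\oll\alpha\neq 1$ for $\alpha\notin\{0,1\}$, which you note correctly.
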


\begin{theorem}
In the above setting, 
the regularized Umegaki relative entropy 
and the regularized Belavkin-Staszewski relative entropy 
of $\omega_Q$ and $\omega_R$ are given by 
\begin{align}
&D^{\reg}(\omega_Q \| \omega_R)\nn\\
&=
\frac{1}{2\pi} \int_0^{2\pi}D\bz\what\omega_{\mtxsymfn{q}(x)}\|\what\omega_{\mtxsymfn{r}(x)}\jz\,\dd x
=
\frac{1}{2\pi} \int_0^{2\pi}D^{\qf}\bz\mtxsymfn{q}(x)\|\mtxsymfn{r}(x)\jz\,\dd x
\label{eq:regularized_relentr1}\\
&=
\frac{1}{2\pi} \int_0^{2\pi} \tr 
\big[ \mtxsymfn{q}(x)\big[\log \mtxsymfn{q}(x)-\log\mtxsymfn{r}(x)\big]
+ 
(I_d - \mtxsymfn{q}(x))\big[\log(I_d - \mtxsymfn{q}(x))-
\log(I_d - \mtxsymfn{r}(x))\big]\big] \,\dd x, \label{eq:regularized_relentr2}\\
&D^{\reg}_{\geo}(\omega_Q \| \omega_R)\nn\\
&=
\frac{1}{2\pi} \int_0^{2\pi}D_{\geo}\bz\what\omega_{\mtxsymfn{q}(x)}\|\what\omega_{\mtxsymfn{r}(x)}\jz\,\dd x
=
\frac{1}{2\pi} \int_0^{2\pi}D_{\geo}^{\qf}\bz\mtxsymfn{q}(x)\|\mtxsymfn{r}(x)\jz\,\dd x
\label{eq:regularized_BS1}
\\
&=
\frac{1}{2\pi} \int_0^{2\pi}\Tr\big[\log\bz I_d-\mtxsymfn{q}(x)\jz
-
\log\bz I_d-\mtxsymfn{r}(x)\jz
+
\mtxsymfn{q}(x)
\log\big( (\weightmtxfn{Q}(x))^{1/2}(\weightmtxfn{R}(x))^{-1}(\weightmtxfn{Q}(x))^{1/2}\big)
\big]\,\dd x.
\label{eq:regularized_BS2}
\end{align}
\end{theorem}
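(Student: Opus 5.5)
The plan is to work directly with the finite-$n$ closed forms from the preceding corollary, rather than taking $\alpha\to1$ limits of the regularized R\'enyi divergences. By \eqref{eq:qf Umegaki}, for every $n$ (using \eqref{eq:strict bound3}, so that $Q_n,R_n\in\B(\hil_n)_{(0,1)}$),
\begin{align*}
\frac1n D(\what\omega_{Q_n}\|\what\omega_{R_n})
=\frac1n\Tr\big[Q_n\log Q_n - Q_n\log R_n+(I_{nd}-Q_n)\log(I_{nd}-Q_n)-(I_{nd}-Q_n)\log(I_{nd}-R_n)\big].
\end{align*}
Each of the four terms is of the form $\frac1n\Tr f^{(1)}(A^{(1)}_n)f^{(2)}(A^{(2)}_n)$ with $A^{(k)}\in\{Q,R\}$ self-adjoint. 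The functions involved are $f(t)=t\log t$, $t$, $\log t$, $1-t$ and $\log(1-t)$, all continuous on $[c,1-c]\supseteq\conv(\spec Q)\cup\conv(\spec R)$. Theorem \ref{thm:szego-block-poly} therefore applies term by term. Summing the four limits gives \eqref{eq:regularized_relentr2}, and the remaining equalities in \eqref{eq:regularized_relentr1} are just \eqref{eq:qf Umegaki} applied pointwise to $\mtxsymfn{q}(x),\mtxsymfn{r}(x)$.

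For the Belavkin--Staszewski case I would use \eqref{eq:qf BS2}:
\begin{align*}
D_{\geo}(\what\omega_{Q_n}\|\what\omega_{R_n})
=\Tr\log(I_{nd}-Q_n)-\Tr\log(I_{nd}-R_n)+\Tr Q_n\log\big(\W{Q_n}^{1/2}\W{R_n}^{-1}\W{Q_n}^{1/2}\big).
\end{align*}
The first two terms are handled by Theorem \ref{thm:szego-block-poly} exactly as in the proof of Theorem \ref{thm:psi-existence}. The third term has the shape $\Tr C_n\, g(B_nB_n^*)$, where
\begin{align*}
B_n=f^{(1)}(Q_n)f^{(2)}(R_n),\qquad f^{(1)}(t)=\Big(\frac{t}{1-t}\Big)^{1/2},\quad f^{(2)}(t)=\Big(\frac{t}{1-t}\Big)^{-1/2},
\end{align*}
with $C_n=Q_n$ and $g=\log$. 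This is a mild variant of \eqref{eq:Szego for max} and of Corollary \ref{cor:for-renyi-alpha-z}, with an extra factor $C_n$ multiplying $g(B_nB_n^*)$.

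The key point is that $B_nB_n^*$ has spectrum in a compact interval $[m,M]$ bounded away from $0$. Concretely, $m\ge c^2/(1-c)^2$ follows from \eqref{eq:strict bound3}, and the same bound holds for $\mtxfn b(x)\mtxfn b(x)^*$. So $\log$ can be uniformly approximated on $[m,M]$ by polynomials $g_\ep$. For each such polynomial, $\Tr Q_n g_\ep(B_nB_n^*)$ is a finite sum of traces of products of continuous functions of $Q_n,R_n$, and Theorem \ref{thm:szego-block-poly} gives the limit. The approximation error is controlled by the H\"older estimate $|\Tr Q_n(g-g_\ep)(B_nB_n^*)|\le nd\,\|g-g_\ep\|_\infty$, and similarly pointwise in $x$, as in the proof of Corollary \ref{cor:for-renyi-alpha-z}. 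This yields \eqref{eq:regularized_BS2}, and \eqref{eq:regularized_BS1} then follows from \eqref{eq:qf BS2} applied pointwise.

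The main obstacle is conceptual rather than computational: we must not try to deduce the result by interchanging $\lim_{\alpha\to1}$ with the regularization $\lim_n$. Going through the explicit finite-$n$ formulas avoids that interchange entirely. The only genuinely technical step is the Szeg\H o-type limit for the BS term. This needs the lower spectral bound on $B_nB_n^*$, uniform in $n$, so that $\log$ is continuous on a fixed compact interval; that bound follows from \eqref{eq:strict bound3}, since $\W{Q_n}^{1/2}\W{R_n}^{-1}\W{Q_n}^{1/2}\ge \frac{c}{1-c}\cdot\frac{c}{1-c}I$.
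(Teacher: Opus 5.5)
Your proposal is correct and follows essentially the same route as the paper: both start from the finite-$n$ closed forms \eqref{eq:qf Umegaki} and \eqref{eq:qf BS2}, pass to the limit term by term via Theorem~\ref{thm:szego-block-poly}, and treat the logarithmic term of the Belavkin--Staszewski formula as a Szeg\H o-type limit of the form \eqref{eq:Szego for max}. The only difference is cosmetic. The paper writes that term as $\Tr Q_n^{1/2}\log(\cdot)Q_n^{1/2}$ and leaves the approximation argument to the reader, whereas you keep $\Tr Q_n\log(B_nB_n^*)$ and spell out the polynomial approximation, together with the uniform spectral bound $\frac{c^2}{(1-c)^2}I\le B_nB_n^*\le\frac{(1-c)^2}{c^2}I$ that makes it work.
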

\begin{proof}
According to \eqref{eq:qf Umegaki},
\begin{align*}
D^{\reg}(\omega_Q \| \omega_R)
&=
\sum_{k,l=0}^{1}(-1)^l\lim_{n\to+\infty}\frac{1}{n}\Tr
f^{(1)}\big(\big( A_{k,l}^{(1)}\big)_n\big) f^{(2)}\big(\big( A_{k,l}^{(2)}\big)_n\big)\\
&=
\sum_{k,l=0}^{1}(-1)^l\frac{1}{2\pi}\int_0^{2\pi}
\Tr f^{(1)}\big( \mtxsymfn{a}_{k,l}^{(1)}\big) f^{(2)}\big( \mtxsymfn{a}_{k,l}^{(2)}\big)\,\dd x,
\end{align*}
where $f^{(1)}(x)=x$, $f^{(2)}(x)=\log x$, $A^{(j)}_{k,l}=\ft^{-1}M_{\mtxsymfn{a}^{(j)}_{k,l}}\ft$,
\begin{align*}
\mtxsymfn{a}^{(1)}_{0,0}=
\mtxsymfn{a}^{(1)}_{0,1}=
\mtxsymfn{a}^{(2)}_{0,0}=\mtxsymfn{q},\ds
\mtxsymfn{a}^{(2)}_{0,1}=\mtxsymfn{r},\ds
\mtxsymfn{a}^{(1)}_{1,0}=
\mtxsymfn{a}^{(1)}_{1,1}=
\mtxsymfn{a}^{(2)}_{1,0}=\mathbf{1}-\mtxsymfn{q},\ds
\mtxsymfn{a}^{(2)}_{1,1}=\mathbf{1}-\mtxsymfn{r},
\end{align*}
and the second equality follows by Theorem \ref{thm:szego-block-poly}, since 
$\log$ is continuous on $\cup_{k,l\in\{0,1\}}\conv(\spec(A^{(2)}_{k,l}))$ by 
assumption \eqref{eq:strict bound2}.
This proves that $D^{\reg}(\omega_Q\|\omega_R)$ is equal to \eqref{eq:regularized_relentr2}, 
and the equality of \eqref{eq:regularized_relentr2} to the expressions in 
\eqref{eq:regularized_relentr1} follows by definition.

Similarly, by \eqref{eq:qf BS2},
\begin{align*}
D^{\reg}_{\geo}(\omega_Q \| \omega_R)
=&
\lim_{n\to+\infty}\frac{1}{n}\Tr\log(I - Q_n)-
\lim_{n\to+\infty}\frac{1}{n}\Tr \log (I- R_n)\\
&+\lim_{n\to+\infty}\frac{1}{n}\Tr \left[Q_n^{1/2}
\log\bz\W{Q_n}^{1/2}\W{R_n}\inv\W{Q_n}^{1/2}\jz Q_n^{1/2}\right]\\
=&
\frac{1}{2\pi} \int_0^{2\pi}\Tr\log\bz I_d-\mtxsymfn{q}(x)\jz\,\dd x
-
\frac{1}{2\pi} \int_0^{2\pi}\Tr\log\bz I_d-\mtxsymfn{r}(x)\jz\,\dd x\\
&+
\frac{1}{2\pi} \int_0^{2\pi}\Tr\mtxsymfn{q}(x)^{1/2}
\log\bz (\weightmtxfn{Q}(x))^{1/2}(\weightmtxfn{R}(x))^{-1}(\weightmtxfn{Q}(x))^{1/2}\jz
\mtxsymfn{q}(x)^{1/2}\,\dd x,
\end{align*}
where the second equality follows by straightforward applications of 
Theorem \ref{thm:szego-block-poly} for the first two terms, and of 
\eqref{eq:Szego for max} for the last term.
This proves that $D^{\reg}_{\geo}(\omega_Q\|\omega_R)$ is equal to \eqref{eq:regularized_BS2}, 
and the equality of \eqref{eq:regularized_BS2} to the expressions in 
\eqref{eq:regularized_BS1} follows by definition.
\end{proof}

\begin{lemma}\label{lem:psi-differentiability}
For any $\tp\in(0,+\infty]\cup\{\geo\}$, $\bR\ni\alpha\mapsto\psi^{\reg}_{\alpha,\tp}(\omega_Q\|\omega_R)$
is infinitely many times differentiable, and for every $k\in\bN$,
\begin{align}
\partial_{\alpha}^k\,\psi^{\reg}_{\alpha,\tp}(\omega_Q\|\omega_R)
=
\frac{1}{2\pi} \int_0^{2\pi}\partial_{\alpha}^k\,\psi_{\alpha,\tp}(\what\omega_{\mtxsymfn{q}(x)}\|\what\omega_{\mtxsymfn{r}(x)})\,\dd x.
\label{eq:regularized_psi_derivative}
\end{align}
Similarly, $(0,+\infty)\ni\alpha\mapsto\psi^{\reg}_{\alpha,\alpha}(\omega_Q\|\omega_R)$
is infinitely many times differentiable, and for every $k\in\bN$,
\begin{align}
\partial_{\alpha}^k\,\psi^{\reg}_{\alpha,\alpha}(\omega_Q\|\omega_R)
=
\frac{1}{2\pi} \int_0^{2\pi}\partial_{\alpha}^k\,\psi_{\alpha,\alpha}(\what\omega_{\mtxsymfn{q}(x)}\|\what\omega_{\mtxsymfn{r}(x)})\,\dd x.
\label{eq:regularized_psi_derivative sandwiched}
\end{align}
\end{lemma}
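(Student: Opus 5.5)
By Theorem \ref{thm:psi-existence}, $\psi^{\reg}_{\alpha,\tp}(\omega_Q\|\omega_R)=\frac{1}{2\pi}\int_0^{2\pi}\psi_{\alpha,\tp}^{\qf}(\mtxsymfn{q}(x)\|\mtxsymfn{r}(x))\,\dd x$ for every $\alpha$. The problem therefore reduces to differentiating under the integral sign. I would use the standard dominated-differentiation lemma, applied inductively in $k$. It requires two things. First, for each fixed $x$, the map $\alpha\mapsto\psi^{\qf}_{\alpha,\tp}(\mtxsymfn{q}(x)\|\mtxsymfn{r}(x))$ is $C^\infty$; this is given by Lemmas \ref{lemma:psi derivative}, \ref{lemma:psi infty derivative} and \ref{lemma:psi max derivative}, and in the sandwiched case by the joint smoothness in $(\alpha,z)$ from Lemma \ref{lemma:psi derivative}, composed with $z=\alpha$. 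Second, for every compact interval $K$ of admissible $\alpha$ and every $k$, we need a bound $\sup_{x\in\bT}\sup_{\alpha\in K}|\partial^k_\alpha\psi^{\qf}_{\alpha,\tp}(\mtxsymfn{q}(x)\|\mtxsymfn{r}(x))|<+\infty$. Measurability in $x$ is clear, since these are continuous functions of the measurable matrix functions $\mtxsymfn{q},\mtxsymfn{r}$.

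The key step is the uniform bound. I would use compactness. By \eqref{eq:strict bound1}, the pair $(\mtxsymfn{q}(x),\mtxsymfn{r}(x))$ ranges in the compact set $\mathcal{K}_c:=\{(A,B):\,cI_d\le A,B\le(1-c)I_d\}\subseteq\B(\bC^d)_{\sa}^2$. I would then show that the map
\[
(\alpha,A,B)\mapsto\partial_\alpha^k\psi^{\qf}_{\alpha,\tp}(A\|B)
\]
is jointly continuous on $K\times\mathcal{K}_c$; the supremum bound then follows. For joint continuity, note that $\psi^{\qf}_{\alpha,\tp}(A\|B)$ is built from the following operations: $A\mapsto\W{A}=A(I-A)^{-1}$, which takes values in $[\frac{c}{1-c},\frac{1-c}{c}]$; real powers $\W{A}^{s}=\exp(s\log\W{A})$; products; $\exp$ and $\log$; and finally $\Tr\log(I+\cdot)$ applied to a positive definite argument.

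Each of these maps is real-analytic jointly in the scalar parameter and the matrix entries on the relevant open domain. For $\log$ and for $t\mapsto t^{s}$, this can be seen from the holomorphic functional calculus via a fixed Cauchy contour surrounding $[\frac{c}{2},\frac{2}{c}]$. The outer $z$-th power in $\req^{\op}_{\alpha,z}$ acts on the positive definite $W_{\alpha,z}$, whose spectrum stays in a compact subset of $(0,\infty)$ uniformly over $K\times\mathcal{K}_c$, so the same argument applies there. It follows that $\partial_\alpha^k$ of the composition is jointly continuous. Concretely, the $\alpha$-derivatives can be written as Cauchy-type contour integrals in which resolvents are uniformly bounded on the contour.

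The geometric case works the same way using $\W{B}^{\pm1/2}$ and powers of $\W{B}^{-1/2}\W{A}\W{B}^{-1/2}$. The $\tp=+\infty$ case needs only $\exp$ of a linear combination of logarithms. In the sandwiched case one takes $z=\alpha\in K\subset(0,\infty)$, so the exponents $\frac{\alpha}{2z}=\frac12$ and $\frac{1-\alpha}{z}$ stay bounded and the outer power $z$ stays in a compact subset of $(0,\infty)$.

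With the uniform bound in hand, I would use the mean value theorem together with dominated convergence (the constant bound is integrable on $\bT$). This lets me interchange $\partial_\alpha$ with $\frac{1}{2\pi}\int_0^{2\pi}$ successively for $k=1,2,\ldots$, which yields \eqref{eq:regularized_psi_derivative} and \eqref{eq:regularized_psi_derivative sandwiched}, and also shows that $\psi^{\reg}$ is $C^\infty$, since each derivative is a continuous function of $\alpha$. I expect the main obstacle to be rigorously justifying joint continuity of all higher $\alpha$-derivatives, uniformly over the compact set of symbol values. The cleanest route I see is the analytic functional calculus argument, using the uniform spectral gap provided by $c$, rather than explicit divided-difference formulas.
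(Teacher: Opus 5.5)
Your proposal is correct and follows the paper's proof. You write $\psi^{\reg}_{\alpha,\tp}$ as the integral from Theorem \ref{thm:psi-existence}, obtain a bound on $\partial_\alpha^k$ of the integrand that is uniform in $x$ and in $\alpha$ over compact intervals (from \eqref{eq:strict bound1}), and then interchange derivative and integral via the mean value theorem and dominated convergence. Your compactness and joint-analyticity argument only fills in the justification of that uniform bound, which the paper asserts without detail.
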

\begin{proof}
It is clear that $\psi_{\alpha,\tp}(\what\omega_{\mtxsymfn{q}(x)}\|\what\omega_{\mtxsymfn{r}(x)})$, or equivalently, the integrand in \eqref{eq:regularized_psi2}, is infinitely many times differentiable as a 
function of $\alpha$, and assumption \eqref{eq:strict bound1} guarantees that the derivatives of any order are bounded as a function on $\bT$, i.e., for any $k\in\bN$ and any compact interval 
$J\subset\bR$, 
$M_{k,J}:=\essup_{x\in\bT}\sup_{\alpha\in J}|\partial_{\alpha}^k\psi_{\alpha,\tp}(\what\omega_{\mtxsymfn{q}(x)}\|\what\omega_{\mtxsymfn{r}(x)})|<+\infty$. Hence, by the mean value theorem, the constant $M_{k,J}$ function is an integrable dominant to 
$(\partial_{\alpha}^{k-1}\psi_{\alpha',\tp}(\what\omega_{\mtxsymfn{q}(x)}\|\what\omega_{\mtxsymfn{r}(x)})
-\partial_{\alpha}^{k-1}\psi_{\alpha,\tp}(\what\omega_{\mtxsymfn{q}(x)}\|\what\omega_{\mtxsymfn{r}(x)}))/(\alpha'-\alpha)$ for any $\alpha',\alpha\in J$, $\alpha'\ne \alpha$. A standard application of the 
Lebesgue dominated convergence theorem then yields that 
the $k$-th derivative of $\psi^{\reg}_{\alpha,\tp}(\omega_Q\|\omega_R)$ exists, and is equal to 
the RHS of \eqref{eq:regularized_psi_derivative}.

This proves \eqref{eq:regularized_psi_derivative}.
The proof of \eqref{eq:regularized_psi_derivative sandwiched}
goes the same way, and hence we omit it.
\end{proof}

\begin{cor}\label{cor:regularized Renyi limit}
Let $\delta>0$ and $z:\,(1-\delta,1+\delta)\to(0,+\infty]$ be such that $\liminf_{\alpha\to 1}z(\alpha)>0$. Then 
\begin{align*}
\lim_{\alpha\to 1}D^{\reg}_{\alpha,z(\alpha)}(\omega_Q\|\omega_R)=D^{\reg}(\omega_Q\|\omega_R).
\end{align*}
\end{cor}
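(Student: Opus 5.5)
The plan is to reduce the statement to a pointwise limit under the integral and apply dominated convergence. By Theorem \ref{thm:psi-existence-2}, for every $\alpha\in(1-\delta,1+\delta)\setminus\{1\}$ we have
\begin{align*}
D^{\reg}_{\alpha,z(\alpha)}(\omega_Q\|\omega_R)=\frac{1}{2\pi}\int_0^{2\pi}D_{\alpha,z(\alpha)}\bz\what\omega_{\mtxsymfn{q}(x)}\|\what\omega_{\mtxsymfn{r}(x)}\jz\,\dd x .
\end{align*}
By \eqref{eq:regularized_relentr1}, $D^{\reg}(\omega_Q\|\omega_R)$ is the integral of $D(\what\omega_{\mtxsymfn{q}(x)}\|\what\omega_{\mtxsymfn{r}(x)})$. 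Both $\what\omega_{\mtxsymfn{q}(x)}$ and $\what\omega_{\mtxsymfn{r}(x)}$ are density operators, so they have unit trace. The quoted continuity result \eqref{eq:Umegaki def} of \cite{LinTomamichel15,Mosonyi_Hiai_Renyi_cont} then gives, for every fixed $x$,
\begin{align*}
D_{\alpha,z(\alpha)}\bz\what\omega_{\mtxsymfn{q}(x)}\|\what\omega_{\mtxsymfn{r}(x)}\jz\to D\bz\what\omega_{\mtxsymfn{q}(x)}\|\what\omega_{\mtxsymfn{r}(x)}\jz\qquad\text{as }\alpha\to1 .
\end{align*}
It therefore remains to find an integrable dominant that is uniform in $\alpha$ near $1$.

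For the dominant, I would work with a smaller interval $(1-\delta',1+\delta')$ on which $z(\alpha)\ge z_0>0$, which exists by the $\liminf$ assumption. Write $\rho_x:=\what\omega_{\mtxsymfn{q}(x)}$ and $\sigma_x:=\what\omega_{\mtxsymfn{r}(x)}$. Assumption \eqref{eq:strict bound1} gives, from \eqref{quasifree density}, that every eigenvalue of $\rho_x$ and $\sigma_x$ lies in $[c^d,1]$, and $d$ is fixed. Hence
\begin{align*}
\sigma_x\ge c^d I\ge c^d\rho_x,\qquad \rho_x\ge c^d\sigma_x ,
\end{align*}
that is, $e^{-\lambda}\sigma_x\le\rho_x\le e^{\lambda}\sigma_x$ with $\lambda:=d\log(1/c)$ independent of $x$. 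For $\alpha>1$ and any $z$, the $(\alpha,z)$ quantity is monotone in the second argument in the direction needed here, $\req_{\alpha,z}(\rho\|\sigma)\le\req_{\alpha,z}(\rho\|e^{-\lambda}\rho)$. This follows from the Löwner–Heinz inequality together with the monotonicity $A\le B\Rightarrow\Tr f(A)\le\Tr f(B)$ for increasing $f$, valid whenever $0\le(1-\alpha)/z$ or the exponent lies in $[-1,0)$.

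A simpler route avoids these exponent-range issues. Since $\req^{\op}_{\alpha,z}$ of commuting operators reduces to the classical expression, one can bound directly $0\le D_{\alpha,z}(\rho_x\|\sigma_x)\le D_{\max}(\rho_x\|\sigma_x)\le\lambda$ for $\alpha>1$. For $\alpha<1$, one can bound $|D_{\alpha,z}(\rho_x\|\sigma_x)|$ through $\|\log\rho_x\|+\|\log\sigma_x\|\le 2\lambda$. The most robust version I would actually write uses \eqref{eq:regularized_psi2}: the integrand $\psi_{\alpha,z}/(\alpha-1)$ is a difference quotient of $\alpha\mapsto\psi_{\alpha,z}(\rho_x\|\sigma_x)$, which vanishes at $\alpha=1$. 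Its absolute value is therefore at most $\sup_\alpha|\partial_\alpha\psi_{\alpha,z}|$. The derivative formulas in Lemmas \ref{lemma:psi derivative}--\ref{lemma:psi infty derivative} show that this is bounded by a constant depending only on $c$, $d$ and $z_0$, using the bounds \eqref{eq:strict bound3} on $\weightmtxfn{Q},\weightmtxfn{R}$.

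The main obstacle is making this derivative bound uniform in $z\ge z_0$, including $z=+\infty$. Powers like $\W{Q}^{\alpha/2z}$ are harmless, but $W_{\alpha,z}^{-1}$ and the factor $(I+W^{-z}_{\alpha,z})^{-1}$ need care. I would handle it by noting that all eigenvalues of $W_{\alpha,z}$ lie in $[\kappa^{-2/z_0'},\kappa^{2/z_0'}]$ with $\kappa=(1-c)/c$, where $z_0'$ is adjusted for $|\alpha|\le2$. This gives norm bounds on every factor independent of $z$. If that becomes messy, I would fall back on the Araki–Lieb–Thirring type sandwich $\req_{\alpha,\infty}\le\req_{\alpha,z}\le\req_{\alpha,1}$, which holds for $\alpha>1$ with reversed inequalities for $\alpha<1$. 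This reduces to the two cases $z=1$ and $z=\infty$, whose regularized limits are both $D^{\reg}$ by the smooth differentiability in Lemma \ref{lem:psi-differentiability} and the identity $\partial_\alpha\psi^{\reg}|_{\alpha=1}=D^{\reg}$. A squeeze then concludes.
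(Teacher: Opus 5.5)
Your proposal is correct and is essentially the paper's proof: dominated convergence over $x\in\bT$ with the pointwise limit supplied by \eqref{eq:Umegaki def}, where your mean-value bound on $\psi_{\alpha,z}/(\alpha-1)$ via Lemmas \ref{lemma:psi derivative}--\ref{lemma:psi infty derivative}, uniform in $z\ge z_0$, spells out the dominant that the paper only asserts follows from \eqref{eq:strict bound1}. Your side remarks are not needed and are slightly off. First, $\rho_x,\sigma_x$ need not commute, so the ``commuting'' justification does not apply; the bound $0\le D_{\alpha,z}(\rho_x\|\sigma_x)\le d\log(1/c)$ for $\alpha$ near $1$ and every $z\in(0,+\infty]$ follows directly from $\sigma_x\ge c^dI$. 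Second, in the ALT fallback the endpoint must be $z_0$ rather than $1$, since $z\mapsto\req_{\alpha,z}$ is decreasing for every $\alpha$; it is the order of the $D_{\alpha,z}$, not of the $\req_{\alpha,z}$, that flips at $\alpha=1$.
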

\begin{proof}
Let $\eta:=\liminf_{\alpha\to 1}z(\alpha)$. It is clear from the definition that for any 
$x\in\bT$, the function $(\alpha,z)\mapsto D_{\alpha,z}(\what\omega_{\mtxsymfn{q}(x)}\|\what\omega_{\mtxsymfn{r}(x)})$ is continuous on $((0,1)\cup(1,+\infty))\times[\eta/2,+\infty]$, and continuity at points of the form $(1,z)$, where $z\in[\eta/2,+\infty]$ follows from \eqref{eq:Umegaki def}. Moreover, assumption 
\eqref{eq:strict bound1} guarantees that 
\begin{align*}
\sup_{x\in\bT}\max_{\alpha\in[1-\delta/2,1+\delta/2]}\max_{z\in[\eta/2,+\infty]}
D_{\alpha,z}(\what\omega_{\mtxsymfn{q}(x)}\|\what\omega_{\mtxsymfn{r}(x)})<+\infty. 
\end{align*}
Hence, a standard application of the Lebesgue dominated convergence theorem yields that 
the limit and the integral below can be interchanged
\begin{align*}
\lim_{\alpha\to 1}D^{\reg}_{\alpha,z(\alpha)}(\omega_Q\|\omega_R)
&=
\lim_{\alpha\to 1}
\frac{1}{2\pi} \int_0^{2\pi}\,D_{\alpha,z(\alpha)}(\what\omega_{\mtxsymfn{q}(x)}\|\what\omega_{\mtxsymfn{r}(x)})\,\dd x\\
&=
\frac{1}{2\pi} \int_0^{2\pi}\,\lim_{\alpha\to 1}
D_{\alpha,z(\alpha)}(\what\omega_{\mtxsymfn{q}(x)}\|\what\omega_{\mtxsymfn{r}(x)})\,\dd x\\
&=
\frac{1}{2\pi} \int_0^{2\pi}D(\what\omega_{\mtxsymfn{q}(x)}\|\what\omega_{\mtxsymfn{r}(x)})\,\dd x
=D^{\reg}(\omega_Q\|\omega_R),
\end{align*}
and the rest follows by applying \eqref{eq:regularized_psi1-2} in the first,
and \eqref{eq:regularized_relentr1} in the last equality,
and \eqref{eq:Umegaki def} in the third equality.
\end{proof}

\begin{rem}
Corollary \ref{cor:regularized Renyi limit} shows that, in particular, both the regularized Petz-type and the regularized sandwiched R\'enyi $\alpha$-divergences converge to the regularized relative entropy in the 
$\alpha\to 1$ limit, i.e., 
\begin{align*}
\lim_{\alpha\to 1}D^{\reg}_{\alpha,1}(\omega_Q\|\omega_R)
=\lim_{\alpha\to 1}D^{\reg}_{\alpha,\alpha}(\omega_Q\|\omega_R)
=D^{\reg}(\omega_Q\|\omega_R).
\end{align*}
\end{rem}

\begin{cor}\label{cor:regularized max Renyi limit}
We have
\begin{align*}
\lim_{\alpha\to 1}D^{\reg}_{\alpha,\geo}(\omega_Q\|\omega_R)
=
D^{\reg}_{\geo}(\omega_Q\|\omega_R).
\end{align*}
\end{cor}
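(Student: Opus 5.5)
We follow the proof of Corollary \ref{cor:regularized Renyi limit}, replacing the $(\alpha,z)$ family by the geometric one. By Theorem \ref{thm:psi-existence-2}, for every $\alpha\in(0,1)\cup(1,+\infty)$,
\begin{align*}
D^{\reg}_{\alpha,\geo}(\omega_Q\|\omega_R)=\frac{1}{2\pi}\int_0^{2\pi}D_{\alpha,\geo}(\what\omega_{\mtxsymfn{q}(x)}\|\what\omega_{\mtxsymfn{r}(x)})\,\dd x .
\end{align*}
By \eqref{eq:regularized_BS1}, $D^{\reg}_{\geo}(\omega_Q\|\omega_R)$ is the integral of $D_{\geo}(\what\omega_{\mtxsymfn{q}(x)}\|\what\omega_{\mtxsymfn{r}(x)})$. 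So it suffices to exchange $\lim_{\alpha\to1}$ with the integral. For each fixed $x$, the pointwise limit is given by \eqref{eq:BS def1}. Since $\Tr\what\omega_{\mtxsymfn{q}(x)}=1$, it equals $D_{\geo}(\what\omega_{\mtxsymfn{q}(x)}\|\what\omega_{\mtxsymfn{r}(x)})$.

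We then justify dominated convergence with a uniform bound, and we get it from the representation as a difference quotient. Since $\psi_{1,\geo}(\what\omega_{\mtxsymfn{q}(x)}\|\what\omega_{\mtxsymfn{r}(x)})=\log\Tr\what\omega_{\mtxsymfn{q}(x)}=0$, we have
\begin{align*}
D_{\alpha,\geo}(\what\omega_{\mtxsymfn{q}(x)}\|\what\omega_{\mtxsymfn{r}(x)})=\frac{\psi_{\alpha,\geo}(\what\omega_{\mtxsymfn{q}(x)}\|\what\omega_{\mtxsymfn{r}(x)})-\psi_{1,\geo}(\what\omega_{\mtxsymfn{q}(x)}\|\what\omega_{\mtxsymfn{r}(x)})}{\alpha-1}.
\end{align*}
By the mean value theorem, this is bounded in absolute value by $\sup_{\alpha'\in[1-\delta,1+\delta]}|\partial_{\alpha}\psi_{\alpha',\geo}(\what\omega_{\mtxsymfn{q}(x)}\|\what\omega_{\mtxsymfn{r}(x)})|$.

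The main step is to show that this supremum is bounded uniformly in $x\in\bT$. For this we use the explicit formula \eqref{eq:psi max derivative1} of Lemma \ref{lemma:psi max derivative}, applied with $Q=\mtxsymfn{q}(x)$ and $R=\mtxsymfn{r}(x)$. By \eqref{eq:strict bound3}, $\weightmtxfn{Q}(x)$ and $\weightmtxfn{R}(x)$ have spectra in $[c/(1-c),(1-c)/c]$. Hence $\weightmtxfn{R}(x)^{-1/2}\weightmtxfn{Q}(x)\weightmtxfn{R}(x)^{-1/2}$ has spectrum in $[(c/(1-c))^2,((1-c)/c)^2]$, and its logarithm has norm at most $2\log((1-c)/c)$. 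The factor $(I+\req^{\op}_{\alpha,\geo}(\cdot)^{-1})^{-1}$ is a contraction, since $\req^{\op}_{\alpha,\geo}$ is positive definite. The terms $\Tr\log(I-Q)$ and $\Tr\log(I-R)$ are bounded by $d|\log c|$. Altogether,
\begin{align*}
|\partial_\alpha\psi_{\alpha,\geo}(\what\omega_{\mtxsymfn{q}(x)}\|\what\omega_{\mtxsymfn{r}(x)})|\le 2d|\log c|+2d\frac{1-c}{c}\log\frac{1-c}{c}
\end{align*}
for all $\alpha$ and $x$, using $\|\weightmtxfn{R}(x)^{\mp1/2}\|\le ((1-c)/c)^{1/2}$. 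This constant is an integrable dominant.

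Dominated convergence then gives
\begin{align*}
\lim_{\alpha\to1}D^{\reg}_{\alpha,\geo}(\omega_Q\|\omega_R)=\frac{1}{2\pi}\int_0^{2\pi}D_{\geo}(\what\omega_{\mtxsymfn{q}(x)}\|\what\omega_{\mtxsymfn{r}(x)})\,\dd x=D^{\reg}_{\geo}(\omega_Q\|\omega_R),
\end{align*}
where the last equality is \eqref{eq:regularized_BS1}. Measurability of the integrands in $x$ is inherited from the measurability of $\mtxsymfn{q}$ and $\mtxsymfn{r}$ together with the continuity of the matrix functions involved. The only real point is the uniform derivative bound above, and it follows directly from assumption \eqref{eq:strict bound1}.
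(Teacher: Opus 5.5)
Your proof is correct and follows the same route as the paper, which just points to the dominated-convergence argument of Corollary~\ref{cor:regularized Renyi limit} combined with \eqref{eq:regularized_psi1-2} and \eqref{eq:regularized_BS1}. The one place you go beyond the paper is that you make the uniform bound near $\alpha=1$ explicit, via the mean value theorem and the derivative formula \eqref{eq:psi max derivative1}, whereas the paper only asserts that such a bound follows from \eqref{eq:strict bound1}.
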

\begin{proof}
The proof is exactly analogous to that of Corollary \ref{cor:regularized Renyi limit},
using \eqref{eq:regularized_psi1-2} and \eqref{eq:regularized_BS1}; we omit the obvious details.
\end{proof}

\begin{rem}
Note that in the case of one mode per site, 
$\mtxsymfn{q}(x)=\hat q(x)\in[c,1-c]$,
$\mtxsymfn{r}(x)=\hat r(x)\in[c,1-c]$,
and 
\begin{align*}
\what\omega_{\mtxsymfn{q}(x)}=\begin{bmatrix}1-\hat q(x) & 0 \\ 0 & \hat q(x)\end{bmatrix},\ds\ds\ds
\what\omega_{\mtxsymfn{r}(x)}=\begin{bmatrix}1-\hat r(x) & 0 \\ 0 & \hat r(x)\end{bmatrix},
\end{align*}
are commuting qubit states. Hence, for any $\alpha\in\bR$, 
$\redi_{\alpha,\gamma}\bz\what\omega_{\mtxsymfn{q}(x)}\|\what\omega_{\mtxsymfn{r}(x)}\jz$ is independent of the value of $\gamma$, and therefore so is $\redi_{\alpha,\tp}^{\reg}(\omega_Q\|\omega_R)$, according to Theorem 
\ref{thm:psi-existence-2}. This is a manifestation of the asymptotic commutativity of 
$\omega_Q$ and $\omega_R$ in the single-mode case. Note that finite-size commutativity does not hold in the sense that $\what\omega_{Q_n}$ and $\what\omega_{R_n}$ do not commute in general, and hence 
the value of $\redi_{\alpha,\gamma}\bz\what\omega_{Q_n}\|\what\omega_{R_n}\jz$ depends on $\gamma$; see, e.g., 
\cite[Theorem 2.1]{Hiai-ALT}.
\end{rem}

\subsubsection{Regularized measured R\'enyi divergences}

For any notion of quantum R\'enyi $\alpha$-divergence $D_{\alpha,q}$, one may define its regularized version
in a more general setting along arbitrary sequences of pairs of PSD operators. That is, 
assume that for every $n\in\bN$, $\rho_n,\sigma_n$ are non-zero PSD operators on some finite-dimensional 
Hilbert space $\hil_n$. We will use the notations $\sequence{\rho}:=(\rho_n)_{n\in\bN}$, 
$\sequence{\sigma}:=(\sigma_n)_{n\in\bN}$.
Three notions of regularized $q$-R\'enyi $\alpha$-divergence are then defined as
\begin{align*}
\ul \redi^{\reg}_{\alpha,q}(\sequence{\rho}\|\sequence{\sigma})
&:=
\liminf_{n\to+\infty}\frac{1}{n}\redi_{\alpha,q}(\rho_n\|\sigma_n),\\
\oll \redi^{\reg}_{\alpha,q}(\sequence{\rho}\|\sequence{\sigma})
&:=
\limsup_{n\to+\infty}\frac{1}{n}\redi_{\alpha,q}(\rho_n\|\sigma_n),\\
\redi^{\reg}_{\alpha,q}(\sequence{\rho}\|\sequence{\sigma})
&:=
\lim_{n\to+\infty}\frac{1}{n}\redi_{\alpha,q}(\rho_n\|\sigma_n),
\end{align*}
where the last quantity is only defined when the limit exists, or equivalently,
$\ul \redi^{\reg}_{\alpha,q}(\sequence{\rho}\|\sequence{\sigma})
=
\oll \redi^{\reg}_{\alpha,q}(\sequence{\rho}\|\sequence{\sigma})$.
The regularized R\'enyi divergences 
$\ul \red^{\reg}_{\alpha,q}(\sequence{\rho}\|\sequence{\sigma}),
\oll \red^{\reg}_{\alpha,q}(\sequence{\rho}\|\sequence{\sigma})$, and 
$\red^{\reg}_{\alpha,q}(\sequence{\rho}\|\sequence{\sigma})$
are defined analogously.

We will consider such regularized quantities for two further notions of quantum R\'enyi divergences beyond the ones considered in Section \ref{sec:regularized az}: the measured R\'enyi divergences in this section, and 
the integral R\'enyi divergences in Section \ref{sec:regularized integral Renyi}.

For any pair of non-zero PSD operators $\rho,\sigma\in\B(\hil)\pne$, their
\ki{measured R\'enyi $\alpha$-divergence} for some $\alpha\in\bR$ is defined as
\begin{align*}
\redi_{\alpha,\meas}(\rho\|\sigma):=\max\set{\redi_{\alpha}\bz\M(\rho)\|\M(\sigma)\jz|M\in\povm(\hil,[d]),\,d\in\bN}.
\end{align*}
Note that here $\M(\rho)$ and $\M(\sigma)$ commute, and hence 
for any fixed $\alpha$, all previously considered notions of R\'enyi
$\alpha$-quantities $\req_{\alpha,\gamma}$, $\gamma\in(0,+\infty]\cup\{\geo\}$ coincide on them, 
and are equal to the classical R\'enyi $\alpha$-quantity \cite{Renyi} of 
$(\Tr M_i\rho)_{i\in[d]}$ and 
$(\Tr M_i\sigma)_{i\in[d]}$.
Hence, we simply write $D_{\alpha}\bz\M(\rho)\|\M(\sigma)\jz$ for this unique 
R\'enyi $\alpha$-divergence.
It is well known and easy to verify that the measured R\'enyi divergences are quantum R\'enyi divergences in the sense explained at the beginning of Section \ref{sec:finitedim}.
Analogously, we define for every $\alpha\in(0,+\infty)$,
\begin{align*}
\red_{\alpha,\meas}(\rho\|\sigma)
&:=
\max\set{\red_{\alpha}\bz\M(\rho)\|\M(\sigma)\jz|M\in\povm(\hil,[d]),\,d\in\bN}.
\end{align*}
According to \eqref{eq:Renyi scaling2}, for any $\alpha\in(0,1)\cup(1,+\infty)$, we have
\begin{align*}
\red_{\alpha,\meas}(\rho\|\sigma)
&=
\frac{\oll\alpha-1}{\alpha-1}\redi_{\alpha,\meas}(\rho\|\sigma)+\log\Tr\rho-\log\Tr\sigma.
\end{align*}

For any $\alpha\in[1/2,+\infty)$, let 
\begin{align*}
\kappa(\alpha):=\begin{cases}
1,&\alpha\in[1/2,2],\\
\frac{\alpha}{\alpha-1},&\alpha>2.
\end{cases}
\end{align*}
According to \cite[Lemma 3]{HT14} and the proof of \cite[Theorem 3.7]{MO},
for any $\rho,\omega\in\B(\hil)\pne$,
\begin{align}
\redi_{\alpha,\alpha}(\rho\|\omega)-\kappa(\alpha)\log|\spec(\omega)|
&\le
\redi_{\alpha,\meas}(\rho\|\omega)
\le
\redi_{\alpha,\alpha}(\rho\|\omega),\label{eq:pinching bound}\\
\red_{\alpha,\alpha}(\rho\|\omega)-\kappa(\alpha)\log|\spec(\omega)|
&\le
\red_{\alpha,\meas}(\rho\|\omega)
\le
\red_{\alpha,\alpha}(\rho\|\omega),
\ds\ds\ds\alpha\in[1/2,+\infty),
\label{eq:pinching bound2}
\end{align}
where $|\spec(\omega)|$ denotes the number of different eigenvalues of $\omega$.
From this it follows immediately that for a sequence of i.i.d. states $\rho_n=\rho^{\otimes n}$, 
$\sigma_n=\sigma^{\otimes n}$, $n\in\bN$, the regularized measured R\'enyi $\alpha$-divergence along this sequence 
is equal to the sandwiched R\'enyi $\alpha$-divergence of $\rho$ and $\sigma$ for any $\alpha\in[1/2,+\infty)$; see \cite{HT14,MO}. This is due to the fact that $|\spec(\sigma^{\otimes n})|$ grows only polynomially in $n$. 
This latter property, however, need not hold anymore for $|\spec(\what\omega_{Q_n})|$ for a quasi-free state
$\omega_Q$ of a fermion chain, which may grow exponentially in $n$. 

An alternative bound to circumvent this problem was given in \cite[Lemma 10, Example 11]{MO-correlated}, based on a technique of grouping the eigenvalues, introduced in \cite[Theorem 14]{TomamichelHayashi2013}.
Lemma \ref{lemma:measured vs sandwiched} below is a variant and slight improvement of \cite[Lemma 10, Example 11]{MO-correlated}. To state it, we 
follow \cite{TomamichelHayashi2013} and introduce for any non-zero PSD operator 
$\omega\in\B(\hil)\pne$ the quantity
\begin{align*}
\Theta(\omega):=\min\left\{|\spec(\omega)|,2+\ceil{\log\lambda_{\max}(\omega)-\log\lambda_{\min}^*(\omega)}\right\},
\end{align*}
where $\lambda_{\max}(\omega)=\norm{\omega}$ is the largest eigenvalue of $\omega$, and 
\begin{align}\label{eq:lambda min}
\lambda_{\min}^*(\omega):=\max\set{\lambda>0|\lambda\omega^0\le\omega}
\end{align}
is the smallest non-zero eigenvalue of $\omega$. 

\begin{lemma}\label{lemma:measured vs sandwiched}
For any non-zero PSD operators $\rho,\sigma\in\B(\hil)\pne$ on some finite-dimensional
Hilbert space $\hil$,
\begin{align}
\redi_{\alpha,\alpha}(\rho\|\sigma)-\kappa(\alpha)\log\Theta(\sigma)-1
&\le
\redi_{\alpha,\meas}(\rho\|\sigma)
\le 
\redi_{\alpha,\alpha}(\rho\|\sigma),
&\alpha\in[1/2,+\infty),
\label{eq:measured Renyi bounds1}\\
\redi_{\alpha,1-\alpha}(\rho\|\sigma)-\kappa(1-\alpha)\log\Theta(\rho)-1
&\le
\redi_{\alpha,\meas}(\rho\|\sigma)
\le
\redi_{\alpha,1-\alpha}(\rho\|\sigma),
&\alpha\in(-\infty,1/2].\label{eq:measured Renyi bounds2}
\end{align}
\end{lemma}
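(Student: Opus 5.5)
The plan is to follow the eigenvalue-grouping argument of \cite[Theorem 14]{TomamichelHayashi2013}, as in \cite[Lemma 10]{MO-correlated}, and combine it with the pinching bound \eqref{eq:pinching bound}. I would first prove \eqref{eq:measured Renyi bounds1}, and then obtain \eqref{eq:measured Renyi bounds2} from it by the symmetry \eqref{eq:Renyi symmetry}.

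The upper bounds in \eqref{eq:measured Renyi bounds1} are the data-processing inequality for the sandwiched divergence when $\alpha\ge 1/2$, which is already contained in \eqref{eq:pinching bound}. For the lower bound, if $\Theta(\sigma)=|\spec(\sigma)|$ we are done by \eqref{eq:pinching bound}, even without the $-1$. So assume $\Theta(\sigma)=2+\ceil{\log\lambda_{\max}(\sigma)-\log\lambda^*_{\min}(\sigma)}=:2+K$.

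Next I construct a coarse-grained $\tilde\sigma$ that commutes with $\sigma$, has support $\sigma^0$, and has at most $K+1\le\Theta(\sigma)$ distinct eigenvalues. Write $\sigma=\sum_\lambda\lambda P_\lambda$ and split the interval $[\lambda^*_{\min},\lambda_{\max}]$ into at most $K+1$ bins $[\lambda^*_{\min}e^{j},\lambda^*_{\min}e^{j+1})$, $j=0,\ldots,K$. Then replace each eigenvalue $\lambda$ by the left endpoint of its bin. This gives $e^{-1}\sigma\le\tilde\sigma\le\sigma$.

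The key points are then the following.
\begin{enumerate}
\item Since $\tilde\sigma\le\sigma$ and the measured divergence is monotone decreasing in its second argument, $\redi_{\alpha,\meas}(\rho\|\sigma)\ge\redi_{\alpha,\meas}(\rho\|\tilde\sigma)$ up to the normalization effect of $\Tr\tilde\sigma$.
\item Applying \eqref{eq:pinching bound} to $\tilde\sigma$ gives $\redi_{\alpha,\meas}(\rho\|\tilde\sigma)\ge \redi_{\alpha,\alpha}(\rho\|\tilde\sigma)-\kappa(\alpha)\log\Theta(\sigma)$.
\item Since $\tilde\sigma\ge e^{-1}\sigma$ and the sandwiched divergence is antimonotone in its second argument, $\redi_{\alpha,\alpha}(\rho\|\tilde\sigma)\ge\redi_{\alpha,\alpha}(\rho\|\sigma)-1$, again modulo normalization.
\end{enumerate}

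The main obstacle is the normalization in $\redi$, which uses $\sigma/\Tr\sigma$ and the prefactor $1/(\oll\alpha-1)$. I would do the bookkeeping on the level of $\red$ with unnormalized operators, where the scaling is clean: $\red_{\alpha}(\rho\|\lambda\sigma)=\red_\alpha(\rho\|\sigma)-\log\lambda$. The two losses from $\tilde\sigma\le\sigma$ and $\tilde\sigma\ge e^{-1}\sigma$ should then combine into a single $-1$, with the $\log\Tr$ terms cancelling because they are identical on both sides. I would then translate back to $\redi$ via \eqref{eq:Renyi scaling2}, checking that $(\oll\alpha-1)/(\alpha-1)$ does not spoil the constant. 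This works for $\alpha\ge1$; for $\alpha\in[1/2,1)$ one has $\oll\alpha=\alpha$ as well. If the constant comes out wrong, I would instead choose the bin representative in the middle, or use the geometric mean of the bin endpoints.

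Finally, for $\alpha\le1/2$, I apply \eqref{eq:Renyi symmetry} to the pair $(\sigma,\rho)$ with $1-\alpha\ge1/2$. This uses the fact that $\redi_{\alpha,\meas}$ satisfies the same symmetry, which is immediate from the classical symmetry of $\redi_\alpha$. Together with $\redi_{1-\alpha,1-\alpha}(\sigma\|\rho)=\redi_{\alpha,1-\alpha}(\rho\|\sigma)$, this yields \eqref{eq:measured Renyi bounds2} with $\Theta(\rho)$ and $\kappa(1-\alpha)$.
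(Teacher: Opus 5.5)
Your strategy matches the paper's: Tomamichel--Hayashi eigenvalue grouping, the pinching bound applied to the coarse-grained operator, the scaling $\red_{\alpha}(\rho\|\lambda\sigma)=\red_{\alpha}(\rho\|\sigma)-\log\lambda$, and the symmetry \eqref{eq:Renyi symmetry} for $\alpha\le 1/2$. However, as written, both monotonicity steps (i) and (iii) go in the wrong direction, so the chain does not close.

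Antimonotonicity in the second argument means that a \emph{smaller} second argument gives a \emph{larger} divergence. So $\tilde\sigma\le\sigma$ yields $\red_{\alpha,\meas}(\rho\|\tilde\sigma)\ge\red_{\alpha,\meas}(\rho\|\sigma)$, which is the reverse of (i). Indeed, on the level of $\red$, where you intend to do the bookkeeping, take $\rho=\sigma$ and $\tilde\sigma\ne\sigma$. Then the left-hand side of (i) is $0$, while the right-hand side is strictly positive: the measurement in the common eigenbasis already gives a positive classical divergence. Similarly, $\tilde\sigma\ge e^{-1}\sigma$ only gives the upper bound $\red_{\alpha,\alpha}(\rho\|\tilde\sigma)\le\red_{\alpha,\alpha}(\rho\|\sigma)+1$. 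The lower bound you need in (iii) instead follows from $\tilde\sigma\le\sigma$, with no loss at all.

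The fix is to swap the roles of the two operator inequalities. Use $\sigma\le e\tilde\sigma$ in the measured step and $\tilde\sigma\le\sigma$ in the sandwiched step:
\begin{align*}
\red_{\alpha,\meas}(\rho\|\sigma)
&\ge\red_{\alpha,\meas}(\rho\|e\tilde\sigma)
=\red_{\alpha,\meas}(\rho\|\tilde\sigma)-1
\ge\red_{\alpha,\alpha}(\rho\|\tilde\sigma)-\kappa(\alpha)\log|\spec(\tilde\sigma)|-1\\
&\ge\red_{\alpha,\alpha}(\rho\|\sigma)-\kappa(\alpha)\log\Theta(\sigma)-1.
\end{align*}
Since $\oll\alpha=\alpha$ for $\alpha\ge 1/2$, \eqref{eq:Renyi scaling2} adds the same term $\log\Tr\sigma-\log\Tr\rho$ to both sides, which gives \eqref{eq:measured Renyi bounds1}. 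No change of bin representative is needed.

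A minor point: your count of $K+1$ distinct eigenvalues omits the eigenvalue $0$ when $\sigma$ is singular, and \eqref{eq:pinching bound2} does count it. The conclusion survives, since $|\spec(\tilde\sigma)|\le K+2=\Theta(\sigma)$.

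For comparison, the paper rounds eigenvalues \emph{up} to a geometric grid of ratio $q^{1/m}\le e$, where $q=\lambda_{\max}(\sigma)/\lambda_{\min}^*(\sigma)$. This gives $\sigma\le\hat\sigma\le q^{1/m}\sigma$, so there the measured step is free and the sandwiched step costs $\log q/m\le 1$. Both versions work once each comparison is matched to the direction of monotonicity. The rest of your plan, including the reduction of \eqref{eq:measured Renyi bounds2} via symmetry, agrees with the paper.
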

\begin{proof}
Assume first that $\alpha\in[1/2,+\infty)$. 
Then, the second inequality in \eqref{eq:measured Renyi bounds1} follows from the monotonicity 
of $\redi_{\alpha,\alpha}$ under CPTP maps; see, e.g., \cite{FL}. 
Applying \eqref{eq:pinching bound} with $\omega=\sigma$ yields
\begin{align}\label{eq:lemma:measured vs sandwiched proof1}
\redi_{\alpha,\alpha}(\rho\|\sigma)-\kappa(\alpha)\log|\spec(\sigma)|
&\le
\redi_{\alpha,\meas}(\rho\|\sigma).
\end{align}
Hence, the proof of \eqref{eq:measured Renyi bounds1} will be complete if we show that 
\begin{align}\label{eq:lemma:measured vs sandwiched proof2}
\redi_{\alpha,\alpha}(\rho\|\sigma)-\kappa(\alpha)\log(2+m)-1
&\le
\redi_{\alpha,\meas}(\rho\|\sigma),
\end{align}
where $m:=\ceil{\log q}$, $q:=\lambda_{\max}(\sigma)/\lambda_{\min}^*(\sigma)$.
For this, we follow the proof idea of 
\cite[Theorem 14]{TomamichelHayashi2013}.

First, note that if $m=0$ then \eqref{eq:lemma:measured vs sandwiched proof2} follows from 
\eqref{eq:lemma:measured vs sandwiched proof1}. Hence, for the rest we assume that $m>0$, or equivalently, that 
$\lambda_{\max}(\sigma)>\lambda_{\min}^*(\sigma)$.
Define
\begin{align*}
\what\sigma:=\sum_{\lambda\in\spec(\sigma)\setminus\{0\}}\lambda_{\min}^*(\sigma)q^{\frac{k(\lambda)}{m}}P_{\lambda}^{\sigma},
\ds\ds\ds\text{where}\ds\ds\ds
k(\lambda):=
\ceil{m\frac{\log\lambda-\log\lambda_{\min}^*(\sigma)}{\log\lambda_{\max}(\sigma)-\log\lambda_{\min}^*(\sigma)}},
\end{align*}
and $P_{\lambda}^{\sigma}$ is the spectral projection of $\sigma$ corresponding to the eigenvalue $\lambda$.
Then 
\begin{align*}
\sigma\le\what\sigma\le q^{\frac{1}{m}}\sigma,\ds\ds\ds\text{and}\ds\ds\ds
|\spec(\what\sigma)|\le m+2.
\end{align*}
Hence,
\begin{align*}
\red_{\alpha,\alpha}(\rho\|\sigma)
&\le
\red_{\alpha,\alpha}(\rho\|q^{-1/m}\what\sigma)
=
\red_{\alpha,\alpha}(\rho\|\what\sigma)+\underbrace{\frac{\log q}{m}}_{\le 1}\\
&\le
\red_{\alpha,\meas}(\rho\|\what\sigma)+\kappa(\alpha)\log\underbrace{|\spec(\what\sigma)|}_{\le m+2}+1
\le
\red_{\alpha,\meas}(\rho\|\sigma)+\kappa(\alpha)\log(m+2)+1,
\end{align*}
where the first and the last inequalities follow as $\red_{\alpha,\alpha}$ and 
$\red_{\alpha,\meas}$ are both
monotone non-increasing in their second argument w.r.t. the PSD order, due to the 
operator monotonicity properties of $\id_{[0,+\infty)}^{\frac{1-\alpha}{\alpha}}$ for the given $\alpha$ values,
the equality is straightforward by definition, and the second inequality follows by applying 
\eqref{eq:pinching bound2} with $\omega=\what\sigma$.
Taking into account \eqref{eq:Renyi scaling2}, we obtain \eqref{eq:lemma:measured vs sandwiched proof2}.

Next, consider $\alpha\in(-\infty,1/2]$. It is 
obvious by definition that the symmetry relation \eqref{eq:Renyi symmetry} extends to the measured 
R\'enyi divergence as well.
Hence,
\begin{align*}
\redi_{\alpha,\meas}(\rho\|\sigma)&=\redi_{1-\alpha,\meas}(\sigma\|\rho)
\ge
\redi_{1-\alpha,1-\alpha}(\sigma\|\rho)-\kappa(1-\alpha)\log\Theta(\rho)-1,
\end{align*}
where the inequality is due to the first inequality in \eqref{eq:measured Renyi bounds1}.
Likewise,
\begin{align*}
\redi_{\alpha,\meas}(\rho\|\sigma)&=
\redi_{1-\alpha,\meas}(\sigma\|\rho)
\le
\redi_{1-\alpha,1-\alpha}(\sigma\|\rho)=
\redi_{\alpha,1-\alpha}(\rho\|\sigma),
\end{align*}
where the inequality is due to the second inequality in \eqref{eq:measured Renyi bounds1}.
\end{proof}

\begin{rem}
For $\alpha\ge 1$, the correction factor $\Theta(\sigma)$ in \eqref{eq:measured Renyi bounds1} can be improved to 
\begin{align*}
\Theta^*(\sigma):=\min\left\{|\spec(\sigma)\setminus\{0\}|,1+\ceil{\log\lambda_{\max}(\sigma)-
\log\lambda_{\min}^*(\sigma)}\right\}.
\end{align*}
This is because it only plays a role when $\rho^0\le\sigma^0$, since otherwise 
$\redi_{\alpha,\alpha}(\rho\|\sigma)=+\infty=\redi_{\alpha,\meas}(\rho\|\sigma)$. On the other hand,
\eqref{eq:pinching bound} follows from the pinching inequality \cite{H:pinching}, according to which
\begin{align*}
\rho\le|\spec(\omega)|\P_{\omega}(\rho), 
\end{align*}
where $\P_{\omega}(\rho):=\sum_{\lambda\in\spec(\omega)}P_{\lambda}^{\omega}\rho P_{\lambda}^{\omega}$. 
However, the prefactor here can be improved to $|\spec(\omega)\setminus\{0\}|=:r$ when 
$\rho^0\le \omega^0$. 
Indeed, the map $\B(\hil)\ni X\mapsto X^*\rho X$ is easily seen to be operator convex, whence
\begin{align*}
\rho=r^2\bz\sum_{\lambda\in\spec(\omega)\setminus\{0\}}\frac{1}{r}P_{\lambda}^{\omega}\jz
\rho\bz\sum_{\lambda\in\spec(\omega)\setminus\{0\}}\frac{1}{r}P_{\lambda}^{\omega}\jz
\le r\sum_{\lambda\in\spec(\omega)\setminus\{0\}}P_{\lambda}^{\omega}\rho P_{\lambda}^{\omega}.
\end{align*}
\end{rem}

\begin{cor}\label{cor:measured vs sandwiched}
For any two sequences of non-zero PSD operators $\rho_n,\sigma_n\in\B(\hil_n)\pne$, $n\in\bN$, 
\begin{align*}
\ul \redi^{\reg}_{\alpha,\meas}(\sequence{\rho}\|\sequence{\sigma})
&\le
\begin{cases}
\ul \redi^{\reg}_{\alpha,\alpha}(\sequence{\rho}\|\sequence{\sigma}),&\alpha\in[1/2,+\infty),\\
\ul \redi^{\reg}_{\alpha,1-\alpha}(\sequence{\rho}\|\sequence{\sigma}),&\alpha\in(-\infty,1/2],
\end{cases}\\
\oll \redi^{\reg}_{\alpha,\meas}(\sequence{\rho}\|\sequence{\sigma})
&\le
\begin{cases}
\oll \redi^{\reg}_{\alpha,\alpha}(\sequence{\rho}\|\sequence{\sigma}),&\alpha\in[1/2,+\infty),\\
\oll \redi^{\reg}_{\alpha,1-\alpha}(\sequence{\rho}\|\sequence{\sigma}),&\alpha\in(-\infty,1/2],
\end{cases}.
\end{align*}
Moreover, if $\lim_{n\to+\infty}\frac{1}{n}\log\Theta(\sigma_n)=0$ then, 
for any $\alpha\in[1/2,+\infty)$, 
$\redi_{\alpha,\meas}^{\reg}(\sequence{\rho}\|\sequence{\sigma})$ exists if and only if 
$\redi_{\alpha,\alpha}^{\reg}(\sequence{\rho}\|\sequence{\sigma})$ exists, in which case the two are equal, i.e.,
\begin{align}\label{eq:regmeas equals sandwiched1}
\redi^{\reg}_{\alpha,\meas}(\sequence{\rho}\|\sequence{\sigma})=
\redi^{\reg}_{\alpha,\alpha}(\sequence{\rho}\|\sequence{\sigma}).
\end{align}

Likewise, if $\lim_{n\to+\infty}\frac{1}{n}\log\Theta(\rho_n)=0$ then 
for any $\alpha\in(-\infty,1/2]$,
$\redi_{\alpha,\meas}^{\reg}(\sequence{\rho}\|\sequence{\sigma})$ exists if and only if 
$\redi_{\alpha,1-\alpha}^{\reg}(\sequence{\rho}\|\sequence{\sigma})$ exists, in which case the two are equal, i.e., 
\begin{align}\label{eq:regmeas equals sandwiched2}
\redi^{\reg}_{\alpha,\meas}(\sequence{\rho}\|\sequence{\sigma})=
\redi^{\reg}_{\alpha,1-\alpha}(\sequence{\rho}\|\sequence{\sigma}).
\end{align}
\end{cor}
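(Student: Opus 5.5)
The plan is to divide the bounds of Lemma \ref{lemma:measured vs sandwiched}, applied to the pair $(\rho_n,\sigma_n)$, by $n$ and then pass to $\liminf$ and $\limsup$. No further quantum information input is needed. The whole argument is bookkeeping with the two-sided bounds \eqref{eq:measured Renyi bounds1}--\eqref{eq:measured Renyi bounds2}.

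First, the inequalities between regularized quantities. Fix $\alpha\in[1/2,+\infty)$. The upper bound in \eqref{eq:measured Renyi bounds1} gives $\frac1n\redi_{\alpha,\meas}(\rho_n\|\sigma_n)\le\frac1n\redi_{\alpha,\alpha}(\rho_n\|\sigma_n)$ for every $n$. Taking $\liminf$ and $\limsup$ of both sides, which is allowed since both operations are monotone, yields the two claimed inequalities. For $\alpha\in(-\infty,1/2]$, the same reasoning with the upper bound in \eqref{eq:measured Renyi bounds2} produces the comparison with $\redi_{\alpha,1-\alpha}$.

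Second, the equivalence of existence and the equality, assuming $\frac1n\log\Theta(\sigma_n)\to0$ and $\alpha\ge1/2$. Combining both sides of \eqref{eq:measured Renyi bounds1} gives
\begin{align*}
\frac1n\redi_{\alpha,\alpha}(\rho_n\|\sigma_n)-\epsilon_n\le\frac1n\redi_{\alpha,\meas}(\rho_n\|\sigma_n)\le\frac1n\redi_{\alpha,\alpha}(\rho_n\|\sigma_n),
\end{align*}
where $\epsilon_n:=\frac1n\big(\kappa(\alpha)\log\Theta(\sigma_n)+1\big)$. Because $\kappa(\alpha)$ is a constant independent of $n$ and $\Theta(\sigma_n)\ge1$, we have $\epsilon_n\ge0$ and $\epsilon_n\to0$.

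It follows that the difference of the two normalized sequences tends to $0$. Hence the sequences have the same $\liminf$ and the same $\limsup$, with values in $[0,+\infty]$. The infinite case is harmless: if one sequence diverges to $+\infty$, the other is pinched to do the same, and both limits then "exist" with value $+\infty$ under the paper's convention. So one limit exists if and only if the other does, and in that case they coincide, which gives \eqref{eq:regmeas equals sandwiched1}.

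The case $\alpha\le1/2$ is identical. One uses \eqref{eq:measured Renyi bounds2} with correction term $\frac1n\big(\kappa(1-\alpha)\log\Theta(\rho_n)+1\big)\to0$; note that $1-\alpha\ge1/2$, so $\kappa(1-\alpha)$ is defined. This gives \eqref{eq:regmeas equals sandwiched2}.

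There is no real obstacle here. The only points needing care are that the additive constant $1$ and the factor $\kappa$ do not depend on $n$, so they vanish after division by $n$, and that values $+\infty$ are handled consistently when comparing $\liminf$/$\limsup$.
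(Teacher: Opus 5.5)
Your proposal is correct and is essentially the paper's own argument. The paper simply says the corollary is immediate from Lemma \ref{lemma:measured vs sandwiched}. Your division by $n$, passage to $\liminf$/$\limsup$ with the $n$-independent constants $\kappa(\alpha)$ and $1$ vanishing, and handling of the $+\infty$ case via the two-sided bound are exactly the bookkeeping the paper leaves implicit.
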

\begin{proof}
Immediate from Lemma \ref{lemma:measured vs sandwiched}.
\end{proof}

\begin{thm}\label{thm:regmeas Renyi}
Assume that $Q,R\in\B(\ell^2_d(\bZ))$ satisfy \eqref{eq:strict bound2}, i.e.,  $cI\le Q,R\le (1-c)I$. Then 
\begin{align}\label{eq:theta zero}
\lim_{n\to+\infty}\frac{1}{n}\log\Theta(\what\omega_{Q_n})=0=\lim_{n\to+\infty}\frac{1}{n}\log\Theta(\what\omega_{R_n})=0,
\end{align}
and
\begin{align}
\redi^{\reg}_{\alpha,\meas}(\omega_Q\|\omega_R)=
\begin{cases}
\redi^{\reg}_{\alpha,\alpha}(\omega_Q\|\omega_R)
=
\frac{1}{2\pi} \int_0^{2\pi}D_{\alpha,\alpha}(\what\omega_{\mtxsymfn{q}(x)}\|\what\omega_{\mtxsymfn{r}(x)})\,\dd x
,&\alpha\in[1/2,+\infty),\\
\redi^{\reg}_{\alpha,1-\alpha}(\omega_Q\|\omega_R)
=
\frac{1}{2\pi} \int_0^{2\pi}D_{\alpha,1-\alpha}(\what\omega_{\mtxsymfn{q}(x)}\|\what\omega_{\mtxsymfn{r}(x)})\,\dd x
,&\alpha\in(-\infty,1/2].
\end{cases}
\label{eq:regmeasured explicit}
\end{align}
\end{thm}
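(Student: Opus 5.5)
The plan is to verify the hypotheses of Corollary \ref{cor:measured vs sandwiched} for $\rho_n:=\what\omega_{Q_n}$ and $\sigma_n:=\what\omega_{R_n}$, namely \eqref{eq:theta zero}. Given these, the regularized measured divergences reduce to the regularized sandwiched ones (for $\alpha\ge 1/2$) or to the $(\alpha,1-\alpha)$ ones (for $\alpha\le 1/2$). Their existence and the explicit integral formulas then come from Theorem \ref{thm:psi-existence-2}, applied with $\gamma=\alpha$, respectively $\gamma=1-\alpha$.

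For \eqref{eq:theta zero} I will bound $\Theta$ by the second term in its definition, i.e., by the logarithmic ratio of extreme eigenvalues. By \eqref{quasifree density}, every eigenvalue of $\what\omega_{Q_n}$ has the form $\prod_{j\in S}q_j\prod_{j\notin S}(1-q_j)$, where $q_1,\dots,q_{nd}$ are the eigenvalues of $Q_n$. By \eqref{eq:strict bound3} each $q_j\in[c,1-c]$, so every eigenvalue of $\what\omega_{Q_n}$ lies in $[c^{nd},(1-c)^{nd}]$. In particular $\what\omega_{Q_n}$ is invertible and
\begin{align*}
\log\lambda_{\max}(\what\omega_{Q_n})-\log\lambda_{\min}^*(\what\omega_{Q_n})\le nd\log\frac{1-c}{c}.
\end{align*}
Hence $1\le\Theta(\what\omega_{Q_n})\le 3+nd\log\frac{1-c}{c}$, which grows only linearly in $n$, so $\frac1n\log\Theta(\what\omega_{Q_n})\to0$. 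The same argument applies to $R$. This is the only new ingredient: it replaces the possibly exponential count $|\spec(\what\omega_{Q_n})|$ by the linear bound coming from eigenvalue grouping.

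Next, I apply Corollary \ref{cor:measured vs sandwiched}. For $\alpha\in[1/2,+\infty)$, Theorem \ref{thm:psi-existence-2} with $\gamma=\alpha$ shows that $\redi^{\reg}_{\alpha,\alpha}(\omega_Q\|\omega_R)$ exists and equals the integral of the single-site quantities. Then \eqref{eq:regmeas equals sandwiched1} gives the first case of \eqref{eq:regmeasured explicit}. For $\alpha\in(-\infty,1/2]$, I use $\Theta(\rho_n)$ and Theorem \ref{thm:psi-existence-2} with $\gamma=1-\alpha$, which is a legitimate $z$ because $1-\alpha>0$. Then \eqref{eq:regmeas equals sandwiched2} gives the second case.

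Two edge cases need care. The points $\alpha=0,1$ are covered by continuity or the conventions defining $\redi_{0,z}$ and $\redi_{1,z}$. For these I would use the relative-entropy formula \eqref{eq:regularized_relentr1} together with Corollary \ref{cor:regularized Renyi limit}, or simply note that Lemma \ref{lemma:measured vs sandwiched} applies directly. The main obstacle is the normalization mismatch: the integrand in \eqref{eq:regmeasured explicit} is written with $D$ rather than $\redi$. I would handle it through \eqref{eq:Renyi scaling2}, since for normalized density operators $\what\omega_{\mtxsymfn{q}(x)}$ the two differ only by the factor $(\oll\alpha-1)/(\alpha-1)$, and check that this matches the stated formula.
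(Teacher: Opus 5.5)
Your proposal is correct and is essentially the paper's proof: bound $\Theta$ by its eigenvalue-ratio term, using that every eigenvalue of $\what\omega_{Q_n}$ and $\what\omega_{R_n}$ is a product of $nd$ factors from $[c,1-c]$, so that $\Theta$ grows at most linearly in $n$, and then combine Corollary~\ref{cor:measured vs sandwiched} with Theorem~\ref{thm:psi-existence-2}. For $\alpha\in\{0,1\}$, which that theorem excludes, the clean route is Lemma~\ref{lemma:measured vs sandwiched} together with \eqref{eq:regularized_relentr1} (with $Q$ and $R$ interchanged at $\alpha=0$), rather than continuity in $\alpha$. On the normalization point, your check succeeds for $\alpha\ge 1/2$ since $\oll\alpha=\alpha$ there, but for $\alpha<1/2$ the factor $(\oll\alpha-1)/(\alpha-1)=\alpha/(1-\alpha)$ is not $1$, and $D_{\alpha,1-\alpha}$ is not even defined for $\alpha\le 0$, so the integrand in the second case has to be read as $\redi_{\alpha,1-\alpha}$; this is a slip in the statement, also passed over in the paper's proof, not a defect of your argument.
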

\begin{proof}
By assumption \eqref{eq:strict bound2}, 
$cI_{nd}\le R_n\le (1-c)I_{nd}$. According to the explanation between \eqref{quasifree density} and 
\eqref{eq:quasifree density},
these bounds imply that every eigenvalue of $\what\omega_{R_n}$ is a product of $nd$ factors, each lower bounded by $c$, whence $\what\omega_{R_n}\ge c^{nd}I_{nd}$.
Thus,
\begin{align*}
\Theta(\what\omega_{R_n})\le\bz 3+\log\lambda_{\max}(\what\omega_{R_n})-\log\lambda_{\min}(\what\omega_{R_n})\jz
\le\bz 4-dn\log c\jz,
\end{align*}
and
\begin{align*}
0\le\liminf_{n\to+\infty}\frac{1}{n}\log\Theta(\what\omega_{R_n})\le\limsup_{n\to+\infty}\frac{1}{n}\log\Theta(\what\omega_{R_n})\le 0,
\end{align*}
and an exactly analogous argument yields that 
$\lim_{n\to+\infty}\frac{1}{n}\log\Theta(\what\omega_{Q_n})=0$,
proving \eqref{eq:theta zero}.
The assertion in \eqref{eq:regmeasured explicit} then follows immediately from 
Corollary \ref{cor:measured vs sandwiched} and Theorem \ref{thm:psi-existence-2}.
\end{proof}

\subsubsection{Regularized hockey-stick R\'enyi divergences}
\label{sec:regularized integral Renyi}

In this section we consider the recently introduced \ki{integral R\'enyi divergences}, or 
\ki{hockey-stick R\'enyi divergences} \cite{Frenkel_integral,Hirche_Tomamichel_integral}.
According to \cite{HMT_hs}, for any $\rho,\sigma\in\B(\hil)\pne$ and $\alpha\in(0,1)\cup(1,+\infty)$,
the hockey-stick R\'enyi $\alpha$-quantity of $\rho$ and $\sigma$ can be written as
\begin{align}\label{eq:hsq def}
\hsq{\alpha}{}(\rho\|\sigma)
&:=
\Tr\sigma+\alpha(\alpha-1)\left[\int_0^1\Tr(\rho-t\sigma)_-\,t^{\alpha-2}\,dt
+\int_1^{+\infty}\Tr(\rho-t\sigma)_+\,t^{\alpha-2}\,dt\right].
\end{align}
Here, $(\rho-t\sigma)_-$ is the negative part and $(\rho-t\sigma)_+$ is the positive part of the self-adjoint
operator $\rho-t\sigma$. 
These satisfy $\hsq{\alpha}{}(\rho\|\sigma)=\hsq{1-\alpha}{}(\sigma\|\rho)$ for $\alpha\in(0,1)$, and hence we extend the definition as
\begin{align}\label{eq:hsq def2}
\hsq{\alpha}{}(\rho\|\sigma):=\hsq{1-\alpha}{}(\sigma\|\rho),\ds\ds\ds\alpha\in(-\infty,0),
\end{align}
and define the hockey-stick R\'enyi $\alpha$-divergence of $\rho$ and $\sigma$ as
\begin{align*}
\hsd{\alpha}{}(\rho\|\sigma):=\frac{1}{\oll\alpha-1}\log\hsq{\alpha}{}\bz\frac{\rho}{\Tr\rho}\Big\|
\frac{\sigma}{\Tr\sigma}\jz,\ds\ds\ds\alpha\in\bR\setminus\{0,1\}.
\end{align*}
These are quantum R\'enyi $\alpha$-divergences in the sense that for commuting operators 
$\rho,\sigma$, $\hsd{\alpha}{}(\rho\|\sigma)$ is the classical R\'enyi $\alpha$-divergence of the diagonal elements of the matrices of $\rho$ and $\sigma$ in any orthonormal basis diagonalizing both; 
see Section \ref{sec:finitedim}. According to \cite[Proposition 18]{Frenkel_integral},
\begin{align*}
\hsd{1}{}(\rho\|\sigma)&:=\lim_{\alpha\to 1}\hsd{\alpha}{}(\rho\|\sigma)=D\bz\frac{\rho}{\Tr\rho}\Big\|
\frac{\sigma}{\Tr\sigma}\jz,\\
\hsd{0}{}(\rho\|\sigma)&:=\lim_{\alpha\to 0}\hsd{\alpha}{}(\rho\|\sigma)
=
D\bz\frac{\sigma}{\Tr\sigma}\Big\|\frac{\rho}{\Tr\rho}\jz.
\end{align*}

The following bounds have been shown in \cite{BeigiHircheTomamichel2025,Hirche_Tomamichel_integral,Liu_etal_layercake}. 
More precisely, the first inequality in \eqref{eq:hs bounds1} was shown in 
\cite[Proposition 3.9]{BeigiHircheTomamichel2025}, the second inequality in \eqref{eq:hs bounds1}
is Eq.~(3.55) in \cite{Hirche_Tomamichel_integral}, 
the second inequality in \eqref{eq:hs bounds2} is from \cite[Proposition 4.5]{Liu_etal_layercake},
and the first inequality in \eqref{eq:hs bounds2} is due to the fact that 
the hockey-stick R\'enyi divergences are manifestly monotone under positive trace-preserving maps,
and the measured R\'enyi divergences are the smallest R\'enyi divergences that are monotone under 
 positive trace-preserving maps, which can be easily verified from their definition.
Moreover, the inequalities were proved in \cite{BeigiHircheTomamichel2025,Hirche_Tomamichel_integral,Liu_etal_layercake}
for density operators; the forms below follow in a straightforward way
according to the definitions above.
\begin{lemma}
For any $\rho,\sigma\in\S(\hil)$,
\begin{align}
\redi_{\alpha,1}(\rho\|\sigma)&\le\hsd{\alpha}{}(\rho\|\sigma)\le \redi_{\alpha,1}(\rho\|\sigma)+\frac{\log 2}{1-\oll\alpha},& &
\alpha\in(0,1),\label{eq:hs bounds1}\\
\redi_{\alpha,\meas}(\rho\|\sigma)&\le\hsd{\alpha}{}(\rho\|\sigma)\le \redi_{\alpha,\alpha}(\rho\|\sigma),
& &\alpha\in[1,+\infty).
\label{eq:hs bounds2}
\end{align}
\end{lemma}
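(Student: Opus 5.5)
The paper notes that these inequalities are already proved in \cite{BeigiHircheTomamichel2025,Hirche_Tomamichel_integral,Liu_etal_layercake} for density operators. So the plan is mainly a bookkeeping argument. We check that the normalizations used here agree with those of the cited works when $\rho,\sigma\in\S(\hil)$, and then justify the one inequality that is argued in the text rather than quoted.

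First, for $\rho,\sigma\in\S(\hil)$ we have $\Tr\rho=\Tr\sigma=1$. Hence every divergence in the statement reduces to $\frac{1}{\oll\alpha-1}\log$ of the corresponding quantity evaluated at $(\rho,\sigma)$ itself. For $\alpha\in[1,+\infty)$ we have $\oll\alpha=\alpha$, so \eqref{eq:hs bounds2} is literally the statement in the standard normalization $\frac{1}{\alpha-1}\log$. At $\alpha=1$ it reads $D_{\meas}\le D\le D$, using $\hsd{1}{}=D$ from \cite[Proposition 18]{Frenkel_integral} and $\redi_{1,1}=D$. Its second inequality is \cite[Proposition 4.5]{Liu_etal_layercake}.

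For the first inequality in \eqref{eq:hs bounds2} we argue as in the text. By \eqref{eq:hsq def}, $\hsq{\alpha}{}$ depends only on the traces $\Tr(\rho-t\sigma)_\pm$. For any positive trace-preserving $\Phi$ these satisfy
\[
\Tr(\Phi(\rho)-t\Phi(\sigma))_\pm\le\Tr(\rho-t\sigma)_\pm .
\]
The integrals carry the positive weight $\alpha(\alpha-1)t^{\alpha-2}$ for $\alpha>1$. So $\hsq{\alpha}{}$ is monotone, and so is $\hsd{\alpha}{}$. Applying this to a measurement channel $\M$, and using that $\hsd{\alpha}{}$ reduces to the classical R\'enyi divergence on commuting arguments, we get $D_\alpha(\M(\rho)\|\M(\sigma))\le\hsd{\alpha}{}(\rho\|\sigma)$. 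Taking the maximum over POVMs gives the claim.

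For \eqref{eq:hs bounds1} with $\alpha\in(0,1)$, the main point is the normalization, since $\oll\alpha=\max\{\alpha,1-\alpha\}$ rather than $\alpha$. Write $\redi_{\alpha,\#}=\frac{1}{\oll\alpha-1}\log\req_{\alpha,\#}$. The cited results, namely \cite[Proposition 3.9]{BeigiHircheTomamichel2025} and \cite[Eq.~(3.55)]{Hirche_Tomamichel_integral}, should be stated at the level of the quantities, in the form
\[
\tfrac12\req_{\alpha,1}(\rho\|\sigma)\le\hsq{\alpha}{}(\rho\|\sigma)\le\req_{\alpha,1}(\rho\|\sigma).
\]
If they are instead stated as divergences with $\frac{1}{\alpha-1}$, we multiply back by $\alpha-1<0$ to recover this form. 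We then divide by $\oll\alpha-1<0$, which reverses both inequalities and yields exactly \eqref{eq:hs bounds1}, with additive constant $\frac{\log 2}{1-\oll\alpha}$.

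The expected obstacle is only this sign and scaling bookkeeping for $\alpha<1/2$, where $\oll\alpha=1-\alpha$. There one can alternatively use the symmetries $\hsq{\alpha}{}(\rho\|\sigma)=\hsq{1-\alpha}{}(\sigma\|\rho)$ and \eqref{eq:Renyi symmetry} to reduce to $\alpha\in[1/2,1)$.
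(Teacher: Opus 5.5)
Your proposal is correct and follows the same route as the paper. Both take the bounds from the cited works, handle the $\frac{1}{\oll\alpha-1}$ normalization through the quantity-level inequalities $\frac12\req_{\alpha,1}(\rho\|\sigma)\le\hsq{\alpha}{}(\rho\|\sigma)\le\req_{\alpha,1}(\rho\|\sigma)$ for $\alpha\in(0,1)$, and obtain the measured lower bound for $\alpha\ge 1$ from monotonicity of the hockey-stick quantity under positive trace-preserving maps together with its classical reduction on commuting pairs; you additionally write out the sign bookkeeping and the inequality $\Tr(\Phi(\rho)-t\Phi(\sigma))_\pm\le\Tr(\rho-t\sigma)_\pm$, which the paper leaves as ``straightforward''.
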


\begin{thm}
Assume that $Q,R\in\B(\ell^2_d(\bZ))$ satisfy \eqref{eq:strict bound2}, i.e.,  $cI\le Q,R\le (1-c)I$. Then 
\begin{align*}
\mathcal{D}^{\reg}_{\alpha,\hs}(\omega_Q\|\omega_R)=
\begin{cases}
\redi^{\reg}_{\alpha,1}(\omega_Q\|\omega_R)
=
\frac{1}{2\pi} \int_0^{2\pi}\redi_{\alpha,1}(\what\omega_{\mtxsymfn{q}(x)}\|\what\omega_{\mtxsymfn{r}(x)})\,\dd x
,&\alpha\in(0,1),\\
\redi^{\reg}_{\alpha,\alpha}(\omega_Q\|\omega_R)
=
\frac{1}{2\pi} \int_0^{2\pi}\redi_{\alpha,\alpha}(\what\omega_{\mtxsymfn{q}(x)}\|\what\omega_{\mtxsymfn{r}(x)})\,\dd x
,&\alpha\in[1,+\infty),\\
\redi^{\reg}_{\alpha,1-\alpha}(\omega_Q\|\omega_R)
=
\frac{1}{2\pi} \int_0^{2\pi}\redi_{\alpha,1-\alpha}(\what\omega_{\mtxsymfn{q}(x)}\|\what\omega_{\mtxsymfn{r}(x)})\,\dd x
,&\alpha\in(-\infty,0].
\end{cases}
\end{align*}
\end{thm}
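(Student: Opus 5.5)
The proof sandwiches the hockey-stick divergence between divergences whose regularizations are already known, applied with $\rho=\what\omega_{Q_n}$ and $\sigma=\what\omega_{R_n}$, which are density operators. After dividing by $n$, the additive constants disappear and the two sides converge to the same limit. There are three cases.

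\emph{Case $\alpha\in(0,1)$.} By \eqref{eq:hs bounds1},
\begin{align*}
\frac{1}{n}\redi_{\alpha,1}(\what\omega_{Q_n}\|\what\omega_{R_n})\le\frac{1}{n}\hsd{\alpha}{}(\what\omega_{Q_n}\|\what\omega_{R_n})\le\frac{1}{n}\redi_{\alpha,1}(\what\omega_{Q_n}\|\what\omega_{R_n})+\frac{\log 2}{n(1-\oll\alpha)}.
\end{align*}
Theorem \ref{thm:psi-existence-2} with $\tp=1$ shows that both outer terms converge to $\frac{1}{2\pi}\int_0^{2\pi}\redi_{\alpha,1}(\what\omega_{\mtxsymfn{q}(x)}\|\what\omega_{\mtxsymfn{r}(x)})\,\dd x$. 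The squeeze then gives the first line of the statement.

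\emph{Case $\alpha\in(1,+\infty)$.} By \eqref{eq:hs bounds2},
\begin{align*}
\redi_{\alpha,\meas}(\what\omega_{Q_n}\|\what\omega_{R_n})\le\hsd{\alpha}{}(\what\omega_{Q_n}\|\what\omega_{R_n})\le\redi_{\alpha,\alpha}(\what\omega_{Q_n}\|\what\omega_{R_n}).
\end{align*}
By Theorem \ref{thm:regmeas Renyi}, which rests on \eqref{eq:theta zero} and Corollary \ref{cor:measured vs sandwiched}, the regularized measured divergence exists and equals $\redi^{\reg}_{\alpha,\alpha}(\omega_Q\|\omega_R)$. That value is given by the integral formula of Theorem \ref{thm:psi-existence-2}, so dividing by $n$ and squeezing finishes this case.

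\emph{Case $\alpha=1$.} Here $\hsd{1}{}$ equals the Umegaki relative entropy, so \eqref{eq:hs bounds2} collapses to an equality. The claim is then just the regularized relative entropy formula \eqref{eq:regularized_relentr1}, with $\redi_{1,1}$ read as $D$.

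\emph{Case $\alpha\le 0$.} This reduces to the cases above through the symmetry relations. The definition \eqref{eq:hsq def2} gives $\hsd{\alpha}{}(\rho\|\sigma)=\hsd{1-\alpha}{}(\sigma\|\rho)$, because $\oll\alpha=\oll{1-\alpha}$. For $\alpha<0$ we have $1-\alpha>1$, so the previous case applied with $Q$ and $R$ interchanged yields $\redi^{\reg}_{1-\alpha,1-\alpha}(\omega_R\|\omega_Q)$. Since $\redi_{1-\alpha,1-\alpha}(\sigma\|\rho)=\redi_{\alpha,1-\alpha}(\rho\|\sigma)$ by \eqref{eq:Renyi symmetry}, and this identity passes to the integrands pointwise in $x$, we obtain the third line. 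For $\alpha=0$, $\hsd{0}{}$ is the reversed relative entropy, and the same swap together with the $\alpha=1$ case handles it.

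The only delicate point is the case $\alpha>1$. There the upper and lower bounds are different divergences, so their regularizations must be shown to agree; this is exactly what Theorem \ref{thm:regmeas Renyi} provides, and it needs the bound $\Theta(\what\omega_{R_n})=e^{o(n)}$ coming from the strict bounds \eqref{eq:strict bound2}. Apart from that, one only needs to keep consistent normalization conventions at the boundary values $\alpha\in\{0,1\}$.
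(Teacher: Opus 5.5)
Your proposal is correct and follows essentially the same route as the paper: the case $\alpha\in(0,1)$ is a squeeze via \eqref{eq:hs bounds1}, the case $\alpha\ge 1$ sandwiches the hockey-stick divergence between the measured and sandwiched divergences via \eqref{eq:hs bounds2} and Theorem \ref{thm:regmeas Renyi}, and $\alpha\le 0$ follows from the symmetry relations. Your separate treatment of $\alpha=1$ and $\alpha=0$ via \eqref{eq:regularized_relentr1} only makes explicit what the paper leaves implicit, since Theorem \ref{thm:psi-existence-2} covers only $\alpha\notin\{0,1\}$.
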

\begin{proof}
In the case $\alpha\in(0,1)$, the inequalities in \eqref{eq:hs bounds1} yield immediately that 
$\mathcal{D}_{\alpha,\hs}^{\reg}(\omega_Q\|\omega_R)$ exists and is equal to 
$\redi_{\alpha,1}^{\reg}(\omega_Q\|\omega_R)$, 
whence the assertion follows from Theorem \ref{thm:psi-existence-2}.

In the case $\alpha\in[1,+\infty)$,
\begin{align*}
\redi_{\alpha,\alpha}^{\reg}(\omega_Q\|\omega_R)=
\redi_{\alpha,\meas}^{\reg}(\omega_Q\|\omega_R)
&\le
\underline{\mathcal{D}}_{\alpha,\hs}(\omega_Q\|\omega_R)
\le
\overline{\mathcal{D}}_{\alpha,\hs}(\omega_Q\|\omega_R)
\le \redi_{\alpha,\alpha}^{\reg}(\omega_Q\|\omega_R),
\end{align*}
where the equality is due to Theorem \ref{thm:regmeas Renyi}, the second and the fourth inequalities
follow from the inequalities in \eqref{eq:hs bounds2},
and the third inequality is trivial. Hence, the assertion in this case follows from 
Theorem \ref{thm:psi-existence-2}.

Finally, the case $\alpha\in(-\infty,0]$ follows from the above and the symmetry relations \eqref{eq:Renyi symmetry} and 
$\hsd{\alpha}{}(\rho\|\sigma)=\hsd{1-\alpha}{}(\sigma\|\rho)$.
\end{proof}

\subsection{Error exponents}
\label{sec:errexp}

Let us now turn to the problem of asymptotic discrimination of two quasi-free states $\omega_Q$ and $\omega_R$
on $\car{\ell^2_d(\bZ)}$ as explained in the Introduction. Recall that for a test 
$T_n\in\B\bz\fock{\ell^2_d(\n)}\jz_{[0,1]}$, the corresponding type I and type II error
probabilities are defined as
\begin{align*}
\err_0(\omega_{Q_n}|T_n):=\Tr\what\omega_{Q_n}(I-T_n),\ds\ds\text{(type I)},\ds\ds\ds\ds\ds
\err_1(\omega_{R_n}|T_n):=\Tr\what\omega_{R_n} T_n\ds\ds\text{(type II)},
\end{align*} 
where $Q_n:=(P_n\otimes I_d)^*Q(P_n\otimes I_d)$, 
$R_n:=(P_n\otimes I_d)^*R(P_n\otimes I_d)$, as in \eqref{eq:cutoff def}.
We define the direct exponents corresponding to a fixed type II error exponent $r$ as 
\begin{align}
\dls_r(\omega_Q\|\omega_R)
&:=
\sup\left\{\limsup_{n\to+\infty}-\frac{1}{n}\log\err_0(\omega_{Q_n}|T_n)\,\Big|\,
\liminf_{n\to+\infty}-\frac{1}{n}\log\err_1(\omega_{R_n}|T_n)>r
\right\},\label{eq:direct exp def1}\\
\dli_r(\omega_Q\|\omega_R)
&:=
\sup\left\{\liminf_{n\to+\infty}-\frac{1}{n}\log\err_0(\omega_{Q_n}|T_n)\,\Big|\,
\liminf_{n\to+\infty}-\frac{1}{n}\log\err_1(\omega_{R_n}|T_n)>r
\right\},\label{eq:direct exp def2}\\
\dl_r(\omega_Q\|\omega_R)
&:=
\sup\left\{\lim_{n\to+\infty}-\frac{1}{n}\log\err_0(\omega_{Q_n}|T_n)\,\Big|\,
\liminf_{n\to+\infty}-\frac{1}{n}\log\err_1(\omega_{R_n}|T_n)>r
\right\},\label{eq:direct exp def3}
\end{align}
where in the last definition we only optimize over test sequences for which the indicated limit exists. Obviously,
\begin{align*}
\dl_r(\omega_Q\|\omega_R)\le\dli_r(\omega_Q\|\omega_R)\le\dls_r(\omega_Q\|\omega_R).
\end{align*}
Similarly, the strong converse exponents corresponding to a fixed type II error exponent $r$ are defined as 
\begin{align}
\scs_r(\omega_Q\|\omega_R)
&:=
\inf\left\{\limsup_{n\to+\infty}-\frac{1}{n}\log(1-\err_0(\omega_{Q_n}|T_n))\,\Big|\,
\liminf_{n\to+\infty}-\frac{1}{n}\log\err_1(\omega_{R_n}|T_n)>r
\right\},\label{eq:sc exp def1}\\
\scli_r(\omega_Q\|\omega_R)
&:=
\inf\left\{\liminf_{n\to+\infty}-\frac{1}{n}\log(1-\err_0(\omega_{Q_n}|T_n))\,\Big|\,
\liminf_{n\to+\infty}-\frac{1}{n}\log\err_1(\omega_{R_n}|T_n)>r
\right\},\label{eq:sc exp def2}\\
\sc_r(\omega_Q\|\omega_R)
&:=
\inf\left\{\lim_{n\to+\infty}-\frac{1}{n}\log(1-\err_0(\omega_{Q_n}|T_n))\,\Big|\,
\liminf_{n\to+\infty}-\frac{1}{n}\log\err_1(\omega_{R_n}|T_n)>r
\right\},\label{eq:sc exp def3}
\end{align}
and
\begin{align*}
\scli_r(\omega_Q\|\omega_R)\le\scs_r(\omega_Q\|\omega_R)\le \sc_r(\omega_Q\|\omega_R)
\end{align*}
holds trivially.

\begin{theorem}
Let $\omega_Q$ and $\omega_R$ be quasi-free states satisfying \eqref{eq:strict bound2}, i.e., 
$cI\le Q,R\le (1-c)I$ for some $c\in(0,1/2)$. Then 
\begin{align}\label{eqn:direct-exp}
\dl_r(\omega_Q\|\omega_R)=\dli_r(\omega_Q\|\omega_R)=\dls_r(\omega_Q\|\omega_R)=
H_r (\omega_Q\|\omega_R):=\sup_{\alpha\in(0,1)}\frac{\alpha-1}{\alpha}\left[r-\red_{\alpha,1}^{\reg}(\omega_Q\|\omega_R)\right],
\end{align}
and
\begin{align}\label{eqn:strong-converse-exp}
\sc_r(\omega_Q\|\omega_R)=\scli_r(\omega_Q\|\omega_R)=\scs_r(\omega_Q\|\omega_R)=
H_r^* (\omega_Q\|\omega_R):=\sup_{\alpha>1}\frac{\alpha-1}{\alpha}\left[r-\red_{\alpha,\alpha}^{\reg}(\omega_Q\|\omega_R)\right],
\end{align}
for every $r\in(0,+\infty)$.
\end{theorem}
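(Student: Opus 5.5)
The proof reduces to the general results of \cite{HMO2,MO-correlated} recalled in the Introduction. Those results give \eqref{eq:direct exp2} and \eqref{eq:sc exp2} for arbitrary sequences of states under three conditions: the relevant regularized R\'enyi divergence exists and is differentiable in $\alpha$, and (for the strong converse) a non-degeneracy condition on $\sigma_n$ holds. So the plan is to verify these hypotheses for $\rho_n=\what\omega_{Q_n}$ and $\sigma_n=\what\omega_{R_n}$, and then check that the bounds in those works are stated for the $\liminf/\limsup$ variants of the exponents, which pins all three versions to the same value.

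\textbf{Step 1 (existence).} By Theorem \ref{thm:psi-existence-2} with $\gamma=1$ and $\gamma=\alpha$, the limits $\red^{\reg}_{\alpha,1}(\omega_Q\|\omega_R)$ for $\alpha\in(0,1)$ and $\red^{\reg}_{\alpha,\alpha}(\omega_Q\|\omega_R)$ for $\alpha>1$ exist. They are given by the explicit integrals. Since the states are normalized, $\red^{\reg}_{\alpha,\gamma}=\frac{1}{\alpha-1}\psi^{\reg}_{\alpha,\gamma}$.

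\textbf{Step 2 (differentiability).} Lemma \ref{lem:psi-differentiability} shows that $\alpha\mapsto\psi^{\reg}_{\alpha,1}$ is smooth on $\bR$ and $\alpha\mapsto\psi^{\reg}_{\alpha,\alpha}$ is smooth on $(0,+\infty)$. Dividing by $\alpha-1$ then gives differentiability of the regularized Petz and sandwiched divergences on $(0,1)$ and $(1,+\infty)$. The direct-exponent result also uses the value $\psi^{\reg}_{1,1}=0$ and the derivative at $\alpha=1$, which equals $D^{\reg}$; this is covered by Corollary \ref{cor:regularized Renyi limit}.

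\textbf{Step 3 (strong converse condition).} For the strong converse we need $\liminf_n\frac{1}{n}\log\lambda_{\min}(\what\omega_{R_n})>-\infty$. As in the proof of Theorem \ref{thm:regmeas Renyi}, every eigenvalue of $\what\omega_{R_n}$ is a product of $nd$ factors that are each at least $c$. Hence $\what\omega_{R_n}\ge c^{nd}I$, and $\frac{1}{n}\log\lambda_{\min}\ge d\log c$.

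\textbf{Step 4 (conclusion and main obstacle).} Applying the theorems of \cite{HMO2,MO-correlated} then yields both identities. The main obstacle is bookkeeping: making sure their formulation matches the three exponent variants defined here. Concretely, the achievability part should produce a test sequence whose limit exists and exceeds the formula, which gives $\dl_r\ge H_r$. The converse part should bound the $\limsup$ version, which gives $\dls_r\le H_r$. For the strong converse, the roles are reversed: the converse should bound the $\liminf$ version $\scli_r$ from below, and achievability should give a test sequence with an existing limit, bounding $\sc_r$ from above.

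If these references only state the result for the $\lim$ versions, I would re-derive the sandwich chains from their proofs. On the direct side, the converse would go through a G\"artner--Ellis argument using only the smoothness from Step 2. On the strong converse side, the lower bound is the standard monotonicity bound
\[
\frac{\alpha-1}{\alpha}\bigl[r-\red_{\alpha,\alpha}\bigr],
\]
which holds for every $n$, so taking $\liminf$ and the regularized limit gives $\scli_r\ge H_r^*$. The matching upper bound on $\sc_r$ comes from the achievability construction, and the eigenvalue bound in Step 3 is used there.
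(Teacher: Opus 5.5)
Your proposal is correct and follows the paper's proof. The paper gets existence of the regularized $\psi$-quantities from Theorem \ref{thm:psi-existence} and differentiability from Lemma \ref{lem:psi-differentiability}, then invokes \cite[Theorem 4.8]{HMO2} for the direct exponents and \cite[Corollary 22]{MO-correlated} for the strong converse exponents. The only cosmetic differences are that the paper phrases the strong converse hypothesis as $\frac{1}{n}\log\Theta(\what\omega_{R_n})\to 0$ (see \eqref{eq:theta zero}), derived from the same bound $\what\omega_{R_n}\ge c^{nd}I$ you use, and that it notes \cite{HMO2} interchanges the roles of the type I and type II errors, which falls under the bookkeeping you flag.
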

\begin{proof}
By Theorem \ref{thm:psi-existence} $\psi^{\reg}_{\alpha,1}(\omega_Q\|\omega_R)$ 
exists for every $\alpha\in\bR$, and by Lemma \ref{lem:psi-differentiability},
it is a differentiable function of $\alpha$ on $\bR$.
Hence, we may apply \cite[Theorem 4.8]{HMO2} to obtain \eqref{eqn:direct-exp}. 
Here, we note that \cite{HMO2} uses a different convention to define the error exponents, where the roles of the two types of errors are interchanged, and therefore \cite[Theorem 4.8]{HMO2} has to be applied accordingly.

Likewise, by Theorem \ref{thm:psi-existence}, $\psi^{\reg}_{\alpha,\alpha}(\omega_Q\|\omega_R)$ 
exists for every $\alpha\in(0,+\infty)$, and by Lemma \ref{lem:psi-differentiability}, it is a
differentiable function of $\alpha$ on $(0,+\infty)$. Moreover, by \eqref{eq:theta zero}, 
$\lim_{n\to+\infty}\frac{1}{n}\log\Theta(\what\omega_{R_n})=0$, and hence \cite[Corollary 22]{MO-correlated}
yields \eqref{eqn:strong-converse-exp}.
\end{proof}

\section{Super-exponential state discrimination}
\label{sec:superexp}

We say that two sequences of states $\rho_n,\sigma_n\in\S(\hil_n)$, $n\in\bN$, can be 
super-exponentially discriminated, if there exists a test sequence $T_n\in\B(\hil_n)_{[0,1]}$, $n\in\bN$, such that 
\begin{align}\label{eq:superexp def}
\lim_{n\to+\infty}\frac{1}{\log\dim\hil_n}\log\err_0(\rho_n|T_n)=+\infty=
\lim_{n\to+\infty}\frac{1}{\log\dim\hil_n}\log\err_1(\sigma_n|T_n),
\end{align}
which is equivalent to the symmetric error $\errm{1}{}{\rho_n}{\sigma_n}:=\min\left\{\err_0(\rho|T)+\err_1(\sigma|T)\,|\,T\in\B(\hil)_{[0,1]}\right\}=(1-\norm{\rho_n-\sigma_n}_1/2)/2$ decaying with a 
super-exponential speed, i.e., 
\begin{align*}
\lim_{n\to+\infty}\frac{1}{\log\dim\hil_n}\log\errm{1}{}{\rho_n}{\sigma_n}=+\infty.
\end{align*}

Below we show that for any quasi-free states $\omega_Q$ and $\omega_R$ on 
$\car{\ell^2_d(\bZ)}$, their local restrictions
$(\what\omega_{Q_n})_{n\in\bN}$ and 
$(\what\omega_{R_n})_{n\in\bN}$ can be super-exponentially discriminated. Equivalently, the 
quasi-free states can be super-exponentially discriminated by a sequence of tests
$T_n$ performed on length $n$ portions of the whole chain for every $n\in\bN$. More formally, there exists a sequence of tests $T_n\in\B\bz\ell^2(\n)\otimes\bC^d\jz_{[0,1]}$, $n\in\bN$, such that 
\begin{align}\label{eq:qf superexp def}
\lim_{n\to+\infty}\frac{1}{n}\log\Tr\what\omega_{Q_n}(I-T_n)=+\infty=
\lim_{n\to+\infty}\frac{1}{n}\log\Tr\what\omega_{R_n}T_n.
\end{align}
This is equivalent to \eqref{eq:superexp def} with $\rho_n:=\what\omega_{Q_n}$,
$\sigma_n:=\what\omega_{R_n}$ given on 
$\hil_n=\fock{\ell^2(\n)\otimes \bC^d}$, since in this case,
$\log\dim\hil_n=nd\log 2$.
 
Since quasi-free states are defined by single-particle symbol operators, it is not surprising that 
a sequence of tests with the above properties can be obtained from a sequence of operators on the single-particle Hilbert spaces $\ell^2(\n)\otimes\bC^d$.
Indeed, by Lemma 3.2 and Corollary 3.3 of \cite{superexp2023}, we have the following:

\begin{lemma}\label{lemma:tests from singlecopy projections}
For any projection $E_n=\sum_{k=1}^{r_n}\pr{e_{n,k}}$ on $\ell^2([n]^*)\otimes\bC^d$, 
where $(e_{n,k})_{k=1}^{r_n}$ is an orthonormal basis in $\ran E_n$, 
let $T_n\in\B\bz\fock{\ell^2([n]^*)\otimes\bC^d}\jz$
be the spectral projection of the number operator $N_{E_n}:=\sum_{k=1}^{r_n}a^*(e_{n,k})a(e_{n,k})$
corresponding to the eigenvalues $k=0,\ldots,\floor{\Tr E_n/2}$. Then, for any 
$G_n\in\B\bz\ell^2([n]^*)\otimes\bC^d\jz_{[0,1]}$,
\begin{align}
\omega_{G_n}\bz I-T_n\jz\le\bz\frac{8\Tr E_nG_n}{\Tr E_n}\jz^{\frac{\Tr E_n}{2}}
\ds\ds\text{and}\ds\ds\ds
\omega_{G_n}\bz T_n\jz\le\bz\frac{8\Tr E_n(I-G_n)}{\Tr E_n}\jz^{\frac{\Tr E_n}{2}}.
\label{eq:singleparticle upper bound}
\end{align}
\end{lemma}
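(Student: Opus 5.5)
The plan is to reduce everything to the distribution of the number operator $N_{E_n}$ under $\omega_{G_n}$, and then to bound a binomial-type tail. I use the quasi-free structure directly.

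\emph{Step 1: restriction to $\ran E_n$.} Write $r:=r_n=\Tr E_n$. Because $N_{E_n}$ lies in the CAR subalgebra generated by $\{a(\phi):\phi\in\ran E_n\}$, its distribution under $\omega_{G_n}$ is the distribution under the quasi-free state with symbol $E_nG_nE_n$ restricted to $\ran E_n$. This follows from \eqref{eqn:quasifree_functional}, since every moment only involves $\inner{\psi_i}{G_n\phi_j}$ with $\phi_j,\psi_i\in\ran E_n$.

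Next I diagonalize $E_nG_nE_n|_{\ran E_n}=\sum_{k=1}^{r}g_k\pr{f_k}$ with $g_k\in[0,1]$. The number operator does not depend on the choice of orthonormal basis of $\ran E_n$, so $N_{E_n}=\sum_k a^*(f_k)a(f_k)$. By \eqref{quasifree density}, under the restricted state the occupations $a^*(f_k)a(f_k)$ are commuting and independent Bernoulli variables with means $g_k$. Hence $N_{E_n}$ is distributed as a sum $S$ of independent Bernoulli$(g_k)$ variables, with $\sum_k g_k=\Tr E_nG_n$.

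\emph{Step 2: first bound.} Here $\omega_{G_n}(I-T_n)=\Pr(S>\floor{r/2})\le\Pr(S\ge m)$, where $m:=\floor{r/2}+1\ge r/2$. A union bound over $m$-subsets gives
\begin{align*}
\Pr(S\ge m)\le e_m(g_1,\ldots,g_r)\le\binom{r}{m}\Big(\frac{\sum_k g_k}{r}\Big)^m,
\end{align*}
where the second inequality is Maclaurin's inequality. Using $\binom{r}{m}\le 2^r$ and $m\ge r/2$ then yields
\begin{align*}
\Pr(S\ge m)\le \big(4\,\Tr E_nG_n/r\big)^{r/2}
\end{align*}
when the ratio is $\le1$, since then raising to the larger power $m$ only decreases the value. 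When the ratio $8\Tr E_nG_n/r$ is $\ge1$ the claimed bound is trivially $\ge1$, so that case is immediate. In either case the constant $8$ suffices.

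\emph{Step 3: second bound.} For $\omega_{G_n}(T_n)=\Pr(S\le\floor{r/2})$, I pass to holes. The variable $r-S$ is a sum of independent Bernoulli$(1-g_k)$ variables with $\sum_k(1-g_k)=\Tr E_n(I-G_n)$. The event becomes $r-S\ge r-\floor{r/2}\ge r/2$, and the same estimate applies.

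\emph{Main obstacle.} The step needing the most care is Step 1, namely justifying that the joint distribution of the commuting occupation operators is exactly the product distribution. This requires identifying the sub-CAR algebra and its restricted quasi-free state rigorously in the Fock representation, and handling eigenvalues $g_k\in\{0,1\}$, where \eqref{eq:quasifree density} fails but \eqref{quasifree density} still applies. The remaining work is the elementary counting in Steps 2 and 3, where only the rounding of $r/2$ needs attention.
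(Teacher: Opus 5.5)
Your proof is correct. The paper does not prove this lemma itself; it imports it from Lemma~3.2 and Corollary~3.3 of \cite{superexp2023}. Your argument is therefore a self-contained replacement for that citation, and it is slightly sharper.

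\emph{Constant.} Put $\bar g:=\Tr E_nG_n/\Tr E_n$. This is always at most $1$ because every $g_k\le 1$, so the case distinction at the end of your Step~2 is unnecessary. You then get $\binom{r}{m}\bar g^{\,m}\le 2^r\bar g^{\,r/2}=(4\bar g)^{r/2}$, that is, the constant $4$ in place of $8$. The same holds for the hole count in Step~3.

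\emph{Step~1.} The point you single out as the main obstacle is easier than it looks. Both $\omega_{G_n}$ and $\omega_{E_nG_nE_n}$ are quasi-free states on the whole algebra. By \eqref{eqn:quasifree_functional}, after normal ordering with the CARs, they coincide on the $*$-algebra generated by $\{a(\phi):\phi\in\ran E_n\}$, which contains $N_{E_n}$ and all its spectral projections. Now diagonalize $E_nG_nE_n$ on all of $\ell^2([n]^*)\otimes\bC^d$, using eigenvectors $f_1,\dots,f_r\in\ran E_n$ completed by vectors of $\ker E_n$ (eigenvalue $0$). The product Bernoulli distribution of the occupations $a^*(f_k)a(f_k)$ can then be read off from \eqref{quasifree density} on the full Fock space. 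You never need a separate Fock representation of the subalgebra.

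\emph{First bound without independence.} By functional calculus, $I-T_n\le\binom{N_{E_n}}{m}$. The operator $\binom{N_{E_n}}{m}$ is the sum of $a^*(e_{n,i_1})\cdots a^*(e_{n,i_m})a(e_{n,i_m})\cdots a(e_{n,i_1})$ over $i_1<\dots<i_m$. Formula \eqref{eqn:quasifree_functional} evaluates each such term as a principal $m\times m$ minor of the matrix of $E_nG_nE_n|_{\ran E_n}$ in the basis $(e_{n,k})_k$. Hence $\omega_{G_n}(I-T_n)\le e_m(g_1,\dots,g_r)$ directly, which is exactly the quantity your union bound produces.
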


Hence, our goal is to construct a sequence of projections $(E_n)_{n\in\bN}$ such that 
the upper bounds in \eqref{eq:singleparticle upper bound} decay super-exponentially fast for 
$G_n=Q_n$ in the first, and $G_n=R_n$ in the second inequality.
To construct such a sequence of projections, we will use an idea from 
\cite{superexp2023} based on discrete Fourier transform, and apply a suitable generalization of it
to the $d$ mode/site setting considered here. 

Recall that the discrete Fourier transform $F_n$ on $\ell^2(\n)$ is defined by its action on the canonical basis of $\ell^2(\n)$ as
\begin{align*}
    F_n \egy_{\{k\}} := \frac{1}{\sqrt{n}} \sum_{j=0}^{n-1} e^{i \frac{2\pi}{n} j k} \egy_{\{j\}}, \ds\ds\ds k \in \n.
\end{align*}
For the matrix-valued symbol case, we define $\F_{n} := F_n \otimes I_d$ acting on $\ran \P_n=
\ell^2(\n)\otimes\bC^d$. 

For convenience, we will identify the one-dimensional torus $\bT$ with $[-\pi,\pi)$ below instead of $[0,2\pi)$
as before, and consider the symbol functions of quasi-free states by first periodically extending them and then restricting to $[-\pi,\pi)$. The following is an extension of \cite[Lemma 3.5]{superexp2023} 
from the case of a scalar to the case of a matrix-valued symbol function. To state it, we will need the definition of the $n$-th Fej\'er kernel $\Phi_n$, given by 
\begin{align*}
\Phi_n(y):=\frac{1}{ n} \frac{\sin^2(ny/2)}{\sin^2(y/2)},\ds\ds\ds y\in[-\pi,\pi).
\end{align*}

\begin{lemma}\label{lemma:cesaro-diagonal}
Let $\mtxsymfn{g} \in L^\infty_{d\times d}(\bT)$ and $G = \ft^{-1}M_{\mtxsymfn{g}}\ft \in \B(\ell^2_d(\bZ))$ be the corresponding translation-invariant operator. Let $G_n = \P_n G \P_n$. The $d \times d$ diagonal blocks of $\F_{n} G_n \F_{n}^*$ in the canonical basis of $\ell^2(\n)$ are given by 
\begin{align*}
\left( \F_{n} G_n \F_{n}^* \right)_{k,k}
&:=
(\bra{\egy_{\{k\}}}\otimes I_d)\left( \F_{n} G_n \F_{n}^* \right)(\ket{\egy_{\{k\}}}\otimes I_d)\\
&= 
(\hat{S}_n \mtxsymfn{g})\left(\frac{2\pi k}{n}\right)
:= 
\frac{1}{2\pi}\int_{-\pi}^{\pi} \Phi_n(y) \mtxsymfn{g}\left(\frac{2\pi k}{n} - y\right) \dd y,
\end{align*}
for any $k \in \n$.
\end{lemma}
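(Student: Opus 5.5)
The proof is a direct computation of matrix elements, reducing entrywise to the scalar case.

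First, fix $k\in\n$ and $u,v\in\bC^d$, and compute $\inner{\egy_{\{k\}}\otimes u}{\F_nG_n\F_n^*(\egy_{\{k\}}\otimes v)}$. Since $\F_n^*=F_n^*\otimes I_d$, we have
\begin{align*}
\F_n^*(\egy_{\{k\}}\otimes v)=\frac{1}{\sqrt n}\sum_{j=0}^{n-1}e^{-i\frac{2\pi}{n}jk}\egy_{\{j\}}\otimes v .
\end{align*}
Moreover, $\P_n$ acts trivially on vectors supported on $\n$, so $G_n$ may be replaced by $G$. This gives
\begin{align*}
\frac1n\sum_{j,l=0}^{n-1}e^{i\frac{2\pi}{n}(j-l)k}\inner{\egy_{\{j\}}\otimes u}{G(\egy_{\{l\}}\otimes v)}.
\end{align*}

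Next, I use the block Toeplitz structure $G=\ft^{-1}M_{\mtxsymfn{g}}\ft$. As in the computation leading to \eqref{eq:tiop Fourier}, applied blockwise,
\begin{align*}
\inner{\egy_{\{j\}}\otimes u}{G(\egy_{\{l\}}\otimes v)}=\frac{1}{2\pi}\int_{-\pi}^{\pi}e^{i(l-j)x}\inner{u}{\mtxsymfn{g}(x)v}\,\dd x .
\end{align*}
Here the integration interval can be shifted to $[-\pi,\pi)$ by periodicity. Substituting this and exchanging the finite sum with the integral, the sum becomes
\begin{align*}
\frac1n\sum_{j,l}e^{i(j-l)(\frac{2\pi k}{n}-x)}=\frac1n\Big|\sum_{j=0}^{n-1}e^{ij(\frac{2\pi k}{n}-x)}\Big|^2=\Phi_n\Big(\frac{2\pi k}{n}-x\Big),
\end{align*}
where the last equality is the standard geometric sum identity $|\sum_{j=0}^{n-1}e^{ijy}|^2=\sin^2(ny/2)/\sin^2(y/2)$. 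We therefore get
\begin{align*}
\frac{1}{2\pi}\int_{-\pi}^{\pi}\Phi_n\Big(\frac{2\pi k}{n}-x\Big)\inner{u}{\mtxsymfn{g}(x)v}\,\dd x .
\end{align*}

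Finally, I substitute $y=\frac{2\pi k}{n}-x$ and use the $2\pi$-periodicity of both $\Phi_n$ and the periodically extended $\mtxsymfn{g}$ to return to the integration domain $[-\pi,\pi)$. This yields $\inner{u}{(\hat S_n\mtxsymfn{g})(2\pi k/n)\,v}$. Since $u,v$ are arbitrary, the $d\times d$ block identity follows.

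There is no real obstacle here. The only places needing care are the sign conventions in the Fourier transforms $F$ and $F_n$ (a sign flip there would only produce $\Phi_n(x-2\pi k/n)$, which is harmless because $\Phi_n$ is even) and justifying the periodic shift of the integration domain. Alternatively, the whole statement follows by applying the scalar result \cite[Lemma 3.5]{superexp2023} to each block $G_{a,b}$ with symbol $\symfn{g}_{ab}$, because $\F_n$ acts blockwise.
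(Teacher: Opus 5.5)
Your proof is correct, and it reaches the result by a more self-contained route than the paper. The paper writes $G_n=\sum_{a,b}P_nG_{a,b}P_n\otimes\diad{a}{b}$, uses that $\F_n=F_n\otimes I_d$ acts blockwise, and quotes the scalar result \cite[Lemma 3.5]{superexp2023} for each entry $\symfn{g}_{ab}$; this is the alternative you mention in your last sentence.

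Your main argument instead does the scalar computation directly in vector form. You take matrix elements between $\egy_{\{k\}}\otimes u$ and $\egy_{\{k\}}\otimes v$, insert the Fourier expression for the block Toeplitz entries, and use $\frac1n|\sum_{j=0}^{n-1}e^{ijy}|^2=\Phi_n(y)$. In effect you reprove the cited lemma rather than invoke it.

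The paper's route is shorter. Yours makes explicit that the conventions chosen here for $F$ and $F_n$ really produce the kernel $\Phi_n(2\pi k/n-x)$, and that a sign mismatch would be harmless because $\Phi_n$ is even; the citation leaves this check implicit. In fact, the paper's displayed intermediate expression $F_nP_nG_{a,b}P_nF_n$ should read $F_nP_nG_{a,b}P_nF_n^*$, and your computation handles this correctly.
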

\begin{proof}
By definition, $G_n = \sum_{a,b=0}^{d-1} P_n G_{a,b} P_n \otimes \diad{a}{b}$, whence
for any $k \in \n$,
\begin{align*}
\left( \F_{n} G_n \F_{n}^* \right)_{k,k}
&=
\sum_{a,b=0}^{d-1} \inner{\egy_{\{k\}}}{F_nP_n G_{a,b} P_nF_n\egy_{\{k\}}} \otimes \diad{a}{b}\\
&=
\sum_{a,b=0}^{d-1}\left[\frac{1}{2\pi}\int_{-\pi}^{\pi} \Phi_n(y) 
\mtxsymfn{g}\left(\frac{2\pi k}{n} - y\right)_{a,b} \dd y\right]\otimes\diad{a}{b}\\
&=
\frac{1}{2\pi}\int_{-\pi}^{\pi} \Phi_n(y) \mtxsymfn{g}\left(\frac{2\pi k}{n} - y\right) \dd y,
\end{align*}
where the second equality follows from the scalar case proved in \cite[Lemma 3.5]{superexp2023}, and the 
last equality is by definition.
\end{proof}

For Lemma \ref{lemma:En construction} below, we will need some well-known properties of the Fej\'er kernel, which we summarize below for readers' convenience.
\begin{lemma}
For any $n\in\bN$, $n\ge 3$,  
\begin{align}
&\Phi_n(y)\le\min\{n,\pi^2/(ny^2)\},\ds\ds y\in[-\pi,\pi)\setminus\{0\},\label{eq:Fejer bound1}\\
&\int_{-\pi}^{\pi}\Phi_n(y)|y|\,dy\le 3\pi^2\frac{\log n}{n}.
\label{eq:Fejer bound2}
\end{align}
\end{lemma}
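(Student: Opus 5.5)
We prove the two Fejér kernel bounds by elementary estimates on $\sin$, treating \eqref{eq:Fejer bound1} first and then integrating it to get \eqref{eq:Fejer bound2}.

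For \eqref{eq:Fejer bound1}, the bound $\Phi_n(y)\le n$ follows from the identity $\Phi_n(y)=\frac1n\bigl|\sum_{j=0}^{n-1}e^{ijy}\bigr|^2$. Indeed, $\sum_{j=0}^{n-1}e^{ijy}=e^{i(n-1)y/2}\sin(ny/2)/\sin(y/2)$, and the triangle inequality bounds the sum by $n$ in absolute value. For the second bound, we use $\sin^2(ny/2)\le 1$ together with Jordan's inequality $|\sin(t)|\ge \frac{2}{\pi}|t|$ for $|t|\le\pi/2$, applied with $t=y/2$ and $|y|\le\pi$. This gives $\sin^2(y/2)\ge y^2/\pi^2$, hence $\Phi_n(y)\le \pi^2/(ny^2)$.

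For \eqref{eq:Fejer bound2}, we use evenness of $\Phi_n$ and split $[0,\pi]$ at $y=1/n$, where the two bounds in \eqref{eq:Fejer bound1} cross up to constants.
\begin{itemize}
\item On $[0,1/n]$, using $\Phi_n\le n$, we get $\int_0^{1/n} n y\,dy=\frac{1}{2n}$.
\item On $[1/n,\pi]$, using $\Phi_n(y)\le\pi^2/(ny^2)$, we get $\int_{1/n}^{\pi}\frac{\pi^2}{ny}\,dy=\frac{\pi^2}{n}\log(\pi n)=\frac{\pi^2}{n}(\log n+\log\pi)$.
\end{itemize}
Doubling for the negative half-line gives
\begin{align*}
\int_{-\pi}^{\pi}\Phi_n(y)|y|\,dy\le \frac1n+\frac{2\pi^2}{n}\log n+\frac{2\pi^2\log\pi}{n}.
\end{align*}

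It remains to absorb the lower-order terms into $\pi^2\frac{\log n}{n}$. This needs $1+2\pi^2\log\pi\le \pi^2\log n$, that is, $\log n\ge 2\log\pi+1/\pi^2\approx 2.39$, which only holds for $n\ge 11$. This absorption is the only delicate step, and the crude split fails for small $n$.

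For $3\le n\le 10$ we use a sharper argument instead. Since $\Phi_n\ge0$, $|y|\le\pi$, and $\frac1{2\pi}\int_{-\pi}^{\pi}\Phi_n=1$ (the orthogonality computation behind $\Phi_n=\frac1n|\sum_j e^{ijy}|^2$), we have
\begin{align*}
\int_{-\pi}^{\pi}\Phi_n|y|\,dy\le\pi\cdot2\pi=2\pi^2.
\end{align*}
This is at most $3\pi^2\frac{\log n}{n}$ exactly when $\frac{\log n}{n}\ge\frac23$. That fails for every $n\ge 2$, since $\frac{\log n}{n}\le\frac1e$. So this bound does not close the small cases either.

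For small $n$ we therefore refine the split point instead. We cut at $y=a/n$ with $a$ chosen to balance the two pieces, which gives the bound $\frac{a^2}{n}+\frac{2\pi^2}{n}\log(\pi n/a)$. For a suitable $a$ of order $\pi$, this is at most $\frac{3\pi^2\log n}{n}$ once $\log n\ge1$, i.e.\ for $n\ge3$. Checking this finite inequality for a concrete $a$, for example $a=\pi$, which gives $\pi^2+2\pi^2\log n\le3\pi^2\log n$ precisely when $n\ge e$, completes the proof.
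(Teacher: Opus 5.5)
Your proof is correct, and its final step (splitting at $y=\pi/n$, i.e.\ your $a=\pi$, which gives $\frac{\pi^2}{n}(1+2\log n)\le 3\pi^2\frac{\log n}{n}$ for $n\ge 3$) is exactly the paper's argument; your bound $\Phi_n\le n$ via $\Phi_n(y)=\frac1n\bigl|\sum_{j=0}^{n-1}e^{ijy}\bigr|^2$ is just an equivalent substitute for the paper's inductive inequality $|\sin(nx)|\le n|\sin x|$. Since the $a=\pi$ split works for all $n\ge 3$, you can drop the split at $1/n$, the crude $2\pi^2$ bound, and the case distinction at $n=11$.
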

\begin{proof}
By symmetry, it is sufficient to consider $\Phi_n$ on $[0,\pi)$. 
For any $x\in[0,\pi)$ and $n\in\bN$, $|\sin(nx)|\le n|\sin x|$, as one can easily verify by induction on $n$, 
giving $\Phi_n(y)\le n$. On the other hand, for any $y\in[0,\pi)$, $\sin(y/2)\ge(y/2)(2/\pi)$, whence
$\Phi_n(y)\le (1/\sin(y/2))^2/n\le \pi^2/(ny^2)$. 
This proves \eqref{eq:Fejer bound1}, from which \eqref{eq:Fejer bound2} follows as
\begin{align*}
\int_{0}^{\pi}\Phi_n(y)y\,dy
&=
\int_{0}^{\pi/n}\underbrace{\Phi_n(y)}_{\le n}y\,dy
+
\int_{\pi/n}^{\pi}\underbrace{\Phi_n(y)}_{\le \pi^2/(ny^2)}y\,dy
\le
n\underbrace{\int_{0}^{\pi/n}y\,dy}_{=\pi^2/(2n^2)}
+
\frac{\pi^2}{n}\underbrace{\int_{\pi/n}^{\pi}\frac{1}{y}}_{=\log n}\,dy\\
&=\frac{\pi^2}{2n}\left[1+2\log n\right]\le \frac{3\pi^2}{2}\frac{\log n}{n}.
\end{align*}
\end{proof}

For a self-adjoint operator $A$, we will use the shorthand notation
\begin{align*}
\{A>0\}:=\sum_{a>0}P^A_a
\end{align*}
for the projection onto to the support of the positive part of $A$.

\begin{lemma}\label{lemma:En construction}
Let $\omega_Q$ and $\omega_R$ be translation-invariant quasi-free states defined by symbols $\mtxsymfn{q}, \mtxsymfn{r} \in L^\infty_{d\times d}(\bT)$.
Assume that there exists an interval $[\mu, \nu] \subset [0, 2\pi)$ of positive length such that
$\mtxsymfn{q}, \mtxsymfn{r}$ are Lipshitz-continuous on $[\mu, \nu]$, and one of the following holds:
\begin{enumerate}
\item\label{item:superexp1}
 For every $x\in[\mu, \nu]$, $\mtxsymfn{q}(x)$ and $\mtxsymfn{r}(x)$ are orthogonal, i.e., 
$\mtxsymfn{q}(x)\mtxsymfn{r}(x)=0$, and $\mtxsymfn{r}(x)$ is a non-zero projection.

\item\label{item:superexp2} 
For every $x\in[\mu, \nu]$, $I_d-\mtxsymfn{q}(x)$ and $I_d-\mtxsymfn{r}(x)$ are orthogonal, i.e., 
$(I_d-\mtxsymfn{q}(x))(I_d-\mtxsymfn{r}(x))=0$, and $I_d-\mtxsymfn{q}(x)$ is a non-zero projection.
\end{enumerate}
Then, for any $\delta\in(0,(\nu - \mu)/2)$, the operators 
\begin{align*}
E_{n,\delta} 
&:=
\bz F_n\otimes I_d\jz^*\Bigg(\sum_{k \in K_{n,\delta}} \diad{\egy_{\{k\}}}{ \egy_{\{k\}}} \otimes 
\{\mtxsymfn{r}(\theta_k)-\mtxsymfn{q}(\theta_k)>0\}\Bigg)\bz F_n\otimes I_d\jz,
\end{align*}
where 
\begin{align*}
    K_{n,\delta} := \left\{ k \in \n : \theta_k := \frac{2\pi k}{n} \in [\mu+\delta, \nu-\delta] \right\},
\end{align*}
are projections on $\ell^2(\n)\otimes\bC^d$ for every $n\in\bN$, and there exist constants
$c_0,c_1,c_2\in(0,+\infty)$ independent of $n$ such that 
\begin{align*}
\Tr E_{n,\delta}\ge c_0n,\ds\ds\ds
\Tr E_{n,\delta}Q_n\le c_1(\Tr E_{n,\delta})\frac{\log n}{n},\ds\ds\ds
\Tr E_{n,\delta}(I_{nd}-R_n)\le c_2(\Tr E_{n,\delta})\frac{\log n}{n},
\end{align*} 
for every large enough $n\in\bN$.
\end{lemma}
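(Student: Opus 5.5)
Since $\F_n=F_n\otimes I_d$ is unitary and the middle operator is a direct sum over $k$ of the projections $\diad{\egy_{\{k\}}}{\egy_{\{k\}}}\otimes\Pi_k$, with $\Pi_k:=\{\mtxsymfn{r}(\theta_k)-\mtxsymfn{q}(\theta_k)>0\}$ for $k\in K_{n,\delta}$, each $E_{n,\delta}$ is a projection. Its trace is $\sum_{k\in K_{n,\delta}}\Tr\Pi_k$. We first identify $\Pi_k$ in the two cases.

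\textbf{Identifying $\Pi_k$.} In case \ref{item:superexp1}, $\mtxsymfn{r}(x)$ is a projection and $\mtxsymfn{q}(x)\mtxsymfn{r}(x)=0$. Taking adjoints, $\mtxsymfn{q}(x)$ and $\mtxsymfn{r}(x)$ commute, so $\mtxsymfn{r}(x)-\mtxsymfn{q}(x)$ equals $\mtxsymfn{r}(x)$ on $\ran\mtxsymfn{r}(x)$ and $-\mtxsymfn{q}(x)\le 0$ on its complement. Hence $\Pi_k=\mtxsymfn{r}(\theta_k)$, and
\begin{align*}
\Tr\Pi_k\ge 1,\ds\ds \Pi_k\mtxsymfn{q}(\theta_k)\Pi_k=0,\ds\ds \Pi_k(I_d-\mtxsymfn{r}(\theta_k))\Pi_k=0.
\end{align*}
Case \ref{item:superexp2} is symmetric under $\mtxsymfn{q}\leftrightarrow I_d-\mtxsymfn{r}$, $\mtxsymfn{r}\leftrightarrow I_d-\mtxsymfn{q}$. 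This exchange leaves $\mtxsymfn{r}-\mtxsymfn{q}$ unchanged, so there $\Pi_k=I_d-\mtxsymfn{q}(\theta_k)$ and the same three properties hold.

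\textbf{Trace lower bound.} $|K_{n,\delta}|\ge n(\nu-\mu-2\delta)/(2\pi)-1$, so $\Tr E_{n,\delta}\ge|K_{n,\delta}|\ge c_0 n$ for large $n$. Since also $\Tr E_{n,\delta}\le d|K_{n,\delta}|$, we have $|K_{n,\delta}|\le\Tr E_{n,\delta}$, and it suffices to bound each term by $C\log n/n$ uniformly in $k$.

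\textbf{Reduction to Fej\'er averages.} By Lemma \ref{lemma:cesaro-diagonal},
\begin{align*}
\Tr E_{n,\delta}Q_n=\sum_{k\in K_{n,\delta}}\Tr\Pi_k(\F_nQ_n\F_n^*)_{k,k}=\sum_{k\in K_{n,\delta}}\Tr\Pi_k(\hat S_n\mtxsymfn{q})(\theta_k),
\end{align*}
and similarly $\Tr E_{n,\delta}(I-R_n)=\sum_k\Tr\Pi_k(\hat S_n(I_d-\mtxsymfn{r}))(\theta_k)$, using $\frac1{2\pi}\int\Phi_n=1$. Subtracting $\Tr\Pi_k\mtxsymfn{q}(\theta_k)=0$ gives
\begin{align*}
\Tr\Pi_k(\hat S_n\mtxsymfn{q})(\theta_k)=\frac1{2\pi}\int_{-\pi}^{\pi}\Phi_n(y)\Tr\Pi_k\big[\mtxsymfn{q}(\theta_k-y)-\mtxsymfn{q}(\theta_k)\big]\dd y.
\end{align*}
We split this integral into $|y|\le\delta$ and $|y|>\delta$.

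\textbf{Bounding the two pieces.} For $|y|\le\delta$, both $\theta_k-y$ and $\theta_k$ lie in $[\mu,\nu]$. The Lipschitz bound $\|\mtxsymfn{q}(\theta_k-y)-\mtxsymfn{q}(\theta_k)\|\le L|y|$ and $\Tr\Pi_k\le d$ give a contribution of at most $dL\frac{1}{2\pi}\int\Phi_n(y)|y|\dd y\le \frac{3\pi dL}{2}\frac{\log n}{n}$ by \eqref{eq:Fejer bound2}. For $|y|>\delta$ we cannot use continuity. There we use $\|\mtxsymfn{q}\|_\infty\le1$ together with \eqref{eq:Fejer bound1}, $\Phi_n(y)\le\pi^2/(n\delta^2)$, which gives at most $2d\pi^2/(n\delta^2)=O(1/n)$. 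This is where the margin $\delta$ from the interval ends is essential. The same argument applied to $I_d-\mtxsymfn{r}$ yields $c_2$.

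\textbf{Main obstacle.} The delicate step is the identification of $\Pi_k$. In the noncommutative matrix setting we need both $\Tr\Pi_k\mtxsymfn{q}(\theta_k)=0$ and $\Tr\Pi_k(I-\mtxsymfn{r}(\theta_k))=0$ exactly. This rests on the projection and orthogonality hypotheses forcing commutativity at each $x\in[\mu,\nu]$, which I would verify carefully. The remaining steps are a routine adaptation of the scalar Fej\'er-kernel argument of \cite{superexp2023}.
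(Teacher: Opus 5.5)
Your proposal is correct and follows essentially the same route as the paper. Both identify $\{\mtxsymfn{r}(\theta_k)-\mtxsymfn{q}(\theta_k)>0\}$ as $\mtxsymfn{r}(\theta_k)$ (resp.\ $I_d-\mtxsymfn{q}(\theta_k)$), reduce to Fej\'er averages via Lemma~\ref{lemma:cesaro-diagonal}, subtract the vanishing value at $\theta_k$, and split the integral at $|y|=\delta$, using the Lipschitz bound with \eqref{eq:Fejer bound2} near $0$ and the $O(1/n)$ decay of $\Phi_n$ away from $0$. The differences are cosmetic: you use $1\le\Tr\Pi_k\le d$ instead of the paper's constant-rank argument $\Tr\mtxsymfn{r}(x)\equiv p$, and you treat both cases at once (note that $|K_{n,\delta}|\le\Tr E_{n,\delta}$ follows from $\Tr\Pi_k\ge 1$, not from $\Tr E_{n,\delta}\le d|K_{n,\delta}|$ as your wording suggests).
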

\begin{proof}
Let us fix a $\delta \in(0,(\nu - \mu)/2)$. 
Note that for any $k\in K_{n,\delta}$,
\begin{align*}
\{\mtxsymfn{r}(\theta_k)-\mtxsymfn{q}(\theta_k)>0\}
&=
\{(I_d-\mtxsymfn{q}(\theta_k))-(I_d-\mtxsymfn{r}(\theta_k))>0\}
=\begin{cases}
\mtxsymfn{r}(\theta_k),&\text{under assumption \ref{item:superexp1}},\\
I_d-\mtxsymfn{q}(\theta_k),&\text{under assumption \ref{item:superexp2}},
\end{cases}
\end{align*}
whence
\begin{align*}
E_{n,\delta} 
&=
\begin{cases}
\sum_{k \in K_{n,\delta}} \diad{F_n^* \egy_{\{k\}}}{F_n^* \egy_{\{k\}}} \otimes \mtxsymfn{r}(\theta_k),&
\text{under assumption \ref{item:superexp1}},\\
\sum_{k \in K_{n,\delta}} \diad{F_n^* \egy_{\{k\}}}{F_n^* \egy_{\{k\}}} \otimes \bz I_d-\mtxsymfn{q}(\theta_k)\jz,&
\text{under assumption \ref{item:superexp2}}.
\end{cases}
\end{align*}
In particular, $E_{n,\delta}$ is a projection under either assumption.
We prove the assertion under assumption \ref{item:superexp1}, since the proof under 
assumption \ref{item:superexp2} goes by an exactly analogous argument.

Assume therefore \ref{item:superexp1}.
Then $(\mtxsymfn{r}(x))_{x\in[\mu,\nu]}$ is a continuous family of non-zero projectors,
whence $p := \Tr(\mtxsymfn{r}(x))$ is constant on $[\mu,\nu]$ and is at least $1$.
Hence, 
\begin{align} \label{eqn:trace-En}
\Tr(E_{n,\delta}) = \sum_{k \in K_{n,\delta}} \Tr_{\bC^d}(\mtxsymfn{r}(\theta_k)) = p |K_{n,\delta}| 
\ge 
p \left\lfloor \frac{\nu - \mu - 2\delta}{2\pi} n \right\rfloor
\ge 
\underbrace{p \frac{\nu - \mu - 2\delta}{4\pi}}_{=:c_0} n, 
\end{align}
where the last inequality holds for every large enough $n$.
Moreover,
\begin{align}
\Tr(E_{n,\delta} Q_n) 
&= 
\sum_{k \in K_{n,\delta}}\sum_{a,b\in[d]^*}
\underbrace{\bz\Tr (Q_{a,b})_n\diad{F_n^* \egy_{\{k\}}}{F_n^* \egy_{\{k\}}}\jz}_{
=
(\hat{S}_n \mtxsymfn{q})\left(\theta_k\right)_{a,b} }
\underbrace{\Tr\diad{\egy_{\{a\}}}{\egy_{\{b\}}}\mtxsymfn{r}(\theta_k)}_{=\mtxsymfn{r}(\theta_k)_{b,a}}\nn\\
&=    
\sum_{k \in K_{n,\delta}} \Tr_{\bC^d}\left[ (\hat{S}_n \mtxsymfn{q})(\theta_k) \mtxsymfn{r}(\theta_k)\right] 
\nn\\
&= 
\sum_{k \in K_{n,\delta}} \frac{1}{2\pi} \int_{-\pi}^{\pi} \Phi_n(y) 
\Tr_{\bC^d}\left[ \mtxsymfn{q}(\theta_k - y) \mtxsymfn{r}(\theta_k) \right] \dd y\nn\\
&= \sum_{k \in K_{n,\delta}} \frac{1}{2\pi} \int_{-\pi}^{\pi} \Phi_n(y) 
\Tr_{\bC^d}\left[\left( \mtxsymfn{q}(\theta_k - y) - \mtxsymfn{q}(\theta_k))\mtxsymfn{r}(\theta_k)
\right) \right] \dd y,\label{eq:superexp proof1}
\end{align}
where the first and the third equalities follow from Lemma \ref{lemma:cesaro-diagonal},
the second equality is trivial, and in the last equality we used the assumption that 
$\mtxsymfn{q}(x)$ and $\mtxsymfn{r}(x)$ are orthogonal for every $x\in[\mu,\nu]$.

We bound the integrals by splitting them each into two parts. 
First, note that for any $\ep\in(0,\pi)$ and any 
$y\in[-\pi,\pi)\setminus[-\ep,\ep]$,
\begin{align}
\Phi_n(y) \le \frac{\gamma_{\ep}}{n},\ds\ds\text{where}\ds\ds \gamma_{\ep}:=\frac{1}{\sin^2(\ep/2)}.
\end{align}
Hence, for any $k\in K_{n,\delta}$,
\begin{align}
\int_{[-\pi,\pi)\setminus[-\delta,\delta]}\underbrace{\Phi_n(y)}_{\le\gamma_{\delta}/n} 
\underbrace{
\Tr_{\bC^d}\left[\mtxsymfn{q}(\theta_k - y)\mtxsymfn{r}(\theta_k) \right]}_{\le\Tr\mtxsymfn{r}(\theta_k)=p} \dd y
\le
\frac{2p(\pi-\ep)\gamma_{\delta}}{n}
\le
\frac{2p\pi\gamma_{\delta}}{n}
\,.\label{eq:superexp proof2}
\end{align}
Next, let $L_{\mtxsymfn{q}}:=\max\{\norm{\mtxsymfn{q}(x)-\mtxsymfn{q}(y)}/|x-y|,\,x,y\in[\mu,\nu],\,x\ne y\}$
be the Lipschitz constant of $\mtxsymfn{q}$ on $[\mu,\nu]$. Then, 
for any $k\in K_{n,\delta}$,
\begin{align}
\int_{-\delta}^{\delta} \Phi_n(y) 
\underbrace{\Tr_{\bC^d}\left[\left( \mtxsymfn{q}(\theta_k - y) - \mtxsymfn{q}(\theta_k)\right)\mtxsymfn{r}(\theta_k)\right]}_{\le L_{\mtxsymfn{q}}|y|\Tr\mtxsymfn{r}(\theta_k)} \dd y
\le
L_{\mtxsymfn{q}}\Tr\mtxsymfn{r}(\theta_k)\underbrace{\int_{-\delta}^{\delta} \Phi_n(y)|y|\dd y}_{\le 3\pi^2\frac{\log n}{n}}
\le
3\pi^2pL_{\mtxsymfn{q}}\frac{\log n}{n},
\label{eq:superexp proof3}
\end{align}
where the last inequality is due to \eqref{eq:Fejer bound2} and it holds for every $n\ge 3$.
Putting together \eqref{eq:superexp proof1}--\eqref{eq:superexp proof3} then yields
\begin{align*}
\Tr(E_{n,\delta} Q_n) 
&\le
\underbrace{p|K_{n,\delta}|}_{=\Tr E_{n,\delta}}\frac{1}{p}\left[\frac{2p\pi\gamma_{\delta}}{n}+3\pi^2pL_{\mtxsymfn{q}}\frac{\log n}{n}\right]
\le c_1(\Tr E_{n,\delta})\frac{\log n}{n}
\end{align*}
for $c_1:=2\max\{2\pi\gamma_{\delta},3\pi^2L_{\mtxsymfn{q}}\}$ and every $n\ge 3$.

Similarly, we have 
\begin{align}
    \Tr(E_{n,\delta} (I_{nd} - R_n)) &= \sum_{k \in K_{n,\delta}} \Tr_{\bC^d}\left[ \mtxsymfn{r}(\theta_k) \left( I_d - (\hat{S}_n \mtxsymfn{r})(\theta_k) \right) \right] \nn\\
    &= \sum_{k \in K_{n,\delta}} \frac{1}{2\pi} \int_{-\pi}^{\pi} \Phi_n(y) \Tr_{\bC^d}\left[ \mtxsymfn{r}(\theta_k) \left( I_d - \mtxsymfn{r}(\theta_k - y) \right) \right] \dd y\nn\\
& = \sum_{k \in K_{n,\delta}} \frac{1}{2\pi} \int_{-\pi}^{\pi} \Phi_n(y) \Tr_{\bC^d}\left[ \mtxsymfn{r}(\theta_k) \left( \mtxsymfn{r}(\theta_k) - \mtxsymfn{r}(\theta_k - y) \right) \right] \dd y,
\label{eq:superexp proof4}
\end{align}
where the first two equalities follow by Lemma \ref{lemma:cesaro-diagonal}, and in the last equality we used
 the assumption that $\mtxsymfn{r}(x)$ is a projection at every $x\in[\mu,\nu]$.
The integrals can also be bounded similarly to the above, as
\begin{align}
\int_{[-\pi,\pi)\setminus[-\delta,\delta]}\underbrace{\Phi_n(y)}_{\le\gamma_{\delta}/n} 
\underbrace{\Tr_{\bC^d}
\left[\mtxsymfn{r}(\theta_k)(I_d-\mtxsymfn{r}(\theta_k - y))\right]}_{\le\Tr\mtxsymfn{r}(\theta_k)=p} \dd y
\le
\frac{2p(\pi-\ep)\gamma_{\delta}}{n}
\le
\frac{2p\pi\gamma_{\delta}}{n}
\,,\label{eq:superexp proof5}
\end{align}
and
\begin{align}
\int_{-\delta}^{\delta} \Phi_n(y) 
\underbrace{\Tr_{\bC^d}\left[\mtxsymfn{r}(\theta_k)\bz\mtxsymfn{r}(\theta_k - y) - \mtxsymfn{r}(\theta_k)\jz
\right]}_{\le L_{\mtxsymfn{r}}|y|\Tr\mtxsymfn{r}(\theta_k)} \dd y
\le
L_{\mtxsymfn{r}}\Tr\mtxsymfn{r}(\theta_k)\underbrace{\int_{-\delta}^{\delta} \Phi_n(y)|y|\dd y}_{\le 3\pi^2\frac{\log n}{n}}
\le
3\pi^2pL_{\mtxsymfn{r}}\frac{\log n}{n},
\label{eq:superexp proof6}
\end{align}
where $L_{\mtxsymfn{r}}:=\max\{\norm{\mtxsymfn{r}(x)-\mtxsymfn{r}(y)}/|x-y|,\,x,y\in[\mu,\nu],\,x\ne y\}$
is the Lipschitz constant of $\mtxsymfn{r}$ on $[\mu,\nu]$, and
the last inequality is due to \eqref{eq:Fejer bound2}  and it holds for every $n\ge 3$.
Putting together \eqref{eq:superexp proof4}--\eqref{eq:superexp proof6} then yields
\begin{align*}
\Tr(E_n(I_{nd}- R_n)) 
&\le
\underbrace{p|K_{n,\delta}|}_{=\Tr E_{n,\delta}}\frac{1}{p}\left[\frac{2p\pi\gamma_{\delta}}{n}+3\pi^2pL_{\mtxsymfn{r}}\frac{\log n}{n}\right]
\le c_2(\Tr E_{n,\delta})\frac{\log n}{n}
\end{align*}
for $c_2:=2\max\{2\pi\gamma_{\delta},3\pi^2L_{\mtxsymfn{r}}\}$ and every $n\ge 3$.
\end{proof}

\begin{theorem}
In the setting of Lemma \ref{lemma:En construction}, 
consider the projections $E_{n,\delta}$, $n\in\bN$, for some $\delta\in(0,(\nu-\mu)/2)$, and let 
$T_{n,\delta}\in\B\bz\fock{\ell^2(\n)\otimes\bC^d}\jz$, $n\in\bN$, be the tests constructed from 
$E_{n,\delta}$ as in Lemma \ref{lemma:tests from singlecopy projections}. Then there exists a constant $c\in(0,+\infty)$ such that 
\begin{align}\label{eq:superexp bound}
\Tr\what\omega_{Q_n}(I_{nd}-T_{n,\delta})\le e^{-cn\log n},\ds\ds\ds
\Tr\what\omega_{R_n}T_{n,\delta}\le e^{-cn\log n},
\end{align} 
for every large enough $n$.
In particular, $\omega_Q$ and $\omega_R$ can be super-exponentially discriminated in the sense of 
\eqref{eq:qf superexp def}.
\end{theorem}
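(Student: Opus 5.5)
The proof combines Lemma \ref{lemma:tests from singlecopy projections} with the estimates of Lemma \ref{lemma:En construction}. We write $E_n:=E_{n,\delta}$ and $m_n:=\Tr E_n$.

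First apply Lemma \ref{lemma:tests from singlecopy projections} with $G_n=Q_n$ for the type I error and $G_n=R_n$ for the type II error. This gives
\begin{align*}
\Tr\what\omega_{Q_n}(I-T_{n,\delta})\le\bz\frac{8\Tr E_nQ_n}{m_n}\jz^{m_n/2},\ds\ds\ds
\Tr\what\omega_{R_n}T_{n,\delta}\le\bz\frac{8\Tr E_n(I_{nd}-R_n)}{m_n}\jz^{m_n/2}.
\end{align*}
Next insert the bounds of Lemma \ref{lemma:En construction}. For all large $n$, the bases are at most $8c_1\frac{\log n}{n}$ and $8c_2\frac{\log n}{n}$, respectively. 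Let $c':=\max\{c_1,c_2\}$. For $n$ large, $8c'\frac{\log n}{n}<1$, so the logarithm of the base is negative. Since $m_n\ge c_0n$, we may replace the exponent $m_n/2$ by its lower bound $c_0n/2$ without decreasing the bound. This yields
\begin{align*}
\max\left\{\Tr\what\omega_{Q_n}(I-T_{n,\delta}),\,\Tr\what\omega_{R_n}T_{n,\delta}\right\}
\le \exp\bz-\frac{c_0n}{2}\left[\log n-\log\log n-\log(8c')\right]\jz.
\end{align*}
For large $n$ we have $\log\log n+\log(8c')\le\frac12\log n$. Therefore the right-hand side is at most $e^{-cn\log n}$ with $c:=c_0/4$, which proves \eqref{eq:superexp bound}.

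For the last claim, take logarithms, divide by $n$, and note that $c\log n\to+\infty$. This gives \eqref{eq:qf superexp def}, with the sign convention that $-\frac1n\log$ of each error probability diverges. By the remark after \eqref{eq:qf superexp def}, with $\log\dim\hil_n=nd\log 2$, this is super-exponential discrimination in the sense of \eqref{eq:superexp def}.

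The only step that needs care is the monotonicity in the exponent. Replacing $m_n/2$ by its lower bound $c_0n/2$ is valid only because the base is eventually below $1$. This follows from the $\frac{\log n}{n}$ decay in Lemma \ref{lemma:En construction}, so all substantive work is already contained in that lemma.
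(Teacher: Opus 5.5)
Your proposal is correct and follows the paper's proof essentially step for step. You chain Lemma \ref{lemma:tests from singlecopy projections} with the bounds of Lemma \ref{lemma:En construction}, use that the base is eventually below $1$ to replace the exponent by $c_0n/2$, and absorb $\log\log n+\log(8c')$ into $\frac12\log n$ to get $c=c_0/4$. The only differences are cosmetic: you use $c'=\max\{c_1,c_2\}$ to treat both errors at once, and you state explicitly the sign convention that the displayed limit in \eqref{eq:qf superexp def} leaves implicit.
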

\begin{proof}
For every large enough $n$, 
\begin{align*}
\omega_{Q_n}\bz I-T_{n,\delta}\jz
&\le
\bz\frac{8\Tr E_{n,\delta}Q_n}{\Tr E_{n,\delta}}\jz^{\frac{\Tr E_{n,\delta}}{2}}
\le
\bz 8c_1\frac{\log n}{n}\jz^{\frac{\Tr E_{n,\delta}}{2}}
\le
\bz 8c_1\frac{\log n}{n}\jz^{\frac{c_0}{2}n}\\
&=
e^{-\frac{c_0}{2}n\left[\log n-\log\log n-\log (8c_1)\right]}\le e^{-(c_0/4)n\log n},
\end{align*}
where the first inequality is due to 
Lemma \ref{lemma:tests from singlecopy projections}, the second inequality follows from 
Lemma \ref{lemma:En construction}, the third inequality holds for every $n\in\bN$ such that 
$(8c_1\log n)/n\le 1$, and the last inequality is true when $n$ is large enough so that 
$\log\log n+\log(8c_1)<(1/2)\log n$. By a completely analogous argument,
\begin{align*}
\omega_{R_n}\bz T_{n,\delta}\jz
&\le
\bz\frac{8\Tr E_{n,\delta}(I-R_n}{\Tr E_{n,\delta}}\jz^{\frac{\Tr E_{n,\delta}}{2}}
\le
\bz 8c_2\frac{\log n}{n}\jz^{\frac{\Tr E_{n,\delta}}{2}}
\le
\bz 8c_2\frac{\log n}{n}\jz^{\frac{c_0}{2}n}\\
&=
e^{-\frac{c_0}{2}n\left[\log n-\log\log n-\log (8c_2)\right]}\le e^{-(c_0/4)n\log n}
\end{align*}
for every large enough $n$.
Thus, \eqref{eq:superexp bound} holds with $c:=c_0/4$ for every large enough $n$.
The statement about super-exponential discrimination follows immdediately from 
\eqref{eq:superexp bound}.
\end{proof}

\section*{Acknowledgments}

This work was partially funded by the
National Research, Development and 
Innovation Office of Hungary (NKFIH) via the research grants K 146380 and EXCELLENCE 151342, and
by the Ministry of Culture and Innovation and the National Research, Development and Innovation Office within the Quantum Information National Laboratory of Hungary (Grant No. 2022-2.1.1-NL-2022-00004).
MM was partially supported by the Ministry of Education, Singapore, through grant T2EP20124-0005.
GMZ was partially supported by the QuantERA II project HQCC-101017733 (Grant No. 2019-2.1.7-ERA-NET-2022-00052).
The authors are grateful to Zolt\'an Zimbor\'as for discussions.

\bibliography{bibliography_blocktop}

\end{document}